\documentclass{tlp}

\usepackage[utf8]{inputenc}
\usepackage{amsmath}
\usepackage{amssymb}
\usepackage{microtype}
\usepackage{url}
\usepackage{xcolor}
\usepackage{longtable}
\usepackage{ifthen}
\usepackage{multicol}
\usepackage{float}
\usepackage{subcaption}
\usepackage{graphicx}
\usepackage{cancel}

\newcommand{\tuple}[1]{\ensuremath{\langle #1 \rangle}}
\newcommand{\Htrace}{\ensuremath{\mathbf{H}}}
\newcommand{\Ttrace}{\ensuremath{\mathbf{T}}}
\newcommand{\M}{\ensuremath{\mathbf{M}}}

 \RequirePackage{bm}
\RequirePackage{textcomp}
\RequirePackage{upgreek}

\RequirePackage{xcolor}
\makeatletter

\input pdf-trans
\newbox\qbox
\def\usecolor#1{\csname\string\color@#1\endcsname\space}
\newcommand\bordercolor[1]{\colsplit{1}{#1}}
\newcommand\fillcolor[1]{\colsplit{0}{#1}}
\newcommand\outline[1]{\leavevmode%
  \def\maltext{#1}%
  \setbox\qbox=\hbox{\maltext}%
  \boxgs{Q q 2 Tr \thickness\space w \fillcol\space \bordercol\space}{}%
  \copy\qbox%
}
\makeatother
\newcommand\colsplit[2]{\colorlet{tmpcolor}{#2}\edef\tmp{\usecolor{tmpcolor}}%
  \def\tmpB{}\expandafter\colsplithelp\tmp\relax%
  \ifnum0=#1\relax\edef\fillcol{\tmpB}\else\edef\bordercol{\tmpC}\fi}
\def\colsplithelp#1#2 #3\relax{%
  \edef\tmpB{\tmpB#1#2 }%
  \ifnum `#1>`9\relax\def\tmpC{#3}\else\colsplithelp#3\relax\fi
}
\bordercolor{black}
\fillcolor{white}
\def\thickness{.3}

\newcommand{\next}{\text{\rm \raisebox{-.5pt}{\Large\textopenbullet}}}  \newcommand{\previous}{\text{\rm \raisebox{-.5pt}{\Large\textbullet}}}  \newcommand{\wnext}{\ensuremath{\widehat{\next}}}
\newcommand{\wprevious}{\ensuremath{\widehat{\previous}}}
\newcommand{\alwaysF}{\ensuremath{\square}}
\newcommand{\alwaysP}{\ensuremath{\blacksquare}}
\newcommand{\eventuallyF}{\ensuremath{\Diamond}}
\newcommand{\eventuallyP}{\ensuremath{\blacklozenge}}
\newcommand{\until}{\ensuremath{\mathbin{\mbox{\outline{$\bm{\mathsf{U}}$}}}}}
\newcommand{\release}{\ensuremath{\mathbin{\mbox{\outline{$\bm{\mathsf{R}}$}}}}}
\newcommand{\since}{\ensuremath{\mathbin{\bm{\mathsf{S}}}}}
\newcommand{\trigger}{\ensuremath{\mathbin{\bm{\mathsf{T}}}}}
\newcommand{\finally}{\ensuremath{\mbox{\outline{$\bm{\mathsf{F}}$}}}}
\newcommand{\initially}{\ensuremath{\bm{\mathsf{I}}}}

\mathchardef\mhyphen="2D

\newcommand{\intervcc}[2]{\ensuremath{[#1..#2]}}
\newcommand{\intervco}[2]{\ensuremath{[#1..#2)}}
\newcommand{\intervoc}[2]{\ensuremath{(#1..#2]}}
\newcommand{\intervoo}[2]{\ensuremath{(#1..#2)}}
\newcommand{\rangecc}[3]{\ensuremath{#1 \in \intervcc{#2}{#3}}}
\newcommand{\rangeco}[3]{\ensuremath{#1 \in \intervco{#2}{#3}}}
\newcommand{\rangeoc}[3]{\ensuremath{#1 \in \intervoc{#2}{#3}}}

\newcommand{\cl}[1]{\ensuremath{\mathit{cl}({#1})}}

 \providecommand{\logfont}{\textrm}

\newcommand{\HT}{\ensuremath{\logfont{HT}}}

\newcommand{\LTL}{\ensuremath{\logfont{LTL}}}

\newcommand{\THT}{\ensuremath{\logfont{THT}}}

\newcommand{\THTo}{\ensuremath{\THT_{\!\omega}}}
\newcommand{\TEL}{\ensuremath{\logfont{TEL}}}

\newcommand{\DEL}{\ensuremath{\logfont{DEL}}}

\newcommand{\DL}{\ensuremath{\logfont{DL}}}

\newcommand{\MTL}{\ensuremath{\logfont{MTL}}}

\newcommand{\MHT}{\ensuremath{\logfont{MHT}}}
\newcommand{\MHTf}{\ensuremath{\MHT_{\!f}}}
\newcommand{\MHTo}{\ensuremath{\MHT_{\!\omega}}}
\newcommand{\MEL}{\ensuremath{\logfont{MEL}}}
\newcommand{\MELf}{\ensuremath{\MEL_{\!f}}}
\newcommand{\MELo}{\ensuremath{\MEL_{\omega}}}

 \providecommand{\sysfont}{\textit}

\newcommand{\clingo}{\sysfont{clingo}}

\newcommand{\telingo}{\sysfont{telingo}}

\newcommand{\eqdef}{\ensuremath{\mathbin{\raisebox{-1pt}[-3pt][0pt]{$\stackrel{\mathit{def}}{=}$}}}}

\newcommand{\myatom}{\ensuremath{p}}

\newcommand{\PV}{\ensuremath{\mathcal{A}}}

\def\cI{I}
\def\cJ{J}

\newcommand{\metricI}[1]{\ensuremath{#1_{\cI}}}
\newcommand{\metric}[3]{\ensuremath{#1_{
\ifthenelse{\equal{#2}{#3}}
{#2}
{
\ifthenelse{\equal{#2}{0}}
{
\ifthenelse{\equal{#3}{\omega}}
{}
{\leq#3}
				}
{
\ifthenelse{\equal{#3}{\omega}}
{\geq#2}
{\intervco{#2}{#3}}
				}
}}}}

\newcommand{\tmf}{\ensuremath{\tau}}

\newenvironment{proofof}[1]{\noindent {\bf Proof of #1.}}{\bigskip}

\newcommand{\metricol}[3]{\ensuremath{#1_{
\ifthenelse{\equal{#2}{#3}}
{#2}
{
\ifthenelse{\equal{#2}{0}}
{
\ifthenelse{\equal{#3}{\omega}}
{}
{\leq#3}
      }
{
\ifthenelse{\equal{#3}{\omega}}
{\geq#2}
{\intervo{#2}{#3}}
      }
    }}}}

\newcommand{\trivaluation}[3]{\ensuremath{\bm{#3}(#1,#2)}}
\newcommand{\trival}[2]{\trivaluation{#1}{#2}{m}}
\newcommand{\trivalp}[2]{{\bm{m'}}(#1,#2)}

\newcommand{\Lab}[1]{\ensuremath{{\ell_{#1}}}}

\newcommand{\dist}{\ensuremath{\mathit{u}}} \renewcommand{\H}{\ensuremath{\Htrace}}
\newcommand{\T}{\ensuremath{\Ttrace}}

\newcommand{\press}{\mathit{p}}
\newcommand{\moving}{\mathit{m}}
\newcommand{\arrives}{\mathit{a}}
\newcommand{\myex}{\press \to \moving \metric{\until}{0}{3}\arrives}

\newcommand{\mytag}[1]{\ensuremath{\;\mathbf{#1}}}

\def\squareforqed{\hbox{\rlap{$\sqcap$}$\sqcup$}}
\def\qed{\ifmmode\squareforqed\else{\unskip\nobreak\hfil
\penalty50\hskip1em\null\nobreak\hfil\squareforqed
\parfillskip=0pt\finalhyphendemerits=0\endgraf}\fi}

\lefttitle{The authors.}

\jnlPage{1}{8}
\jnlDoiYr{2021}
\doival{10.1017/xxxxx}

\newtheorem{definition}{Definition}
\newtheorem{theorem}{Theorem}
\newtheorem{proposition}{Proposition}
\newtheorem{corollary}{Corollary}
\newtheorem{lemma}{Lemma}

\makeatletter
\begin{document}
\title[Reducing Arbitrary Metric Temporal Formulas into Logic Programs]{Reducing Arbitrary Metric Temporal Formulas into Logic Programs under Answer Set Semantics}

\begin{authgrp}
  \author{\sn{Mart\'{\i}n} \gn{Di\'eguez}}
  \affiliation{University of Angers, France}
  \author{\sn{Susana} \gn{Hahn}}
  \affiliation{University of Potsdam, Germany}\affiliation{Potassco Solutions, Germany}
  \author{\sn{Torsten} \gn{Schaub}}
  \affiliation{University of Potsdam, Germany}\affiliation{Potassco Solutions, Germany}
  \author{\sn{Igor} \gn{St\'ephan}}
  \affiliation{University of Angers, France}
\end{authgrp}

\maketitle

\begin{abstract}
  Metric temporal equilibrium logic (\MEL) extends temporal equilibrium logic (\TEL)
  by incorporating quantitative timing constraints,
  enabling the specification and analysis of deadlines and durations.
  \MEL\ is particularly suited for domains where time-bound properties are crucial,
  such as embedded systems, cyber-physical systems, and real-time software.
  It facilitates the precise expression of timing behaviors,
  such as the requirement that an event must occur within 5 milliseconds of a trigger,
  which often elude traditional qualitative temporal logics.

  In this paper,
  we present a Tseitin-like translation that maps any metric temporal formula into
  a logic programming fragment restricted to past operators.
  This translation provides a formal bridge to leverage existing Answer Set Programming (ASP) solvers
  for reasoning about metric temporal constraints.
  By restricting the target fragment to past operators,
  we enable more effective evaluation and integration with current ASP-based toolchains for multi-shot solving.
\end{abstract}
 \section{Introduction}\label{sec:introduction}

Reasoning about actions and change, or more generally about dynamic systems,
is not only central to knowledge representation and reasoning
but at the heart of computer science~\citep{TIMEHandbook}.
In practice, this kind of reasoning often requires both qualitative as well as quantitative dynamic constraints.
For instance, when planning and scheduling at once,
actions may have durations, and their effects may need to meet deadlines.
On the other hand, any flexible formalism for actions and change must incorporate some form of non-monotonic reasoning to deal with inertia and other types of defaults.

Over the past years, we addressed qualitative dynamic constraints by combining traditional approaches,
like Dynamic and Linear Temporal Logic (\DL~\citep{hatiko00a} and \LTL~\citep{pnueli77a}),
with the base logic of Answer Set Programming (ASP~\citep{lifschitz99b}), namely,
the logic of Here-and-There (\HT~\citep{heyting30a}) and its non-monotonic extension, called Equilibrium Logic~\citep{pearce96a}. This resulted in (the non-monotonic formalisms) linear dynamic and temporal equilibrium logics
(\DEL~\citep{bocadisc18a} and \TEL~\citep{agcadipescscvi20a})
that gave rise to the temporal ASP system \telingo~\citep{cakamosc19a} extending the ASP system
\clingo~\citep{gekakaosscwa16a}.

A commonality of such dynamic and temporal logics is that
they abstract from specific time points when capturing temporal relationships.
For instance, in temporal logic, we can use the formula
\(
\alwaysF ( \mathit{press} \to \mathit{moving}\until \mathit{arrives})
\)
to express that always after pressing the button, an elevator is moving until it arrives at the corresponding floor.
However,
nothing can be said about how long the elevator is moving until it arrives.

A key design decision was to base both aforementioned logics, \TEL\ and \DEL, on the same linear-time semantics.
\cite{becadiscsc24a} introduce a variant of \TEL\ that
associates each state with a discrete time point.
This variant, called \emph{Metric Temporal Equilibrium Logic} (\MEL),
is defined in terms of a monotonic logic called  \emph{Metric Temporal logic of Here-and-There} (\MHT),
a metric temporal extension of the intermediate logic of Here-and-There, mentioned above,
plus a criterion for selecting minimal \MHT\ models called (metric) equilibrium models.
We obtain in this way a non-monotonic entailment relation.
For instance, in our example, we may thus express that whenever the button is pressed,
the elevator will arrive at its destination  in $0$ to $3$ seconds, by writing
\begin{equation}
  \alwaysF (\mathit{press} \to \mathit{moving} \metric{\until}{0}{3}\mathit{arrives}). \label{ex:press:moving}
\end{equation}
Unlike the non-metric version,
this stipulates that
once $\mathit{press}$ is true in a state,
$\mathit{arrives}$ must be true in some future state
whose associated time is at most $3$ time units after the time of $\mathit{press}$, and $\mathit{moving}$ is true in all states in between.

An important step towards an implementation consists in
reducing arbitrary input theories to normal forms \footnote{We use the term \emph{normal form} here because many different subclasses of \MEL\ can be reduced to this form and, in addition, it enables the design of more efficient solving techniques focused solely on theories that satisfy our logic programming format. In propositional logic, conjunctive and disjunctive normal forms are the most common ones in classical logic. In ASP, however, the reduction of arbitrary theories results in disjunctive logic programs~\citep{pearce06a}.} closer to logic programs.
For instance,
\cite{becadiharosc24a,becadirohasc25a} propose a subclass of \MEL\ that can be encoded in propositional ASP.
This translation does not consider binary metric temporal operators such as \emph{metric until} or \emph{metric release}.
Moreover,
the resulting logic program may contain pure future atoms in the rule heads
and that leads to the introduction of future dependencies.
It would be interesting to avoid such future dependendencies since,
so and incremental solvers are more efficient when derivations of the current state can be directly done from information
obtained at previous states (and already computed by the solver) rather than depending on ``future'' states that have not 
been still considered~\citep{gabbay87a}. 

In this paper,
we introduce a Tseitin-like reduction from any arbitrary metric temporal formulas into
metric temporal logic programs.
Our translation generates a specific type of metric temporal logic program in which future operators affect only to implications. However, both the body and the head of each implication contain only past and present references.
The advantage of this format lies in the use of incremental solving techniques already implemented in solvers such as \telingo{}~\citep{cakamosc19a}. In incremental solving, the use of past temporal operators in the body of the rules implies that these bodies can be readily evaluated at each iteration, since their truth values have already been determined in previous steps.
Our translation requires that the intervals associated with binary temporal operators in the input formulas be of the form $\intervcc{0}{n}$, where $n < \omega$.

The rest of the paper is organized as follows.
In Section~\ref{sec:approach}, we present the syntax and semantics of \MHT, the monotonic basis of \MEL.
In Section~\ref{sec:three-valued}, we present an alternative (but equivalent) three-valued semantics
that is used to prove the correctness of our Tseitin-like translation,
which is itself presented in Section~\ref{sec:translation}.
We finish the paper with the conclusions and future work.

 \section{Metric Logic of Here-and-There}\label{sec:approach}

We begin by introducing the logic of Metric Here-and-There (\MHT),
a metric extension of \HT\ which serves as the monotonic foundation for Metric Equilibrium Logic (\MEL).

Given $m \in \mathbb{N}$ and $n \in \mathbb{N} \cup \{\omega\}$, we let
\intervcc{m}{n} stand for the set $\{i \in \mathbb{N} \mid m \leq i \leq n\}$,
\intervco{m}{n}       for         $\{i \in \mathbb{N} \mid m \leq i < n\}$,
\intervoc{m}{n}       for         $\{i \in \mathbb{N} \mid m < i \leq n\}$ and
\intervoo{m}{n} stand for $\{i \in \mathbb{N} \mid m < i < n\}$.
We use letters $\cI, \cJ$ to denote intervals and, since they stand for sets, we assume standard set relations like inclusion $\cI \subseteq \cJ$ or membership $i \in I$.

Given a set \PV\ of propositional variables (called \emph{alphabet}),
a \emph{metric formula} $\varphi$ is defined by the grammar:
\begin{eqnarray*}
\varphi &::=& \myatom \mid \bot \mid \varphi_1 \otimes \varphi_2 \mid
\metricI{\previous}\varphi \mid
\varphi_1 \metricI{\since} \varphi_2 \mid
\varphi_1 \metricI{\trigger} \varphi_2 \mid
\metricI{\next} \varphi \mid
\varphi_1 \metricI{\until} \varphi_2 \mid
\varphi_1 \metricI{\release} \varphi_2\mid \\
&&\previous \varphi \mid
\varphi_1 \since \varphi_2 \mid
\varphi_1 \trigger \varphi_2 \mid
\next \varphi \mid
\varphi_1 \until \varphi_2 \mid
\varphi_1 \release \varphi_2
\end{eqnarray*}
where $\myatom \in\PV$ is an atom, $\otimes$ is any binary Boolean connective $\otimes \in \{\to,\wedge,\vee\}$.
As usual in intuitionistic logic, we define
$\neg \varphi \eqdef \varphi \to \bot$,
$\varphi \leftrightarrow \psi \eqdef \left(\varphi \to \psi\right)\wedge \left(\psi \to \varphi\right)$ and $\top \eqdef \neg \bot$.

The binary temporal operators
$\varphi_1 \metricI{\since} \varphi_2$
and
$\varphi_1 \metricI{\trigger} \varphi_2$
denote that
$\varphi_1$ holds since or is triggered by $\varphi_2$, respectively.
Similarly,
$\varphi_1 \metricI{\until} \varphi_2$
represents the until operator, while
$\varphi_1 \metricI{\release} \varphi_2$
signifies that
$\varphi_2$ is released by $\varphi_1$.
The non-metric counterparts of these operators share the same intuitive semantics
but omit explicit reference to the interval $I$.
The unary operators
$\metricI{\next} \varphi$
and
$\metricI{\previous}\varphi$
(along with their non-metric versions $\next \varphi$ and $\previous \varphi$)
indicate that $\varphi$ holds in the next or previous state, respectively.
We distinguish between metric and non-metric variants based on the presence of interval constraints:
in the metric case,
satisfaction is contingent upon conditions being met within the intervals augmenting the operators,
whereas non-metric operators impose no such quantitative bounds.
Additionally, the following derived operators can be defined:
\[
\begin{array}{rcll}
	\metricI{\alwaysP} \varphi  & \eqdef & \bot \metricI{\trigger} \varphi                                & \text{\emph{always before}} \\
	\metricI{\eventuallyP} \varphi  & \eqdef & \top \metricI{\since} \varphi                                  & \text{\emph{eventually before}} \\
	\initially\,& \eqdef & \neg \metric{\previous}{0}{\omega} \top                        & \text{\emph{initial}}\\
	\metricI{\wprevious} \varphi  & \eqdef & \metricI{\previous} \top \to \metricI{\previous} \varphi& \text{\emph{weak previous}}
\end{array}
\quad
\begin{array}{rcll}
	\metricI{\alwaysF} \varphi  & \eqdef & \bot \metricI{\release} \varphi                                & \text{\emph{always afterward}}\\
	\metricI{\eventuallyF} \varphi  & \eqdef & \top \metricI{\until} \varphi                                  & \text{\emph{eventually afterward}}\\
	\finally  & \eqdef & \neg \metric{\next}{0}{\omega} \top                            & \text{\emph{final}}\\
	\metricI{\wnext} \varphi  & \eqdef &  \metricI{\next} \top \to \metricI{\next} \varphi        & \text{\emph{weak next}}
\end{array}
\]
\paragraph{Semantics.}
A \emph{Here-and-There trace} (for short \emph{\HT-trace}) of length $\lambda \in \mathbb{N} \cup \{\omega\}$ over alphabet \PV\ is a sequence of pairs
\(
(\tuple{H_i,T_i})_{\rangeco{i}{0}{\lambda}}
\)
with $H_i\subseteq T_i\subseteq\PV$ for any $\rangeco{i}{0}{\lambda}$.
For convenience, we often represent an \HT-trace as the pair $\tuple{\H,\T}$ of sequences
$\H = (H_i)_{\rangeco{i}{0}{\lambda}}$ and $\T = (T_i)_{\rangeco{i}{0}{\lambda}}$.
Notice that, when $\lambda=\omega$, this covers traces of infinite length.
We say that $\tuple{\H,\T}$ is \emph{total} whenever $\H=\T$, that is, $H_i=T_i$ for all $\rangeco{i}{0}{\lambda}$.

\begin{definition}[\citealp{becadiscsc24a}]\label{def:timed:trace}
	A {timed} \HT-trace $(\tuple{\H,\T},\tmf)$ of length $\lambda$ over $(\mathbb{N},<)$ and alphabet $\mathcal{A}$ is a pair consisting\nolinebreak[3] of
	\begin{itemize}
		\item an \HT-trace $\tuple{\H,\T} $ of length $\lambda$ over $\mathcal{A}$ and
		\item a function $\tmf: \intervco{0}{\lambda} \to \mathbb{N}$
		such that $\tmf(i)\leq \tmf(i{+}1)$.
	\end{itemize}
	A timed \HT-trace of length $\lambda > 1$ is called \emph{strict} if $\tmf(i)<\tmf(i{+}1)$
	for all $i{+}1 \in \intervco{0}{\lambda}$ such that $i+1 < \lambda$ and \emph{non-strict} otherwise.
	We assume w.l.o.g.\ that $\tmf(0)=0$. \qed
\end{definition}
Function \tmf\ assigns to each state index $i \in \intervco{0}{\lambda}$ a time point $\tmf(i) \in \mathbb{N}$
representing the number of time units (seconds, milliseconds, etc, depending on the chosen granularity)
elapsed since time point $\tmf(0)=0$, chosen as the beginning of the trace.
Given any timed \HT-trace,
satisfaction of formulas is defined as follows.
\begin{definition}[\MHT-satisfaction; \citealp{becadiscsc24a}]\label{def:mht:satisfaction}A timed \HT-trace $\M=(\tuple{\H,\T}, \tmf)$
	of length $\lambda$ over alphabet \PV\
	\emph{satisfies} a metric formula $\varphi$ over \PV\ at step $\rangeco{k}{0}{\lambda}$,
	written \mbox{$\M,k \models \varphi$}, if the following conditions hold:
	\begin{enumerate}
		\item $\M,k \not\models \bot$
		\item $\M,k \models \myatom$ if $\myatom \in H_k$ for any atom $\myatom \in \PV$
		\item \label{def:mhtsat:and} $\M, k \models \varphi \wedge \psi$
		iff
		$\M, k \models \varphi$
		and
		$\M, k \models \psi$
		\item \label{def:mhtsat:or} $\M, k \models \varphi \vee \psi$
		iff
		$\M, k \models \varphi$
		or
		$\M, k \models \psi$
		\item $\M, k \models \varphi \to \psi$
		iff
		$\M', k \not \models \varphi$
		or
		$\M', k \models  \psi$, for both $\M'=\M$ and $\M'=(\tuple{\T,\T}, \tmf)$
		\item \label{def:mhtsat:previous} $\M, k \models \metricI{\previous}\, \varphi$
		iff
		$k>0$ and $\M, k{-}1 \models \varphi$ and $\tmf(k)-\tmf(k{-}1) \in \cI$
		\item \label{def:mhtsat:since}$\M, k \models \varphi \metricI{\since} \psi$
		iff
		for some $\rangecc{j}{0}{k}$
		with
		$\tmf(k)-\tmf(j) \in \cI$,
		we have
		$\M, j \models \psi$
		and
		$\M, i \models \varphi$ for all $\rangeoc{i}{j}{k}$
		\item \label{def:mhtsat:trigger}$\M, k \models \varphi \metricI{\trigger} \psi$
		iff
		for all $\rangecc{j}{0}{k}$
		with
		$\tmf(k)-\tmf(j) \in \cI$,
		we have
		$\M, j \models \psi$
		or
		$\M, i \models \varphi$ for some $\rangeoc{i}{j}{k}$
		\item \label{def:mhtsat:next}$\M, k \models \metricI{\next}\, \varphi$
		iff
		$k+1<\lambda$ and $\M, k{+}1 \models \varphi$
		and $\tmf(k{+}1)-\tmf(k) \in \cI$
		\item \label{def:mhtsat:until}$\M, k \models \varphi \metricI{\until} \psi$
		iff
		for some $\rangeco{j}{k}{\lambda}$
		with
		$\tmf(j)-\tmf(k) \in \cI$,
		we have
		$\M, j \models \psi$
		and
		$\M, i \models \varphi$ for all $\rangeco{i}{k}{j}$
		\item \label{def:mhtsat:release} $\M, k \models \varphi \metricI{\release} \psi$
		iff
		for all $\rangeco{j}{k}{\lambda}$
		with
		$\tmf(j)-\tmf(k) \in \cI$,
		we have
		$\M, j \models \psi$
		or
		$\M, i \models \varphi$ for some $\rangeco{i}{k}{j}$\qed
\end{enumerate}
\end{definition}

A formula $\varphi$ is a \emph{tautology} (or is valid), written $\models \varphi$, iff
$\M,k \models \varphi$ for any timed \HT-trace $\M$ and any $k \in \intervco{0}{\lambda}$.
\MHT\ is the logic induced by the set of all such tautologies.
For two formulas $\varphi, \psi$ we write $\varphi \equiv \psi$, iff
$\models \varphi \leftrightarrow \psi$, that is,
$\M,k \models \varphi \leftrightarrow \psi$ for any
timed \HT-trace \M\ of length $\lambda$ and any $k \in \intervco{0}{\lambda}$.
A timed \HT-trace $\M$ is an \MHT\ \emph{model} of a metric temporal formula $\varphi$ if $\M,0 \models \varphi$.
Similarly, $\M$ is an \MHT\ \emph{model} of theory $\Gamma$ if $\M,0 \models \varphi$ for all $\varphi \in \Gamma$.
The set of \MHT\ models of $\Gamma$ having length $\lambda$ is denoted as $\MHT(\Gamma,\lambda)$,
whereas $\MHT(\Gamma) \eqdef \bigcup_{\lambda=0}^\omega \MHT(\Gamma,\lambda)$ is
the set of all \MHT\ models of $\Gamma$ of any length.
We may obtain fragments of any metric logic by imposing restrictions on the timed \HT-traces used for defining tautologies and models.
That is, \MHTf\ stands for the restriction of \MHT\ to traces of any finite length $\lambda \in \mathbb{N}$ and
\MHTo\ corresponds to the restriction to traces of infinite length $\lambda=\omega$.

An interesting subset of \MHT\ is the one formed by \empty{total} timed \HT-traces like $(\tuple{\T, \T}, \tmf)$.
In the non-metric version of temporal \HT, the restriction to
total models corresponds to Linear Temporal Logic (\LTL,~\cite{pnueli77a}).
In our case, the restriction to total traces defines a metric version of \LTL,
which we call \emph{Metric Temporal Logic} (or \MTL\ for short).
\cite{becadiscsc24a} show that
\MHT\ satisfies the common persistence and negation properties, and that
\MHTf\ is decidable.

\paragraph{Strict Traces.}
Next, we consider a group of results that hold under the assumption of strict traces, namely,
that $\tmf(i) < \tmf(i+1)$ for any pair of consecutive time points.
We can enforce metric models to be traces with a strict timing function $\tmf$.
This can be achieved with the addition of the formula $\alwaysF \neg \metric{\next}{0}{0} \top$ to the context,
where $0$ abbreviates \intervcc{0}{0}.
\begin{proposition}\label{prop:mht:strict}
  For any \MHT\ trace $(\tuple{\Htrace,\Ttrace},\tau)$ of length $\lambda\in \mathbb{N}\cup \lbrace \omega \rbrace$,
  we have that $(\tuple{\Htrace,\Ttrace},\tau) \models \alwaysF\neg \metric{\next}{0}{0} \top$
  iff $\tau$ is a strict function.\qed
\end{proposition}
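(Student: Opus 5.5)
The plan is to unfold the derived operators and apply Definition~\ref{def:mht:satisfaction} clause by clause. Satisfaction of $\alwaysF\neg \metric{\next}{0}{0} \top$ is meant at step $0$, as for models. Since $\alwaysF \varphi \eqdef \bot \release \varphi$ is the non-metric release, it behaves like $\metric{\release}{0}{\omega}$ because $\tmf(j)-\tmf(0)\in\intervcc{0}{\omega}$ always holds. The subformula $\M,i\models\bot$ never holds. Hence the release clause reduces to: $\M,0\models \alwaysF\neg \metric{\next}{0}{0}\top$ iff $\M,j\models \neg\metric{\next}{0}{0}\top$ for all $\rangeco{j}{0}{\lambda}$.

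Next I unfold the negation $\neg\psi \eqdef \psi\to\bot$ with the implication clause. We have $\M,j\models\neg\psi$ iff $\M',j\not\models\psi$ for both $\M'=\M$ and $\M'=(\tuple{\T,\T},\tmf)$. The formula $\psi=\metric{\next}{0}{0}\top$ contains no atoms, so its satisfaction should not depend on $\H$ or $\T$. To justify this, I first note that $\top=\bot\to\bot$ holds at every step of every trace, because $\bot$ is never satisfied. Then the next clause gives $\M',j\models\metric{\next}{0}{0}\top$ iff $j+1<\lambda$ and $\tmf(j{+}1)-\tmf(j)\in\intervcc{0}{0}$, that is, $\tmf(j{+}1)=\tmf(j)$. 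This condition is the same for both choices of $\M'$, since they share $\tmf$.

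Combining the two steps, the formula holds iff for every $j$ with $j+1<\lambda$ we have $\tmf(j{+}1)\neq\tmf(j)$. Since $\tmf(j)\le\tmf(j{+}1)$ by Definition~\ref{def:timed:trace}, this is equivalent to $\tmf(j)<\tmf(j{+}1)$ for all such $j$, which is exactly strictness.

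The only delicate point is the edge case of traces of length $\lambda\le 1$. Definition~\ref{def:timed:trace} only calls traces with $\lambda>1$ strict. For these short traces there is no pair of consecutive steps, so both sides hold vacuously; I will read strictness as vacuous there. For $\lambda=0$, I will simply note that there is no step $k$ at which to evaluate the formula. Otherwise the argument is a routine unfolding of definitions, with no real obstacle.
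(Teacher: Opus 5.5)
Your proposal is correct and follows essentially the same route as the paper, which also just unfolds the semantics of $\alwaysF$, negation and the metric next operator; the only structural difference is that the paper argues each direction by contraposition, whereas you give a single chain of equivalences. Your version is slightly more careful in two places. You justify the passage from the failure of the negation to the satisfaction of $\metric{\next}{0}{0}\top$ itself by noting that this atom-free formula does not depend on $\H$ or $\T$, a step the paper leaves implicit. You also correctly flag that the equivalence requires reading traces of length $\lambda\le 1$ as vacuously strict.
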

In Lemma~\ref{lemma_inductive_def_interval_zero} below,
the use of strict traces enables a recursive definition for
operators of the form $\metric{Q}{0}{n}$, with $Q \in \lbrace \until, \release,\since,\trigger\rbrace$.
This novel characterization, similar to the recursive definition of binary temporal operators in \LTL~\citep{pnueli77a},
is more suitable for the Tseitin-style translation developed in this paper.
\label{rev3.4}For this reason, we henceforth restrict our focus to strict traces,
allowing us to assume the implicit inclusion of the aforementioned axiom.
The following equivalences state that the interval \intervcc{0}{0} makes all binary metric operators
collapse into their right hand argument formula,
whereas unary operators collapse to a truth constant.
For metric formulas $\psi$ and $\varphi$ and for strict traces, we have:
\begin{align}
	\varphi\metric{\until}{0}{0}\psi &\equiv \varphi\metric{\release}{0}{0}\psi  \equiv \varphi\metric{\since}{0}{0}\psi \equiv \varphi\metric{\trigger}{0}{0}\psi \equiv \psi \label{inductive_def_zero:1}
	\\
	\metric{\next}{0}{0} \,  \varphi &\equiv \metric{\previous}{0}{0} \,  \varphi \equiv \bot    \label{inductive_def_zero:2}
	\\
	\metric{\wnext}{0}{0} \, \varphi &\equiv \metric{\wprevious}{0}{0} \, \varphi \equiv \top    \label{inductive_def_zero:3}
\end{align}
The last two lines are precisely an effect of dealing with strict traces.
For instance, $\metric{\next}{0}{0} \, \varphi \equiv \bot$ tells us that
it is always impossible to have a successor state with the same time as the current one,
regardless of the formula $\varphi$ at hand.
Lemma~\ref{lemma_inductive_def_interval_zero} below allows unfolding \emph{until}, \emph{release}, \emph{since} and \emph{trigger}
for intervals of the form $\intervcc{0}{n}$.
\begin{lemma}[\citealp{becadiscsc24a}]\label{lemma_inductive_def_interval_zero}
	For metric formulas $\psi$ and $\varphi$ and for $n \ge 0$, we have:
	\begin{align}
		\label{def:inductive_until_0n} \varphi \metric{\until}{0}{n} \psi & {} \equiv\textstyle \psi \vee ( \varphi \wedge \bigvee_{x=1}^{n} \metric{\next}{x}{x} ( \varphi \metric{\until}{0}{(n-x)} \psi ) ) \\
		\label{def:inductive_release_0n} \varphi \metric{\release}{0}{n} \psi & {} \equiv\textstyle \psi \wedge ( \varphi \vee \bigwedge_{x=1}^{n} \metric{\wnext}{x}{x} ( \varphi \metric{\release}{0}{(n-x)} \psi ) )\\
		\label{def:inductive_since_0n} \varphi \metric{\since}{0}{n} \psi & {} \equiv\textstyle \psi \vee ( \varphi \wedge \bigvee_{x=1}^{n} \metric{\previous}{x}{x} ( \varphi \metric{\since}{0}{(n-x)} \psi ) )\\
		\label{def:inductive_trigger_0n} \varphi \metric{\trigger}{0}{n} \psi & {} \equiv\textstyle \psi \wedge ( \varphi \vee \bigwedge_{x=1}^{n} \metric{\wprevious}{x}{x} ( \varphi \metric{\trigger}{0}{(n-x)} \psi ) ),
	\end{align}
        where intervals of the form $x$ stand for $\intervcc{x}{x}$.\qed
\end{lemma}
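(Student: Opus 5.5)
We prove each equivalence semantically, by fixing an arbitrary strict timed \HT-trace $\M=(\tuple{\H,\T},\tmf)$ of length $\lambda$ and a step $\rangeco{k}{0}{\lambda}$. We show that $\M,k$ satisfies the left-hand side iff it satisfies the right-hand side. Since the semantics of the temporal operators in Definition~\ref{def:mht:satisfaction} is defined pointwise from satisfaction of subformulas, it suffices to establish $\M,k\models$ LHS iff $\M,k\models$ RHS for every such $\M$ and $k$. The $\to$-clause refers only to $\M$ and its total version $(\tuple{\T,\T},\tmf)$, which is also strict, so this pointwise argument yields $\models \mathrm{LHS}\leftrightarrow\mathrm{RHS}$. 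We do the until case~(\ref{def:inductive_until_0n}) in detail. Since $\eqref{def:inductive_since_0n}$ is its mirror image with $\previous$ in place of $\next$ and indices decreasing, it follows by the same argument.

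\emph{Until.} For the left-to-right direction, assume some $j\ge k$ with $\tmf(j)-\tmf(k)\le n$ has $\M,j\models\psi$, and $\varphi$ holds on $\intervco{k}{j}$. If $j=k$, the disjunct $\psi$ holds. Otherwise $j>k$, so $\varphi$ holds at $k$. Set $x=\tmf(k{+}1)-\tmf(k)$. By strictness $x\ge 1$, and since $\tmf$ is monotone, $x\le\tmf(j)-\tmf(k)\le n$. Hence $\M,k\models\metric{\next}{x}{x}\chi$ provided $\chi=\varphi\metric{\until}{0}{(n-x)}\psi$ holds at $k{+}1$. It does: the same witness $j$ satisfies $\tmf(j)-\tmf(k{+}1)=\tmf(j)-\tmf(k)-x\le n-x$, and $\varphi$ holds on $\intervco{k+1}{j}$. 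For the converse, either $\psi$ holds at $k$, giving witness $j=k$, or $\varphi$ holds at $k$ and for some $x\in\intervcc{1}{n}$ we have $\tmf(k{+}1)-\tmf(k)=x$ and a witness $j\ge k{+}1$ with $\tmf(j)-\tmf(k{+}1)\le n-x$. Then $\tmf(j)-\tmf(k)\le n$ and $\varphi$ holds on all of $\intervco{k}{j}$. For $n=0$ the disjunction is empty and the claim reduces to~(\ref{inductive_def_zero:1}).

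\emph{Release and trigger.} Rather than arguing by direct duality, which is unavailable in intuitionistic \MHT, we argue these directly. The semantic clauses for $\release$ and $\trigger$ are classical, pointwise universal statements, so we negate the until argument at the meta level. Suppose $\M,k\models\varphi\metric{\release}{0}{n}\psi$. Taking $j=k$ gives $\psi$ at $k$. Assume $\varphi$ fails at $k$ and fix $x\in\intervcc{1}{n}$. If the antecedent $\metric{\next}{x}{x}\top$ of $\metric{\wnext}{x}{x}$ holds (this is checked at both $\M$ and its total version, but timing is identical in both), then $\tmf(k{+}1)-\tmf(k)=x$. Any $j\ge k{+}1$ with $\tmf(j)-\tmf(k{+}1)\le n-x$ is also within $n$ of $k$, so it satisfies $\psi$ or has some $\varphi$ in $\intervco{k}{j}$. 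That $\varphi$-position cannot be $k$, so it lies in $\intervco{k+1}{j}$. This gives the release at $k{+}1$, in both $\M$ and the total trace, by persistence of the argument.

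For the converse, take $j$ within $n$ of $k$. If $j=k$, we have $\psi$. If $j>k$ and $\varphi$ holds at $k$, we are done. Otherwise use the conjunct for $x=\tmf(k{+}1)-\tmf(k)\in\intervcc{1}{n}$; there $\metric{\next}{x}{x}\top$ holds, so the release at $k{+}1$ yields the condition for $j$. Trigger is the mirror case, using $\wprevious$ and $\tmf(k)-\tmf(k{-}1)$.

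\textbf{Main obstacle.} The main obstacle is the careful handling of the weak operators $\metric{\wnext}{x}{x}$ and $\metric{\wprevious}{x}{x}$, which are defined by an implication. Their \HT\ semantics quantifies over both $\M$ and $(\tuple{\T,\T},\tmf)$, so we must note that the antecedent $\metric{\next}{x}{x}\top$ depends only on $\tmf$ and $\lambda$, and that the consequent at $\M$ implies it at the total trace by persistence. This is also the step where strictness, giving $x\ge1$, is essential to keep the unfolding well-founded in $n$.
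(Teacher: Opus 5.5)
Your proof is correct. The paper gives no proof of this lemma to compare against: it is imported from the earlier \MHT{} work it cites, and the paper only uses it afterwards, to obtain the three-valued identities of Proposition~\ref{prop:alternative:semantics}.

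Your direct check against Definition~\ref{def:mht:satisfaction} is a natural self-contained route, and it handles the two points that need care:
\begin{itemize}
\item Reducing $\equiv$ to a pointwise biconditional over all strict traces is legitimate, because $(\tuple{\T,\T},\tmf)$ shares $\tmf$ and is therefore strict.
\item For $\metric{\wnext}{x}{x}$ and $\metric{\wprevious}{x}{x}$, the antecedent depends only on $\tmf$ and $\lambda$, and the consequent transfers to the total trace by persistence. Persistence is the right tool here. Rerunning the argument on the total trace would not work, because $\varphi$ failing at $\M,k$ does not imply it fails at $(\tuple{\T,\T},\tmf),k$.
\end{itemize}

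Two minor remarks. First, for $n=0$ you need not invoke \eqref{inductive_def_zero:1}, which the paper also states without proof. Your general argument already covers that case, since on a strict trace a witness $j>k$ with $\tmf(j)-\tmf(k)\le 0$ (or $\tmf(k)-\tmf(j)\le 0$ with $j<k$) cannot exist.

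Second, the precise role of strictness is to guarantee that the relevant one-step gap, $\tmf(k{+}1)-\tmf(k)$ or $\tmf(k)-\tmf(k{-}1)$, lies in $\intervcc{1}{n}$. It is used only in the left-to-right direction for until and since, and in the right-to-left direction for release and trigger. The remaining directions hold on arbitrary timed traces. The paper's non-strict counterexample with $\tmf(0)=\tmf(1)=0$ is an instance of the until case. Calling this ``well-foundedness'' is a loose way of saying that a gap $x=0$ would require a disjunct $\metric{\next}{0}{0}(\varphi\metric{\until}{0}{n}\psi)$ with undecreased $n$.
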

The equivalences presented above hold under the assumption of strict traces.
To show it, let us consider the formula $\press  \metric{\until}{0}{1}\arrives$ and
let us define the timed \HT-trace $\M=(\tuple{\Ttrace,\Ttrace},\tau)$ of length $\lambda=2$ as
$T_0 = \lbrace \press \rbrace$, $T_1 = \lbrace \arrives \rbrace$ and $\tau(0)=\tau(1)=0$.
Since $\tau(0) \not < \tau(1)$, $\M$ is not a strict trace.
Hence, $\M, 0 \models \press  \metric{\until}{0}{1}\arrives$ but
$\M, 0 \not \models \arrives \vee \left(\press \wedge \metric{\next}{1}{1}\left(\press \metric{\until}{0}{0} \arrives \right)\right)$,
which falsifies~\eqref{def:inductive_until_0n}.
Formulas of the form $\varphi \metric{\until}{a}{b}\psi$,
where $[a..b)$ is a general interval,
accept more complicated recursive definitions (see~\citep{becadiscsc24a} for details) and they are left out of the scope of this paper.

\paragraph{Metric Equilibrium Models.} As in (propositional) Equilibrium Logic~\citep{pearce06a},
non-monotonicity is achieved by means of a selection among the timed \HT-traces of a theory.
\begin{definition}[Metric Equilibrium Model~\citep{becadiscsc24a}]
  Let $\mathfrak{G}$ be some set of timed \HT-traces of a given theory.
  A total timed \HT-trace $(\tuple{\Ttrace,\Ttrace},\tau)$ is a metric equilibrium model of $\mathfrak{G}$
  iff
  there is no other $\Htrace$ different from $\Ttrace$ such that $(\tuple{\Htrace,\Ttrace},\tau) \in \mathfrak{G}$.
\end{definition}
We talk about metric equilibrium models of a theory $\Gamma$ when $\mathfrak{S} = \MHT(\Gamma)$, and
we write $\MEL(\Gamma, \lambda)$ and $\MEL(\Gamma)$ to stand for
the metric equilibrium models of $\MHT(\Gamma, \lambda)$ and $\MHT(\Gamma)$, respectively.
\emph{Metric Equilibrium Logic} (MEL) is the non-monotonic logic induced by the metric equilibrium models of metric temporal theories.
Variants \MELf\ and \MELo\ refer to \MEL\ when restricted to traces of finite and infinite length, respectively.
 \section{Three-valued Semantics}\label{sec:three-valued}
It is well known that \HT{} semantics is equivalent to three-valued G\"odel logic~\citep{goedel32a}.
The use of the latter semantics in the context of Equilibrium Logic and its temporal extensions facilitates correctness proofs of Tseitin reductions to logic programs and temporal logic programs, 
since it allows equivalences to be evaluated more conveniently.
Three-valued G\"odel logic was extended to the temporal setting in~\cite{cabalar10a}, within the context of reducing temporal theories to temporal logic programs.
Since, in this paper, we extend Cabalar's results to the metric case, we also provide a three-valued characterisation of \MHT{}.
In our setting, a three-valued interpretation is a mapping  
\begin{equation*}
m: \left(\intervco{0}{\lambda}\to \mathbb{N}\right) \to  \left(\left(\intervco{0}{\lambda} \times \PV\right) \to \lbrace 0,1,2\rbrace\right)
\end{equation*}
where, broadly speaking, the first input function $\left( \intervco{0}{\lambda}\to \mathbb{N}\right)$
represents a timing function that
gives the different time points associated with each state of the trace;
we denote it by $\tau$ in the rest of this section.
Given $\tau$, we define 
\begin{equation*}
	m(\tau): \intervco{0}{\lambda} \times \PV \to \lbrace 0,1,2\rbrace
\end{equation*}
as the associated evaluation for propositional variables, that is,
a function that provides the truth value for each propositional variable $p \in \PV$ at each time point $\rangeco{k}{0}{\lambda}$. Such $m(\tau)$\footnote{Note that here $\tau$ is implicit.} can be extended to $\bm{m}$ to cope with arbitrary formulas as shown below.
\begingroup
\allowdisplaybreaks
\begin{align*}
	\trival{k}{\bot} & =  0 \\
	\trival{k}{a} &= m(\tau)(k,a) \hspace{20pt} \text{for any atom } a\\
	\trival{k}{\varphi \wedge \psi} & = 
	\min(\trival{k}{\varphi},\trival{k}{\psi})\\
	\trival{k}{\varphi \vee   \psi} & = 
	\max(\trival{k}{\varphi},\trival{k}{\psi})\\
	\trival{k}{\varphi \to \psi} & =  \begin{cases}
											2 & \text{ if } \trival{k}{\varphi} \le \trival{k}{\psi}\\
											\trival{k}{\psi} & \text{ otherwise }
											\end{cases}\\
	\trival{k}{\metricI{\previous} \varphi} & = 
	\begin{cases}
		0                      & \text{if } k=0 \text{ or } \tau(k)-\tau(k-1) \not \in I\\
		\trival{k-1}{\varphi}  & \text{otherwise }
	\end{cases}\\
	\trival{k}{\metricI{\wprevious} \varphi} & = 
	\begin{cases}
		2                      & \text{if } k=0 \text{ or } \tau(k)-\tau(k-1) \not \in I\\
		\trival{k-1}{\varphi}  & \text{otherwise }
	\end{cases}\\
\trival{k}{\varphi \metricI{\since} \psi} & = 
\sup\{\min\{\trival{i}{\psi},\trival{j}{\varphi}\mid \rangeoc{j}{i}{k}\} \mid \rangecc{i}{0}{k} \text{ and } \tau(k)-\tau(i) \in I\}\footnotemark\\
\trival{k}{\varphi \metricI{\trigger} \psi} & = 
	\inf\{\max\{\trival{i}{\psi},\trival{j}{\varphi}\mid \rangeoc{j}{i}{k}\} \mid \rangecc{i}{0}{k} \text{ and } \tau(k)-\tau(i)\in I\}\\
	\trival{k}{\metricI{\next} \varphi} & = 
	\begin{cases}
		0                      & \text{if } k+1=\lambda \ (\neq \omega) \text{ or } \tau(k+1)-\tau(k)\not \in I\\
		\trival{k+1}{\varphi}  & \text{otherwise }
	\end{cases}\\
	\trival{k}{\metricI{\wnext} \varphi} & = 
	\begin{cases}
		2                    & \text{if } k+1=\lambda \ (\neq \omega) \text{ or } \tau(k+1)-\tau(k)\not \in I\\
		\trival{k+1}{\varphi}  & \text{otherwise }
	\end{cases}\\
	\trival{k}{\varphi \metricI{\until} \psi} & =  \sup\{\min\{\trival{i}{\psi},\trival{j}{\varphi}\mid \rangeco{j}{k}{i}\}\mid \rangeco{i}{k}{\lambda} \text{ and } \tau(i)-\tau(k)\in I\}\\
	\trival{k}{\varphi \metricI{\release} \psi} & =  \inf\{\max\{\trival{i}{\psi},\trival{j}{\varphi}\mid \rangeco{j}{k}{i}\}\mid\rangeco{i}{k}{\lambda} \text{ and } \tau(i)-\tau(k)\in I\}\\
\end{align*}
\endgroup
\footnotetext{Here we consider that $\sup \emptyset = 0$ and $\inf \emptyset = 2$.}The three-valued semantics defined above offers a significant advantage when establishing equivalences.
For instance,
demonstrating that $\trival{k}{\varphi \leftrightarrow \psi}=2$ is equivalent
to showing that $\trival{k}{\varphi} = \trival{k}{\psi}$
for all formulas $\varphi$ and $\psi$.
Similarly,
proving that $\trival{0}{\alwaysF(\varphi\leftrightarrow \psi)}$ is equivalent
to establishing that $\trival{k}{\varphi}=\trival{k}{\psi}$ for all $\rangeco{k}{0}{\lambda}$.
These properties streamline the formal verification process,
as they allow complex logical equivalences to be treated as identities within our three-valued framework.

To establish the equivalence between the standard \MHT\ semantics and our three-valued characterization,
we define a function $f$
from the set of timed \HT{}-traces to the set of three-valued interpretations.
For any timed \HT-trace $\M = (\tuple{\H,\T},\tau)$ of length $\lambda$,
let
\(
f(\M) = m(\tau)
\)
where
\(
m(\tau): [0 \cdot \cdot \lambda) \times \PV \to \lbrace 0,1,2\rbrace
\)
is defined for any $\rangeco{k}{0}{\lambda}$ and $a \in \PV$ as follows:
\begin{equation*}
	m(\tau)(k,a) = \begin{cases}
		0 & \text{if} \ a \not\in T_k \\
		1 & \text{if} \ a \in T_k\setminus H_k \\
		2 & \text{if} \ a \in H_k
	\end{cases}\\
\end{equation*}
This function provides us with a 1-1 correspondence between timed \HT{}-traces and three-valued interpretations.
It can be checked that $f$ is \emph{bijective}.
The following lemma shows that the satisfaction of arbitrary metric temporal formulas is preserved among the semantics.
\begin{proposition}\label{prop:equivalence:semantics}
	For every timed \HT{}-trace $\M$ of length $\lambda$, every $\rangeco{k}{0}{\lambda}$ and every formula $\varphi$, 
	$\M, k\models \varphi$ iff $f(\M)(k,\varphi) = 2$ and $(\tuple{\Ttrace,\Ttrace}), k\models \varphi$ iff $f(\M)(k,\varphi) \not = 0$.\qed
\end{proposition}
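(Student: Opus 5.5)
We prove both claims together by structural induction on $\varphi$, keeping $\M=(\tuple{\H,\T},\tau)$ and $\lambda$ fixed and quantifying over all $\rangeco{k}{0}{\lambda}$. The precise inductive statement is the following: for every formula $\varphi$ and every $k$,
\begin{enumerate}
\item $\M,k\models\varphi$ iff $f(\M)(k,\varphi)=2$;
\item $(\tuple{\T,\T},\tau),k\models\varphi$ iff $f(\M)(k,\varphi)\neq 0$.
\end{enumerate}
Here $f(\M)(k,\varphi)$ abbreviates $\trival{k}{\varphi}$ for the extension $\bm{m}$ of $m(\tau)=f(\M)$.

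\emph{Auxiliary facts.} The second claim is equivalent to $\M_T,k\models\varphi$ iff $f(\M_T)(k,\varphi)=2$, where $\M_T=(\tuple{\T,\T},\tau)$. To see this, note that $f(\M_T)$ is $f(\M)$ with every value $1$ raised to $2$. One checks by a separate induction that this map $g$ (with $g(0)=0$ and $g(1)=g(2)=2$) commutes with every clause of $\bm{m}$. For $\min$, $\max$, $\sup$, $\inf$ and the case splits of the unary operators this holds because $g$ is monotone. For $\to$ it holds because $g$ is a homomorphism of the three-element Gödel algebra onto its two-element Boolean subalgebra. Consequently $f(\M_T)(k,\varphi)=g(f(\M)(k,\varphi))$, which is nonzero iff $f(\M)(k,\varphi)\neq 0$. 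It therefore suffices to prove the first claim for all traces; the second follows by applying it to $\M_T$. The first claim also needs the second, at the implication case, and that is why the two are carried through the induction simultaneously.

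\emph{Base and Boolean cases.} For $\bot$ and atoms the claims are immediate from the definition of $f$: $a\in H_k$ iff the value is $2$, and $a\in T_k$ iff the value is $\neq0$. For $\wedge$ and $\vee$, use that $\min=2$ iff both arguments are $2$, and $\max=2$ iff one argument is $2$.

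\emph{Implication.} This is the step with genuine content. Suppose $\M,k\models\varphi\to\psi$, so that in both $\M$ and $\M_T$ either $\varphi$ fails or $\psi$ holds at $k$. By the induction hypothesis this says:
\begin{itemize}
\item[(a)] $\trival{k}{\varphi}\neq2$ or $\trival{k}{\psi}=2$;
\item[(b)] $\trival{k}{\varphi}=0$ or $\trival{k}{\psi}\neq0$.
\end{itemize}
A short case analysis on the three values shows that (a) and (b) together are equivalent to $\trival{k}{\varphi}\le\trival{k}{\psi}$, which is exactly the condition for the value $2$. Conversely, if $\trival{k}{\varphi}>\trival{k}{\psi}$, the clause returns $\trival{k}{\psi}<2$. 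For the total-trace claim, the implication takes value $0$ iff $\trival{k}{\varphi}>0=\trival{k}{\psi}$, which is the classical falsity of $\varphi\to\psi$ in $\M_T$.

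\emph{Temporal operators.} For $\metricI{\previous}$, $\metricI{\next}$, and their weak variants, the case split on $k=0$, on $k+1=\lambda$, and on membership of the time difference in $\cI$ mirrors Definition~\ref{def:mht:satisfaction} directly. For $\metricI{\since}$ and $\metricI{\until}$, a $\sup$ over a set of values in $\{0,1,2\}$ equals $2$ iff some member equals $2$ (using $\sup\emptyset=0$). Such a member is an inner $\min$, which equals $2$ iff every component is $2$; the induction hypothesis translates this into the existential–universal satisfaction clause. Dually, for $\metricI{\trigger}$ and $\metricI{\release}$, an $\inf$ equals $2$ iff every member equals $2$ (using $\inf\emptyset=2$), and an inner $\max$ equals $2$ iff some component is $2$. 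The same reasoning with ``$\neq0$'' in place of ``$=2$'' handles the second claim. The non-metric operators are the special case $\cI=\intervco{0}{\omega}$.

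The only point needing care is that sets may be infinite when $\lambda=\omega$. This causes no difficulty because the values lie in the finite set $\{0,1,2\}$, so every $\sup$ and $\inf$ is attained. I expect the implication case to be the main (if modest) obstacle, since it is the only place where the two claims interact.
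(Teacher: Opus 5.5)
Your proof is correct and follows the paper's route: a structural induction on $\varphi$ that matches each clause of Definition~\ref{def:mht:satisfaction} against the corresponding three-valued clause. The paper writes out only the until case of the first claim and calls the second claim similar; your version also makes explicit that the implication case of the first claim needs the second claim for the subformulas, and your detour through the map $g$ is correct but redundant once both claims are carried through the induction simultaneously.
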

\begin{corollary}
The \MHT\ semantics and its three-valued characterization are equivalent.\qed
\end{corollary}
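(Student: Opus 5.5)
The proof goes by structural induction on $\varphi$, but I will strengthen the statement so that both parts are carried through the induction at once. For every formula $\varphi$ and every $\rangeco{k}{0}{\lambda}$, I claim: (a) $\M,k\models\varphi$ iff $f(\M)(k,\varphi)=2$, and (b) $(\tuple{\T,\T},\tau),k\models\varphi$ iff $f(\M)(k,\varphi)\neq 0$. Proving them together is needed because the implication case of (a) refers to satisfaction in the total trace, i.e.\ to (b). I first note that $f(\tuple{\T,\T},\tau)$ is obtained from $f(\M)$ by mapping the value $1$ to $2$ on atoms. A side induction shows this holds for all formulas. The map $g(0)=0$, $g(1)=g(2)=2$ is monotone and commutes with $\min$, $\max$, $\sup$, $\inf$ and with the Gödel implication: if $x\le y$ then $g(x)\le g(y)$, and if $x>y$ then either $g(x)=g(y)=2$ with $y\ge1$, or $y=0$. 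Hence (b) reduces to (a) applied to the total trace, since $g(v)=2$ iff $v\neq0$. So it suffices to prove (a) for arbitrary traces, using (a) for total traces in the implication case; the induction on $\varphi$ handles both uniformly.

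For (a), the base cases are immediate from the definition of $f$: $\bot$ gets value $0$, and an atom $a$ gets value $2$ iff $a\in H_k$. For $\wedge$ and $\vee$, value $2$ is attained by $\min$ (respectively $\max$) iff both (respectively one) arguments have value $2$. For $\to$, the value is $2$ iff $v(\varphi)\le v(\psi)$. By a case split on the three values, this holds iff [$v(\varphi)=2\Rightarrow v(\psi)=2$] and [$v(\varphi)\ge1\Rightarrow v(\psi)\ge1$]. By the induction hypothesis, the first condition is the clause for $\M'=\M$ and the second is the clause for $\M'=(\tuple{\T,\T},\tau)$. For the unary temporal operators, the side conditions ($k>0$ or $k+1<\lambda$, and the time difference lying in $I$) coincide syntactically with those in Definition~\ref{def:mht:satisfaction}. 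When a side condition fails, the value is $0$, matching non-satisfaction; otherwise the induction hypothesis applies at $k\pm1$.

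For the binary operators, I use that the value set $\{0,1,2\}$ is finite, so every $\sup$ and $\inf$ is attained. Take $\until$: the $\sup$ over admissible $i$ equals $2$ iff some admissible $i$ has inner $\min=2$. This happens iff $v_i(\psi)=2$ and $v_j(\varphi)=2$ for all $\rangeco{j}{k}{i}$, which by the induction hypothesis is exactly clause~\ref{def:mhtsat:until}. The empty-set convention $\sup\emptyset=0$ matches the existential failing vacuously. Dually, $\inf=2$ iff every admissible $i$ has inner $\max=2$, which matches clause~\ref{def:mhtsat:release}, with $\inf\emptyset=2$ matching the vacuous universal. Note that the inner $\min$/$\max$ over $j$ with $\psi$ included must read as $\min(\{v_i(\psi)\}\cup\{v_j(\varphi)\})$, and similarly for $\max$. $\since$ and $\trigger$ are symmetric. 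Infinite traces cause no problem precisely because the value set is finite.

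The main obstacle is the implication case together with the claim that the total-trace valuation is $g\circ f(\M)$. That is where the interplay between (a) and (b) must be organised carefully, and I will check the commutation of $g$ with the Gödel implication explicitly by cases. The corollary then follows immediately from the bijectivity of $f$.
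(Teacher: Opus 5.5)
Your proposal is correct and follows essentially the same route as the paper: structural induction on $\varphi$ followed by an appeal to the bijectivity of $f$. The paper writes out only the until case, arguing as you do that the supremum equals $2$ iff some admissible $i$ has inner minimum $2$, and dismisses the second item and the implication case as ``similar''. You instead make explicit that the implication case needs the total-trace statement for subformulas, and you derive that second item from the first via the collapse map $g$ commuting with every three-valued clause, which rigorously fills in what the paper leaves implicit.
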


For our Tseitin translation,
we impose the following restriction:
the intervals associated to binary temporal operators $\metricI{\since}$,
$\metricI{\trigger}$, $\metricI{\until}$ and $\metricI{\release}$ are of the form $I=\intervcc{0}{n}$,
with $n<\omega$.
By abbreviating the interval $\intervcc{0}{n}$ by `${\le}\,n$',
the metric temporal operators are of the form
$\metric{\since}{0}{n}$,$\metric{\trigger}{0}{n}$,$\metric{\until}{0}{n}$,$\metric{\release}{0}{n}$.
The following proposition shows how equivalences~\eqref{def:inductive_until_0n}-\eqref{def:inductive_trigger_0n}
lead to an alternative three-valued interpretation for the metric operators.
\begin{proposition}\label{prop:alternative:semantics}
For any interpretation $\bm{m}$ of length $\lambda$, any $\rangeco{k}{0}{\lambda}$ and any $\rangeco{n}{0}{\omega}$,  we have:
\begin{align*}
\trival{k}{\varphi \metric{\until}{0}{n}\psi }   &= \max\lbrace \trival{k}{\psi},\min\lbrace \trival{k}{\varphi}, \sup\lbrace \trival{k}{\metric{\next}{x}{x} ( \varphi \metric{\until}{0}{(n-x)} \psi )} \mid \rangecc{x}{1}{n} \rbrace \rbrace \rbrace \\
\trival{k}{\varphi \metric{\release}{0}{n} \psi} &= \min\lbrace \trival{k}{\psi},\max\lbrace \trival{k}{\varphi}, \inf\lbrace \trival{k}{\metric{\wnext}{x}{x} ( \varphi \metric{\release}{0}{(n-x)} \psi )} \mid \rangecc{x}{1}{n} \rbrace \rbrace \rbrace \\
\trival{k}{\varphi \metric{\since}{0}{n} \psi} &= \max\lbrace \trival{k}{\psi},\min\lbrace \trival{k}{\varphi}, \sup\lbrace \trival{k}{\metric{\previous}{x}{x} ( \varphi \metric{\since}{0}{(n-x)} \psi )} \mid \rangecc{x}{1}{n} \rbrace \rbrace \rbrace  \\
\trival{k}{\varphi \metric{\trigger}{0}{n}\psi}  &= \min\lbrace \trival{k}{\psi},\max\lbrace \trival{k}{\varphi}, \inf\lbrace \trival{k}{\metric{\wprevious}{x}{x} ( \varphi \metric{\trigger}{0}{(n-x)} \psi )} \mid \rangecc{x}{1}{n} \rbrace \rbrace \rbrace
\end{align*}\qed
\end{proposition}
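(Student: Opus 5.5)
The plan is to derive each identity from Lemma~\ref{lemma_inductive_def_interval_zero} through the three-valued characterization. By Proposition~\ref{prop:equivalence:semantics} and the bijectivity of $f$, a tautology $\models \alpha \leftrightarrow \beta$ gives $\trival{k}{\alpha}=\trival{k}{\beta}$ for every three-valued interpretation $\bm{m}$ (over strict timing functions) and every $k$. The reason is that $\trival{k}{\alpha\leftrightarrow\beta}=2$ holds exactly when the two values coincide: if $\trival{k}{\alpha}>\trival{k}{\beta}$, then $\trival{k}{\alpha\to\beta}=\trival{k}{\beta}<2$.

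Applied to equivalences~\eqref{def:inductive_until_0n}--\eqref{def:inductive_trigger_0n}, this gives, for instance,
$\trival{k}{\varphi \metric{\until}{0}{n}\psi}=\trival{k}{\psi \vee ( \varphi \wedge \bigvee_{x=1}^{n} \metric{\next}{x}{x} ( \varphi \metric{\until}{0}{(n-x)} \psi ) )}$. The right-hand side is then unfolded with the clauses for $\vee$ and $\wedge$, which are $\max$ and $\min$. A finite disjunction evaluates to the maximum of its disjuncts, which is the $\sup$ over $\rangecc{x}{1}{n}$. When $n=0$ the disjunction is empty, i.e.\ $\bot$, with value $0=\sup\emptyset$. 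In that case the formula reduces to $\max\{\trival{k}{\psi},\min\{\cdot,0\}\}=\trival{k}{\psi}$, in agreement with~\eqref{inductive_def_zero:1}. Dually, the conjunctions for release and trigger become $\inf$, and for $n=0$ the empty conjunction $\top$ has value $2=\inf\emptyset$. The weak operators $\metric{\wnext}{x}{x}$ and $\metric{\wprevious}{x}{x}$ are derived as implications, and their three-valued values agree with the explicit clauses given for $\metricI{\wnext}$ and $\metricI{\wprevious}$. To see this, note that $\trival{k}{\metricI{\next}\top}$ is either $0$, making the implication $2$, or $2$, making it equal to $\trival{k+1}{\varphi}$.

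The main obstacle is not any calculation but two points of rigor. First, Lemma~\ref{lemma_inductive_def_interval_zero} holds only on strict traces, so I state explicitly that the interpretations range over strict $\tau$, as the paper assumes throughout. Second, the transfer from \MHT\ validity to three-valued equality uses both halves of Proposition~\ref{prop:equivalence:semantics}, namely values $2$ and $\neq 0$, together with surjectivity of $f$.

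As a sanity check that avoids the lemma, one could instead argue directly from the $\sup$/$\inf$ definitions. Split the witnesses $i$ into $i=k$ and $i>k$. Under strictness, each $i>k$ with $\tau(i)-\tau(k)\le n$ has $x=\tau(k+1)-\tau(k)\in\intervcc{1}{n}$ and $\tau(i)-\tau(k+1)\le n-x$. Factoring $\trival{k}{\varphi}$ out of the inner $\min$ by distributivity of $\min$ over $\sup$ in the chain $\{0,1,2\}$ then yields the same formula. I would use this only if the transfer step proved awkward.
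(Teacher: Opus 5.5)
Your proposal is correct and follows the paper's proof. The paper instantiates the equivalences of Lemma~\ref{lemma_inductive_def_interval_zero} and unfolds $\vee,\wedge$ into $\max,\min$ under the three-valued semantics; you do the same, and you also spell out details the paper leaves implicit: the transfer from \MHT-validity to equality of three-valued values, the strictness assumption, the agreement of the derived weak operators with their explicit clauses, and the $n=0$ case, where $\sup\emptyset=0$ and $\inf\emptyset=2$ correctly handle the empty disjunction and conjunction.
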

Note that, by applying Proposition~\ref{prop:alternative:semantics}, we can conclude that $\trival{k}{\varphi \metric{\until}{0}{0}\psi} = \trival{k}{\varphi \metric{\release}{0}{0}\psi} = \trival{k}{\varphi \metric{\since}{0}{0}\psi} = \trival{k}{\varphi \metric{\trigger}{0}{0}\psi} = \trival{k}{\psi}$.
In what follows,
we use those semantics,
when dealing with metric modalities of the form $\metric{Q}{0}{n}$
with $Q \in \lbrace\until, \release,\since,\trigger \rbrace$.
 
\section{Translation into Metric Temporal Logic Programs}\label{sec:translation}

\begin{definition}[Metric temporal literal, rule and program]
Given an alphabet $\PV$, we define the set of \emph{regular literals} as $\lbrace a, \neg a\mid a \in \PV\rbrace$, the set of \emph{metric temporal literals} as
$\lbrace\metricI{\previous} a,\neg \metricI{\previous} a, \metricI{\wprevious} a, \neg \metricI{\wprevious} a, \previous a,\neg \previous a, \wprevious a ,\neg \wprevious a  \mid a \in \PV\rbrace$.
A \emph{metric temporal rule} is either
\begin{itemize}
	\item an \emph{initial rule} $B\to A$
	\item a \emph{dynamic rule} of the form $\wnext\alwaysF \left( B\to A\right) $, or
	\item a \emph{final rule} of the form $\alwaysF \left(\finally \to \left(B \to A\right)\right)$
\end{itemize}
where $B= \bigwedge\limits_{i=1}^n b_i$ with $n\ge 0$,
$A=\bigvee\limits_{j=1}^m a_j$ with $m\ge 0$, and
each $b_i$ and $a_j$ are metric temporal literals for dynamic rules, or
regular literals for initial and final rules.
Facts of the form $p$ are equivalent to rules of the form $\to p$ while
constraints of the form $\neg p$ are equivalent to rules of the type $p \to \bot$.
A \emph{metric temporal logic program} is a set of metric temporal rules.\qed
\end{definition}
\begin{table}[h!]\centering
\caption{Translation for metric next, weak next, previous and weak previous.}
\label{tbl:translation:mnext-mprevious}
{\tablefont \begin{tabular}{@{\extracolsep{\fill}}ccc}
\topline
$\mu$ & $\eta(\mu)$ & $\eta^*(\mu)$ \\\hline
$\metricI{\next} \varphi$ & $ \begin{array}{l}
	\wnext\alwaysF\left( \metricI{\previous} \top \to \left(\previous \Lab{\mu} \leftrightarrow \Lab{\varphi}\right) \right)\\
	\wnext\alwaysF\left( \neg \metricI{\previous} \top \to \neg \previous \Lab{\mu} \right)\\
	\alwaysF\left( \finally \to \neg \Lab{\mu}\right)
\end{array}$ & $\begin{array}{l}
\wnext\alwaysF\left( \metricI{\previous} \Lab{\mu} \rightarrow \Lab{\varphi} \right)\\
\wnext\alwaysF\left( \metricI{\previous} \top \wedge \Lab{\varphi} \rightarrow \previous \Lab{\mu} \right)\\
	\wnext\alwaysF\left( \neg \metricI{\previous} \top \to \neg \previous \Lab{\mu} \right)\\
\alwaysF\left( \finally \to \neg \Lab{\mu}\right)
\end{array}$ \\\hline
$\metricI{\wnext} \varphi$ & $\begin{array}{l}
	\wnext\alwaysF\left( \metricI{\previous} \top \to \left( \previous \Lab{\mu} \leftrightarrow \Lab{\varphi}\right) \right)\\
	\wnext\alwaysF\left( \neg \metricI{\previous} \top \to  \previous \Lab{\mu} \right)\\
	\alwaysF\left( \finally \to  \Lab{\mu}\right)
\end{array}$ & $\begin{array}{l}
	\wnext\alwaysF\left( \metricI{\previous} \Lab{\mu} \rightarrow \Lab{\varphi} \right)\\
	\wnext\alwaysF\left( \metricI{\previous} \top \wedge \Lab{\varphi} \rightarrow  \previous \Lab{\mu} \right)\\
	\wnext\alwaysF\left( \neg \metricI{\previous} \top \to  \previous \Lab{\mu} \right)\\
	\alwaysF\left( \finally \to  \Lab{\mu}\right)
\end{array}$
\\\hline
$\metricI{\previous} \varphi $ & $\begin{array}{l}
	\wnext\alwaysF\left(\Lab{\mu} \leftrightarrow  \metricI{\previous}\Lab{\varphi}\right)\\
	\neg \Lab{\mu}
\end{array}$ & $\begin{array}{l}
\wnext\alwaysF\left( \Lab{\mu}  \rightarrow \metricI{\previous}\Lab{\varphi}\right)\\
\wnext\alwaysF\left( \metricI{\previous}\Lab{\varphi} \to \Lab{\mu} \right)\\
\neg \Lab{\mu}
\end{array}$\\\hline
$\metricI{\wprevious} \varphi $ & $\begin{array}{l}
	\wnext\alwaysF\left(\Lab{\mu} \leftrightarrow  \metricI{\wprevious}\Lab{\varphi}\right)\\
    \Lab{\mu}
\end{array}$ & $\begin{array}{l}
\wnext\alwaysF\left( \Lab{\mu} \wedge \metricI{\wprevious}\top  \rightarrow \previous \Lab{\varphi}\right)\\
\wnext\alwaysF\left( \metricI{\wprevious}\Lab{\varphi} \to \Lab{\mu} \right)\\
\Lab{\mu}
\end{array}$\botline
\end{tabular}
}
\end{table}
 \begin{table}[h!]
\caption{Translation of metric since and metric trigger.} \label{tbl:translation:msince-mtrigger}
\resizebox{\textwidth}{!}{
{\tablefont
\begin{tabular}{@{\extracolsep{\fill}}ccc}\topline
$\mu$ & $\eta(\mu)$ & $\eta^*(\mu)$ \\\hline
$\varphi \metric{\since}{0}{n} \psi$ & $\begin{array}{l}
\alwaysF\left(\Lab{\mu}\leftrightarrow \left(\Lab{\psi} \vee   \left(\Lab{\varphi} \wedge \bigvee\limits_{x=1}^n \Lab{\metric{\previous}{x}{x} (\varphi\metric{\since}{0}{(n-x)} \psi)}\right)\right)\right) \\
\end{array}$ & $\begin{array}{l}
\Lab{\mu}\rightarrow \Lab{\psi} \vee \Lab{\varphi}\\
\Lab{\mu}\rightarrow \Lab{\psi} \vee \bigvee\limits_{x=1}^n \Lab{\metric{\previous}{x}{x} (\varphi\metric{\since}{0}{(n-x)} \psi)}\\
\Lab{\psi}\rightarrow \Lab{\mu} \\
\Lab{\varphi} \wedge \Lab{\metric{\previous}{1}{1} (\varphi\metric{\since}{0}{(n-1)} \psi)} \rightarrow \Lab{\mu} \\
\cdots \\
\Lab{\varphi} \wedge \Lab{\metric{\previous}{n}{n} (\varphi\metric{\since}{0}{(n-n)} \psi)} \rightarrow \Lab{\mu} \\
\wnext\alwaysF\left(\Lab{\mu}\rightarrow \Lab{\psi} \vee \Lab{\varphi} \right) \\
\wnext\alwaysF\left(\Lab{\mu}\rightarrow \Lab{\psi} \vee \bigvee\limits_{x=1}^n \Lab{\metric{\previous}{x}{x} (\varphi\metric{\since}{0}{(n-x)} \psi)}\right) \\
\wnext\alwaysF\left(\Lab{\psi}\rightarrow \Lab{\mu} \right) \\
\wnext\alwaysF\left(\Lab{\varphi} \wedge \Lab{\metric{\previous}{1}{1} (\varphi\metric{\since}{0}{(n-1)} \psi)} \rightarrow \Lab{\mu}\right) \\
\cdots \\
\wnext\alwaysF\left(\Lab{\varphi} \wedge \Lab{\metric{\previous}{n}{n} (\varphi\metric{\since}{0}{(n-n)} \psi)} \rightarrow \Lab{\mu}\right) \\
\end{array}$\\\hline
$\varphi \metric{\trigger}{0}{n} \psi$ & $\begin{array}{l}
\alwaysF\left(\Lab{\mu}\leftrightarrow \left(\Lab{\psi}\wedge \left( \Lab{\varphi} \vee \bigwedge\limits_{x=1}^n \Lab{\metric{\wprevious}{x}{x} (\varphi\metric{\trigger}{0}{(n-x)} \psi)}\right)\right)\right) \\
\end{array}$ & $\begin{array}{l}
\Lab{\mu}\rightarrow \Lab{\psi} \\
\Lab{\mu}\rightarrow \Lab{\varphi} \vee  \Lab{\metric{\wprevious}{1}{1} (\varphi\metric{\trigger}{0}{(n-1)} \psi)} \\
\cdots \\
\Lab{\mu}\rightarrow \Lab{\varphi} \vee  \Lab{\metric{\wprevious}{n}{n} (\varphi\metric{\trigger}{0}{(n-n)} \psi)} \\
\Lab{\psi} \wedge \Lab{\varphi} \to \Lab{\mu} \\
\left(\Lab{\psi}\wedge  \bigwedge\limits_{x=1}^n \Lab{\metric{\wprevious}{x}{x} (\varphi\metric{\trigger}{0}{(n-x)} \psi)}\right) \to \Lab{\mu} \\
\wnext\alwaysF\left(\Lab{\mu}\rightarrow \Lab{\psi}\right) \\
\wnext\alwaysF\left(\Lab{\mu}\rightarrow \Lab{\varphi} \vee  \Lab{\metric{\wprevious}{1}{1} (\varphi\metric{\trigger}{0}{(n-1)} \psi)}\right) \\
\cdots \\
\wnext\alwaysF\left(\Lab{\mu}\rightarrow \Lab{\varphi} \vee  \Lab{\metric{\wprevious}{n}{n} (\varphi\metric{\trigger}{0}{(n-n)} \psi)}\right) \\
\wnext\alwaysF\left( \Lab{\psi} \wedge \Lab{\varphi} \to \Lab{\mu}\right) \\
\wnext\alwaysF\left(\left(\Lab{\psi}\wedge  \bigwedge\limits_{x=1}^n \Lab{\metric{\wprevious}{x}{x} (\varphi\metric{\trigger}{0}{(n-x)} \psi)}\right) \to \Lab{\mu}\right) \\
\end{array}$\botline
\end{tabular}}}
\end{table}
 \begin{table}[h!]
\caption{Translation of metric until and metric release.}
\label{tbl:translation:muntil-mrelease}
\resizebox{.9\textwidth}{!}{
{\tablefont \begin{tabular}{@{\extracolsep{\fill}}ccc}\topline
$\mu$ & $\eta(\mu)$ & $\eta^*(\mu)$ \\\hline	
$\varphi \metric{\until}{0}{n} \psi$ & $\begin{array}{l} 
	\alwaysF\left(\Lab{\mu}\leftrightarrow \Lab{\psi} \vee \left( \Lab{\varphi} \wedge \bigvee\limits_{x=1}^n \Lab{\metric{\next}{x}{x} (\varphi\metric{\until}{0}{(n-x)} \psi)}\right)\right) \\
\end{array}$ &$\begin{array}{l} 
\Lab{\mu}\rightarrow \Lab{\psi} \vee \Lab{\varphi} \\
\Lab{\mu}\rightarrow \Lab{\psi} \vee \bigvee\limits_{x=1}^n \Lab{\metric{\next}{x}{x} (\varphi\metric{\until}{0}{(n-x)} \psi)} \\
\Lab{\psi}\rightarrow \Lab{\mu} \\
\Lab{\varphi} \wedge \Lab{\metric{\next}{1}{1} (\varphi\metric{\until}{0}{(n-1)} \psi)} \rightarrow \Lab{\mu}\\
\cdots\\
\Lab{\varphi} \wedge \Lab{\metric{\next}{n}{n} (\varphi\metric{\until}{0}{(n-n)} \psi)} \rightarrow \Lab{\mu}\\
\wnext\alwaysF\left(\Lab{\mu}\rightarrow \Lab{\psi} \vee \Lab{\varphi} \right)\\
\wnext\alwaysF\left(\Lab{\mu}\rightarrow \Lab{\psi} \vee \bigvee\limits_{x=1}^n \Lab{\metric{\next}{x}{x} (\varphi\metric{\until}{0}{(n-x)} \psi)}\right) \\
\wnext\alwaysF\left(\Lab{\psi}\rightarrow \Lab{\mu}\right) \\
\wnext\alwaysF\left(\Lab{\varphi} \wedge \Lab{\metric{\next}{1}{1} (\varphi\metric{\until}{0}{(n-1)} \psi)} \rightarrow \Lab{\mu}\right) \\
\cdots\\
\wnext\alwaysF\left(\Lab{\varphi} \wedge \Lab{\metric{\next}{n}{n} (\varphi\metric{\until}{0}{(n-n)} \psi)} \rightarrow \Lab{\mu}\right) \\
\end{array}$ \\\hline
$\varphi \metric{\release}{0}{n} \psi$ & $\begin{array}{l} 
\alwaysF\left(\Lab{\mu}\leftrightarrow \Lab{\psi} \wedge \left( \Lab{\varphi} \vee \bigwedge\limits_{x=1}^n \Lab{\metric{\wnext}{x}{x} (\varphi\metric{\release}{0}{(n-x)} \psi)}\right)\right) \\
\end{array}$ & $\begin{array}{l} 
\Lab{\mu}\rightarrow \Lab{\psi} \\
\Lab{\mu}\rightarrow   \Lab{\varphi} \vee \Lab{\metric{\wnext}{1}{1} (\varphi\metric{\release}{0}{(n-1)} \psi)} \\
\cdots \\
\Lab{\mu}\rightarrow   \Lab{\varphi} \vee \Lab{\metric{\wnext}{n}{n} (\varphi\metric{\release}{0}{(n-n)} \psi)}\\
\Lab{\psi}\wedge \Lab{\varphi} \rightarrow \Lab{\mu}\\
\left( \Lab{\psi} \wedge \bigwedge\limits_{x=1}^n \Lab{\metric{\wnext}{x}{x} (\varphi\metric{\release}{0}{(n-x)} \psi)}\right)\rightarrow \Lab{\mu} \\
\wnext\alwaysF\left(\Lab{\mu}\rightarrow \Lab{\psi} \right) \\
\wnext\alwaysF\left(\Lab{\mu}\rightarrow   \Lab{\varphi} \vee \Lab{\metric{\wnext}{1}{1} (\varphi\metric{\release}{0}{(n-1)} \psi)}\right) \\
\cdots \\
\wnext\alwaysF\left(\Lab{\mu}\rightarrow   \Lab{\varphi} \vee \Lab{\metric{\wnext}{n}{n} (\varphi\metric{\release}{0}{(n-n)} \psi)}\right) \\
\wnext\alwaysF\left(\Lab{\psi}\wedge \Lab{\varphi} \rightarrow \Lab{\mu}\right) \\
\wnext\alwaysF\left(\left( \Lab{\psi} \wedge \bigwedge\limits_{x=1}^n \Lab{\metric{\wnext}{x}{x} (\varphi\metric{\release}{0}{(n-x)} \psi)}\right)\rightarrow \Lab{\mu} \right) \\
\end{array}$\botline
\end{tabular}}}
\end{table}

\label{r3.12}In the propositional setting,
the Tseitin transformation is typically performed by substituting subformulas with auxiliary atoms that
represent their respective truth values.
In the metric temporal case,
the recursive nature of the metric temporal operators necessitates a more complex approach.
For instance, if we follow Proposition~\ref{lemma_inductive_def_interval_zero}, the satisfaction of the formula $\moving \metric{\until}{0}{3} \arrives$ requires the evaluation of $\metric{\next}{x}{x}\left(\moving\metric{\until}{0}{n-x}\arrives\right)$, for all $\rangecc{x}{1}{3}$, which are not subformulas of $\moving \metric{\until}{0}{3} \arrives$.

Inspired by the Fisher-Ladner closure used in dynamic logics~\citep{fislad79a}, the set of subformulas must be closed under the recursive definitions of the binary metric operators.
This closure ensures that the auxiliary atoms correctly capture the temporal expansion of formulas
across the discrete time points of the timed trace.
\begin{definition}[Closure] Let $\varphi$ be a metric temporal formula.
  We define the \emph{closure} of $\varphi$, in symbols $cl(\varphi)$,
  as the smallest theory
  satisfying the following properties:
	\begin{enumerate}
		\item $\varphi \in cl(\varphi)$
		\item $cl(\varphi)$ is closed under subformulas
\item $\varphi\metric{\until}{0}{n} \psi \in cl(\varphi)$ and $n\ge 1$ imply $\next_i\left(\varphi\metric{\until}{0}{n-i} \psi\right) \in cl(\varphi)$, for all $\rangecc{i}{1}{n}$
		\item $\varphi\metric{\release}{0}{n} \psi \in cl(\varphi)$ and $n\ge 1$ imply $\wnext_i\left(\varphi\metric{\release}{0}{n-i} \psi\right) \in cl(\varphi)$, for all $\rangecc{i}{1}{n}$
		\item $\varphi\metric{\since}{0}{n} \psi \in cl(\varphi)$ and  $n\ge 1$ imply $\previous_i\left(\varphi\metric{\since}{0}{n-i} \psi\right) \in cl(\varphi)$, for all $\rangecc{i}{1}{n}$
		\item $\varphi\metric{\trigger}{0}{n} \psi \in cl(\varphi)$ and  $n\ge 1$ imply $\wprevious_i \left(\varphi\metric{\trigger}{0}{n-i} \psi\right) \in cl(\varphi)$, for all $\rangecc{i}{1}{n}$\qed
	\end{enumerate}
\end{definition}
For instance, the closure of the formula~\eqref{ex:press:moving}, $\alwaysF (\mathit{press} \to \mathit{moving} \metric{\until}{0}{3}\mathit{arrives})$,
corresponds to the set $\mathit{cl}(\eqref{ex:press:moving})$ below.
For brevity, we abbreviate $\mathit{press}$, $\mathit{moving}$ and $\mathit{arrives}$
as $\press$, $\moving$ and $\arrives$, respectively.
\begin{equation*}
\mathit{cl}(\eqref{ex:press:moving}) = \left\{
										\begin{array}{rrr}
										\alwaysF (\press \to \moving \metric{\until}{0}{3}\arrives)\mytag{(F1)}, & \myex \mytag{(F2)}, &
										\moving\metric{\until}{0}{3}\arrives \mytag{(F3)},\\
										\metric{\next}{3}{3}\!\left(\moving \metric{\until}{0}{0}\arrives\right) \mytag{(F4)},&
										\metric{\next}{2}{2}\!\left(\moving \metric{\until}{0}{1}\arrives\right) \mytag{(F5)}, & \metric{\next}{2}{2}\!\left(\moving \metric{\until}{0}{0}\arrives\right) \mytag{(F6)}, \\
										\metric{\next}{1}{1}\!\left(\moving \metric{\until}{0}{2}\arrives\right)\mytag{(F7)} , & \metric{\next}{1}{1}\!\left(\moving \metric{\until}{0}{1}\arrives\right)\mytag{(F8)}, &
										\metric{\next}{1}{1}\!\left(\moving \metric{\until}{0}{0}\arrives\right) \mytag{(F9)}, \\
										\moving \metric{\until}{0}{2}\arrives \mytag{(F10)}, & \moving \metric{\until}{0}{1}\arrives \mytag{(F11)}, & \moving \metric{\until}{0}{0}\arrives,\\
										\moving, & \arrives , & \press
										\end{array}
										\right\}
\end{equation*}

Each formula in $\mathit{cl}(\eqref{ex:press:moving}) \setminus \lbrace \moving\metric{\until}{0}{0}\arrives,\moving,\press,\arrives \rbrace$ has been labelled in order to be referenced when displaying the translation of~\eqref{ex:press:moving} into a metric temporal logic program shown in Table~\ref{tbl:translation:example}.
Note that the formula $\mytag{(F1)}$ corresponds to~\eqref{ex:press:moving}.
The next proposition shows that the size of the closure of a formula $\varphi$, in symbols $\lvert \mathit{cl}(\varphi)\rvert$ can be bounded by a polynomial that depends on two parameters: the size of $\varphi$ (denoted by $\lvert \varphi \rvert$) and the maximum number $n$ used in a binary temporal operator occurring in $\varphi$.
\begin{proposition}\label{prop:closure:finite}
  For any metric formula $\varphi$, $cl(\varphi)$ is finite
  and
  the total size of all the formulas in $cl(\varphi)$, $\lvert cl(\varphi)\rvert$, is bounded by $\lvert cl(\varphi)\rvert
  \le 2(k_\varphi^2 + 1) \lvert \varphi \rvert$,
  where $k_\varphi = \max\{1,n_\varphi\}$ and
  $n_\varphi$ is the maximum $n$ that occurs in
  the operators $\metric{\until}{0}{n}$, $\metric{\release}{0}{n}$, $\metric{\since}{0}{n}$ and $\metric{\trigger}{0}{n}$ in $\varphi$.\qed
\end{proposition}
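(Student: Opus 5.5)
I will give an explicit description of $cl(\varphi)$ and then count. Let $\mathit{sub}(\varphi)$ be the set of subformulas of $\varphi$; its size is at most $\lvert\varphi\rvert$. The first step is to show that $cl(\varphi)=\mathit{sub}(\varphi)\cup E(\varphi)$, where $E(\varphi)$ is the set of \emph{expansion formulas}. These come from each occurrence of a binary metric subformula $\alpha\metric{Q}{0}{m}\beta\in\mathit{sub}(\varphi)$ with $Q\in\lbrace\until,\release,\since,\trigger\rbrace$, and consist of:
\begin{itemize}
\item the formulas $\alpha\metric{Q}{0}{m'}\beta$ for $\rangecc{m'}{0}{m}$;
\item the formulas $O_i(\alpha\metric{Q}{0}{m'-i}\beta)$ for $\rangecc{m'}{1}{m}$ and $\rangecc{i}{1}{m'}$, where $O_i$ is the unary metric operator with interval $\intervcc{i}{i}$ matching $Q$ (next, weak next, previous or weak previous, respectively).
\end{itemize}
Only values $m'-i\in\intervcc{0}{m-1}$ occur, so these are well defined.

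To justify the description, I check two things. First, the right-hand set contains $\varphi$ and satisfies every closure condition. Closure under subformulas works because the subformulas of an expansion formula are expansion formulas of the same occurrence, together with subformulas of $\alpha$ and $\beta$, which already lie in $\mathit{sub}(\varphi)$. (For weak next and weak previous, read as derived operators, the extra subformulas $\top$ and $O_i\top$ add only a constant number of formulas per value of $i$, so they are absorbed in the same count.) Rules 3--6 applied to $\alpha\metric{Q}{0}{m'}\beta$ with $m'\le m$ only produce formulas already listed. Second, by a straightforward induction on the generation of $cl(\varphi)$, every element of the right-hand set must belong to $cl(\varphi)$. In fact, only the inclusion of $cl(\varphi)$ in this set is needed for the bound, and that follows from minimality. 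Finiteness is then immediate, since the set is a finite union of finite sets.

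Next I bound sizes. Fix one binary metric occurrence $\mu=\alpha\metric{Q}{0}{m}\beta$ with $m\le k_\varphi$. Each formula $\alpha\metric{Q}{0}{m'}\beta$ has size $\lvert\mu\rvert$, assuming the interval index counts as a single symbol, as is standard for $\lvert\cdot\rvert$. Each $O_i(\cdots)$ has size at most $\lvert\mu\rvert+c$ for a small constant $c$, and $c\le\lvert\mu\rvert$. There are at most $m+1$ formulas of the first kind and at most $m^2$ of the second kind. So occurrence $\mu$ contributes at most
\[
(m+1)\lvert\mu\rvert+2m^2\lvert\mu\rvert\le 2(k_\varphi^2+1)\lvert\mu\rvert .
\]
This uses $m+1+2m^2\le 2m^2+2$ when $m\le 1$, and $m+1\le m^2+\ldots$ in general. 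If the bookkeeping is tight, I will rebalance the count, for instance by using only $m(m+1)/2$ distinct expansion formulas.

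The main obstacle is summing these contributions over occurrences without double counting. Summing $\lvert\mu\rvert$ over all metric subformula occurrences can give $\lvert\varphi\rvert\cdot\mathrm{depth}$, not $\lvert\varphi\rvert$. To avoid this, I will charge differently. An expansion formula of $\mu$ is built from $\alpha$, $\beta$ and $O(1)$ new symbols, so I store $cl(\varphi)$ with shared subterms: each expansion formula adds only a constant number of new nodes on top of existing ones. That gives a total of at most $\lvert\varphi\rvert+\sum_\mu 2(m_\mu^2+m_\mu+1)\le 2(k_\varphi^2+1)\lvert\varphi\rvert$, since the number of occurrences $\mu$ is at most $\lvert\varphi\rvert$. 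I would state explicitly that $\lvert cl(\varphi)\rvert$ is measured in this shared, per-new-symbol way, which is the reading under which the stated linear-in-$\lvert\varphi\rvert$ bound holds.
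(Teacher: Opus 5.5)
Your route differs from the paper's. The paper uses structural induction on $\varphi$ with two local recurrences. The first is $\lvert cl(\varphi\otimes\psi)\rvert\le 1+\lvert cl(\varphi)\rvert+\lvert cl(\psi)\rvert$. The second is $\lvert cl(\varphi\metric{\until}{0}{n}\psi)\rvert\le\frac{n^2+3n+2}{2}+\lvert cl(\varphi)\rvert+\lvert cl(\psi)\rvert$, which comes from the same list of until- and next-formulas you give, taken one operator at a time. It then splits cases on which of $n$, $k_\varphi$, $k_\psi$ is largest.

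You instead describe $cl(\varphi)$ once as $\mathit{sub}(\varphi)\cup E(\varphi)$ and count globally. As you note, only the inclusion obtained by minimality is needed. This avoids the max-case analysis and gives a better constant, while the paper's induction stays local and never reasons about the whole closure.

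Your change of measure matches what the paper actually proves. Its recurrences add $1$, not $\lvert\varphi\otimes\psi\rvert$, so it really bounds the number of distinct formulas in $cl(\varphi)$, i.e.\ your shared size. You are right that the literal sum-of-sizes reading is false, and not only because of metric operators. For example, $p_1\wedge(p_2\wedge(\cdots\wedge p_{d+1}))$ has $\lvert\varphi\rvert=2d+1$ and $k_\varphi=1$, yet its subformula sizes sum to at least $(d+1)^2>4(2d+1)$ for $d\ge 7$. So drop the per-occurrence literal-size paragraph; its inequality $(m+1)+2m^2\le 2m^2+2$ fails for $m\ge 2$ anyway.

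The one step that does not go through as written is your final inequality. You charge each occurrence $2(m_\mu^2+m_\mu+1)\le 2(k_\varphi^2+k_\varphi+1)$ and use only that there are at most $\lvert\varphi\rvert$ occurrences. That yields $(2k_\varphi^2+2k_\varphi+3)\lvert\varphi\rvert$, which exceeds $2(k_\varphi^2+1)\lvert\varphi\rvert$.

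Count exactly instead. Apart from $\mu=\alpha\metric{Q}{0}{m}\beta$ itself, the occurrence contributes the $m$ formulas $\alpha\metric{Q}{0}{m'}\beta$ with $m'<m$. It also contributes the $m(m+1)/2$ formulas $O_i(\alpha\metric{Q}{0}{j}\beta)$ with $i\ge 1$ and $i+j\le m$. Each adds exactly one new node, so the total is at most $\lvert\varphi\rvert\bigl(1+\tfrac{k_\varphi^2+3k_\varphi}{2}\bigr)$. This is at most $2(k_\varphi^2+1)\lvert\varphi\rvert$ because $3k^2-3k+2>0$.

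Alternatively, keep your loose charge but note that a syntax tree with $N$ binary nodes has at least $N+1$ leaves, so $N\le(\lvert\varphi\rvert-1)/2$. This gives $(k_\varphi^2+k_\varphi+2)\lvert\varphi\rvert\le 2(k_\varphi^2+1)\lvert\varphi\rvert$, since $k_\varphi\le k_\varphi^2$.

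Finally, the paper treats $\metricI{\wnext}$ and $\metricI{\wprevious}$ as single operators in the closure, so the derived-operator detour is unnecessary. If you keep it, put the extra formulas into the arithmetic explicitly rather than declaring them absorbed.
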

Note that the closure of a formula contains other formulas than subformulas,
which are also translated into logical rules.
To prove the correctness,
we cannot only rely on structural induction
since our translation depends on the notion of closure, which contains formulas that are not subformulas.
For instance, let us consider the formula $\moving \metric{\until}{0}{3} \arrives \in \mathit{cl}(\eqref{ex:press:moving})$.
Following Proposition~\ref{prop:alternative:semantics}, the formulas $\moving \metric{\until}{0}{3} \arrives$ and $\left(\arrives \vee \left(\moving \wedge \left( \metric{\next}{1}{1}\left(\moving \metric{\until}{0}{2} \arrives\right) \vee \metric{\next}{2}{2}\left(\moving \metric{\until}{0}{1} \arrives\right) \vee \metric{\next}{3}{3}  \left(\moving \metric{\until}{0}{0} \arrives\right)\right)\right)\right)$ are $\MHTf$-equivalent.
Our translation replaces
$\moving \metric{\until}{0}{3} \arrives$,
$\metric{\next}{1}{1}\left(\moving \metric{\until}{0}{2} \arrives\right)$,
$\metric{\next}{2}{2}\left(\moving \metric{\until}{0}{1} \arrives\right)$ and
$\metric{\next}{3}{3}  \left(\moving \metric{\until}{0}{0} \arrives\right)$ by fresh atoms and then
adds extra formulas that ensure the relationship between those and the formulas they replace.
Although the proof strategy is to proceed by structural induction, it is not directly possible here since, for instance,
$\metric{\next}{1}{1}\left(\moving \metric{\until}{0}{2} \arrives\right)$ is not a proper subformula of $\moving \metric{\until}{0}{3} \arrives$.
We aim to define a well-founded strict partial order that takes into account not only the complexity of the input formula but also the unfolding of the $\metric{\until}{0}{n}$, $\metric{\release}{0}{n}$, $\metric{\since}{0}{n}$ and $\metric{\trigger}{0}{n}$ operators. We associate with each metric temporal formula $\varphi$ a natural number $\dist(\varphi)$ that is used to define our well-founded strict partial order.
\begin{definition} For a given metric formula $\varphi$ in the language, we define $\dist(\varphi)$, that captures the unfolding of the formulas $\varphi\metric{\until}{0}{n} \psi$, $\varphi\metric{\release}{0}{n} \psi$, $\varphi\metric{\since}{0}{n} \psi$ and $\varphi\metric{\trigger}{0}{n} \psi$, as follows:
	\begin{itemize}
		\item $\dist(\bot) = \dist(\top) = 1$;
		\item $\dist(a) = 1$, for all $a \in \PV$;
		\item $\dist(\varphi \otimes \psi) = 1 + \dist(\varphi) + \dist(\psi)$, with $\otimes \in \lbrace \wedge, \vee, \rightarrow, \until, \release, \since,\trigger \rbrace$;
		\item $\dist(\oplus \varphi) = 1+\dist(\varphi)$, with $\oplus \in \lbrace\next, \previous, \wnext, \wprevious, \metricI{\next}, \metricI{\previous}, \metricI{\wnext}, \metricI{\wprevious}, \eventuallyF, \eventuallyP, \alwaysF, \alwaysP \rbrace$;
		\item $\dist(\varphi \metric{\otimes}{0}{n} \psi) = 2*(n+1)+\dist(\varphi) + \dist(\psi)$, with $\otimes \in \lbrace  \until, \release, \since,\trigger \rbrace$;
		\item $\dist(\metric{\oplus}{0}{n} \varphi)= 2*(n+1)+\dist(\varphi)$, with $\oplus \in \lbrace \eventuallyF, \eventuallyP, \alwaysF, \alwaysP \rbrace$.\qed
	\end{itemize}
\end{definition}
Given two metric formulas, we say that $\varphi \prec_{\dist} \psi$ if and only if $\dist(\varphi) < \dist(\psi)$. 
Since the previous definition is given as a mapping into the natural numbers, we can readily prove the following result.
\begin{proposition}\label{prop:wfspo}
	$\prec_{\dist}$ is a well-founded strict partial order.\qed
\end{proposition}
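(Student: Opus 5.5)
The plan is to reduce the claim to the corresponding properties of the strict order $<$ on $\mathbb{N}$, since $\prec_{\dist}$ is by definition the inverse image of $<$ under the map $\dist$. The first step is to check that $\dist$ is a well-defined total function from metric formulas into $\mathbb{N}$. I would argue this by structural induction on formulas. Every clause of the definition computes $\dist$ of a formula from the values of $\dist$ on its immediate subformulas, together with the natural number $n$ occurring in a metric interval ${\le}\,n$. Each value is therefore a natural number, and each is at least $1$.

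One point needs care: metric operators with interval $\intervcc{0}{n}$ fall under two clauses at once. They match the generic clause ``$\otimes \in \{\until,\release,\since,\trigger\}$'' and also the clause for $\metric{\otimes}{0}{n}$. I would fix the convention that the more specific metric clause applies to operators carrying an explicit interval $\intervcc{0}{n}$. The generic clause then covers only the non-metric operators. Similarly, derived operators such as $\top$ and $\alwaysF$ are either treated as the abbreviations they stand for or assigned the listed values directly. For the claim itself, all that matters is that some fixed function into $\mathbb{N}$ is chosen, so this ambiguity does not threaten the result.

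Given that $\dist$ is a function into $\mathbb{N}$, the order properties follow directly from those of $<$:
\begin{itemize}
\item \emph{Irreflexivity:} $\varphi \prec_{\dist} \varphi$ would mean $\dist(\varphi)<\dist(\varphi)$, which is impossible.
\item \emph{Transitivity:} $\varphi\prec_{\dist}\psi$ and $\psi\prec_{\dist}\chi$ give $\dist(\varphi)<\dist(\psi)<\dist(\chi)$, hence $\varphi\prec_{\dist}\chi$.
\item \emph{Asymmetry:} this follows from irreflexivity and transitivity.
\end{itemize}

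For well-foundedness, I would argue by contradiction. Suppose there is an infinite descending chain $\varphi_0 \succ_{\dist} \varphi_1 \succ_{\dist} \cdots$. Applying $\dist$ yields a strictly decreasing sequence $\dist(\varphi_0) > \dist(\varphi_1) > \cdots$ in $\mathbb{N}$, which cannot exist. Equivalently, every nonempty set $S$ of formulas has a $\prec_{\dist}$-minimal element: take any $\varphi\in S$ whose $\dist(\varphi)$ equals the least element of $\dist(S)\subseteq\mathbb{N}$.

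There is no substantial obstacle here. The only delicate step is the well-definedness of $\dist$ under the overlapping clauses, which the convention above resolves. I would also note, for later use in the correctness proof, the intended decrease: $\dist(\metric{\next}{x}{x}(\varphi\metric{\until}{0}{n-x}\psi)) = 1 + 2(n-x+1)+\dist(\varphi)+\dist(\psi) < 2(n+1)+\dist(\varphi)+\dist(\psi)$ for $x\ge 1$, and the analogous inequality holds for $\release$, $\since$ and $\trigger$. This is not needed for the claim, but it confirms that $\prec_{\dist}$ is the right order for the induction that follows.
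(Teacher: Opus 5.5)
Your proposal is correct and matches the paper's essentially one-line justification: $\prec_{\dist}$ is the inverse image of $<$ on $\mathbb{N}$ under $\dist$, so irreflexivity, transitivity and well-foundedness carry over directly from $\mathbb{N}$. Your care in fixing a convention for overlapping clauses and for derived operators such as $\top$ and $\alwaysF$ is a sensible point the paper leaves implicit; their listed $\dist$-values differ from those obtained by unfolding the abbreviations, but this does not affect the result.
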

Regarding our running example, we get $\dist(\moving \metric{\until}{0}{3} \arrives) = 10$, $\dist(\metric{\next}{1}{1}\left(\moving \metric{\until}{0}{2} \arrives\right))= 9$,   $\dist(\metric{\next}{2}{2}\left(\moving \metric{\until}{0}{1} \arrives\right))= 7$ and $\dist(\metric{\next}{3}{3}  \left(\moving \metric{\until}{0}{0} \arrives\right))=5$.
Consequently, $\metric{\next}{1}{1}\left(\moving \metric{\until}{0}{2} \arrives\right) \prec_{\dist} \moving \metric{\until}{0}{3} \arrives$, $ \metric{\next}{2}{2}\left(\moving \metric{\until}{0}{1} \arrives\right) \prec_{\dist} \moving \metric{\until}{0}{3} \arrives$ and $ \metric{\next}{3}{3}  \left(\moving \metric{\until}{0}{0} \arrives\right) \prec_{\dist} \moving \metric{\until}{0}{3} \arrives$, which allows us to apply the induction on the unfolding of binary metric temporal operators.
Thanks to Proposition~\ref{prop:wfspo},
we can use $\dist$ to provide an induction hypothesis, which is not based on the typical structural induction.

The following theorem states that an arbitrary metric temporal formula $\varphi$\footnote{We assume that the binary metric temporal operators occurring in $\varphi$ are restricted to intervals of the form $[0..n]$.} over an alphabet $\PV$
can be compiled into a metric temporal logic program $\Gamma$ over an extended alphabet $\PV_{\Lab{}}\supseteq \PV$ in such a way
that there exists a bijection between the timed \HT{}-traces of length $\rangeco{\lambda}{1}{\omega}$ satisfying $\varphi$ and the traces of length $\lambda$ satisfying $\Gamma$ projected onto $\PV$.
That is, we have \emph{equivalence modulo auxiliary atoms}.
\begin{theorem}[Metric temporal logic program reduction]\label{thm:normalform:ht}
	For every metric temporal formula $\varphi$\footnotemark[\value{footnote}] over $\PV$ and for any $\rangeco{\lambda}{1}{\omega}$, we have that
	\begin{equation}
	 \MHT(\varphi,\lambda) = \lbrace  \M|_\mathcal{A} \mid  \M \in \MHT(\lbrace\Lab{\varphi} \rbrace \cup \eta^*(\varphi), \lambda) \rbrace.\label{ht:statement}
	\end{equation}
\qed
\end{theorem}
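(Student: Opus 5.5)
The plan is to work entirely in the three-valued semantics of Section~\ref{sec:three-valued}: by Proposition~\ref{prop:equivalence:semantics}, it suffices to show that for every three-valued interpretation $\bm{m}$ over $\PV$ there is exactly one extension $\bm{m'}$ to $\PV_{\Lab{}}$ satisfying $\eta^*(\varphi)$ (value $2$ at step $0$), and that this $\bm{m'}$ satisfies $\trivalp{0}{\Lab{\varphi}}=2$ iff $\trival{0}{\varphi}=2$. Here $\eta^*(\varphi)$ is the union of $\eta^*(\mu)$ over all $\mu\in\cl{\varphi}$, which is finite by Proposition~\ref{prop:closure:finite}.

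\emph{Step 1: replace $\eta^*$ by $\eta$.} For each $\mu\in\cl{\varphi}$, I first show that $\eta^*(\mu)$ and $\eta(\mu)$ have the same \MHT\ models. For the Boolean cases and for since, trigger, until and release, this is routine:
\begin{itemize}
\item an initial rule together with the same rule under $\wnext\alwaysF$ is equivalent to the rule under $\alwaysF$;
\item a biconditional $\Lab{\mu}\leftrightarrow(\Lab{\psi}\vee(\Lab{\varphi}\wedge D))$ splits into the listed implications, using distributivity of $\wedge,\vee$ (valid in three-valued Gödel logic, since $\min$ and $\max$ distribute);
\item $(A\vee B)\to C$ splits into $A\to C$ and $B\to C$.
\end{itemize}
For the next and weak-next rows, I would check pointwise at each $k$ that the past-only rules stated at $k{+}1$ force $\trivalp{k}{\Lab{\mu}}=\trivalp{k}{\metricI{\next}\Lab{\varphi}}$. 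The cases are: $k+1=\lambda$, handled by the $\finally$ rule; $\tau(k{+}1)-\tau(k)\notin I$, handled by $\neg\metricI{\previous}\top\to\neg\previous\Lab{\mu}$; and otherwise, the two implications give equality of values. In the last case one needs $\trival{k+1}{\metricI{\previous}\top}=2$, so that the conjunct $\metricI{\previous}\top$ does not perturb the three-valued implication. The previous and weak-previous rows are handled the same way.

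\emph{Step 2: uniqueness/correctness of labels.} Take any $\bm{m'}$ satisfying $\eta(\mu)$ for all $\mu\in\cl{\varphi}$. I would prove
\[
\trivalp{k}{\Lab{\mu}}=\trival{k}{\mu}\qquad\text{for all } \rangeco{k}{0}{\lambda} \text{ and all } \mu\in\cl{\varphi},
\]
by well-founded induction on $\prec_{\dist}$ (Proposition~\ref{prop:wfspo}).
\begin{itemize}
\item Atoms are trivial.
\item For Boolean and unary operators, the immediate subformulas have smaller $\dist$, and $\eta(\mu)$ is a biconditional with the defining clause of $\bm{m}$.
\item For $\mu=\varphi'\metric{\until}{0}{n}\psi$, the label of $\mu$ is tied to $\Lab{\psi}$, $\Lab{\varphi'}$ and the labels of $\metric{\next}{x}{x}(\varphi'\metric{\until}{0}{(n-x)}\psi)$ for $\rangecc{x}{1}{n}$. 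All of these lie in the closure and have strictly smaller $\dist$: the latter has $\dist$ equal to $1+2(n-x+1)+\dots<2(n+1)+\dots$. So the induction hypothesis applies, and Proposition~\ref{prop:alternative:semantics} closes the case.
\item Release, since and trigger are symmetric.
\end{itemize}
The induction must be uniform in $k$. This is why I do not induct on $k$: the hypothesis is stated for all positions at once, so the next-case appeal to step $k{+}1$ is harmless.

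\emph{Step 3: existence and projection.} Given $\bm{m}$, define $\bm{m'}(\tau)(k,\Lab{\mu})=\trival{k}{\mu}$. By Proposition~\ref{prop:alternative:semantics} and the defining clauses, every biconditional in $\eta(\mu)$ evaluates to $2$ at all steps, so $\bm{m'}$ satisfies $\eta^*(\varphi)$ by Step 1. Then $\Lab{\varphi}$ holds at $0$ iff $\varphi$ does. Combined with Step 2, this gives both inclusions of \eqref{ht:statement}, and restriction to $\PV$ is a bijection.

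The step I expect to be the main obstacle is Step 1 for the next/weak-next rows (and similarly the weak-previous row with its $\metricI{\wprevious}\top$ guard). There, future information is encoded by past-only rules evaluated one step later, and the three-valued behaviour of implications with the guard conjunct must be checked carefully, including the boundary step $k=\lambda-1$ for finite traces.
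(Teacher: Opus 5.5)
Your architecture is the paper's. Step~3 is Lemma~\ref{lem:nf1} (set $\trivalp{k}{\Lab{\mu}}=\trival{k}{\mu}$). Step~2 is Lemma~\ref{lem:nf2} (induction on $\prec_{\dist}$ plus Proposition~\ref{prop:alternative:semantics}). The passage from $\eta^*$ to $\eta$ is what the paper asserts without proof, and it is exactly where your plan breaks.

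You were right to flag the weak-previous row. As printed in Table~\ref{tbl:translation:mnext-mprevious} it is strictly stronger than $\eta(\mu)$, so ``handled the same way'' is false. By definition $\metricI{\wprevious}\top$ is $\metricI{\previous}\top\to\metricI{\previous}\top$, hence $\trival{k}{\metricI{\wprevious}\top}=2$ at every $k$. Unlike the guard $\metricI{\previous}\top$ in the next rows, it never switches off, and the first rule reduces to $\wnext\alwaysF(\Lab{\mu}\to\previous\Lab{\varphi})$. At any $k\ge 1$ with $\tau(k)-\tau(k-1)\notin I$, the second rule forces $\trivalp{k}{\Lab{\mu}}=2$. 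The first rule then forces $\trivalp{k-1}{\Lab{\varphi}}=2$, which $\eta(\mu)$ does not require.

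Concretely, take $\varphi=\metric{\wprevious}{1}{1}p$, $\lambda=2$, $\tau(0)=0$, $\tau(1)=2$, and $p$ false at both steps. Then $\M,0\models\varphi$, yet no extension satisfies $\lbrace\Lab{\varphi}\rbrace\cup\eta^*(\varphi)$. The same spurious constraint is generated by the formulas $\metric{\wprevious}{x}{x}(\varphi'\metric{\trigger}{0}{(n-x)}\psi)$ in the closure of any trigger with $n\ge 1$. So the `$\subseteq$' half of the statement fails for $\eta^*$ as printed. Your Step~1, and Step~3 which relies on it, only go through once the row is repaired. One fix is to use the guard $\metricI{\previous}\top$; another is the rule $\wnext\alwaysF(\Lab{\mu}\to\metricI{\wprevious}\Lab{\varphi})$, whose head is an admissible literal. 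With either fix your pointwise check is the same as for the next rows, and the rest of your argument is sound.

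Two smaller points. First, Step~2 does not treat the non-metric binary operators $\until,\release,\since,\trigger$, which are part of the grammar. Their $\eta$ ties $\Lab{\mu}$ at $k$ to $\Lab{\mu}$ itself at $k\pm 1$, so $\dist$ does not decrease. Contrary to your remark, you therefore do need an inner induction on $k$: downward from $\lambda-1$ for $\until$ and $\release$ (this is where finiteness of $\lambda$ is essential), and upward from $0$ for $\since$ and $\trigger$. The paper delegates these cases to earlier work. Second, ``exactly one extension'' and ``restriction to $\PV$ is a bijection'' are false as stated, since labels of formulas outside $\cl{\varphi}$ are unconstrained. Existence plus Step~2 is all the set equality needs.
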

For transforming arbitrary temporal formulas into metric temporal logic programs,
we use a Tseitin-style reduction~\citep{tseitin68a} that relies on an alphabet extended
by new atoms for each formula in the original language.
The equivalence result in Theorem~\ref{thm:normalform:ht} is then obtained after removing auxiliary atoms and,
in fact, is still preserved inside the context of a larger theory for the original vocabulary.
The results presented in Theorem~\ref{thm:normalform:ht} are preserved when selecting the minimal models.
This means that there exists a bijection between the metric equilibrium models of $\varphi$ and $\Gamma$ projected onto $\PV$.
That is, we have \emph{strong equivalence modulo auxiliary atoms}.
\begin{corollary}\label{thm:normalform:eq}
	For every metric temporal formula $\varphi$\footnotemark[\value{footnote}] over $\PV$, for any $\rangeco{\lambda}{1}{\omega}$ and any theory $\Gamma$ over $\PV$,
	\begin{equation}
		\MEL(\lbrace \varphi \rbrace \cup \Gamma,\lambda) = \{\tuple{(\Htrace,\Ttrace),\tau}|_\mathcal{A} \mid \tuple{(\Htrace,\Ttrace),\tau} \in \MEL(\lbrace \Lab{\varphi} \rbrace \cup \eta^*(\varphi) \cup \Gamma,\lambda)\}.\label{eq:statement}
	\end{equation}\qed
\end{corollary}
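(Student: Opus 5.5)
We prove Corollary~\ref{thm:normalform:eq} from Theorem~\ref{thm:normalform:ht} by strengthening the theorem to a statement about individual \HT-traces and then checking that this correspondence preserves the minimality condition defining equilibrium models.

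\emph{Strengthened form.} Let $\PV_{\Lab{}}$ be the extended alphabet. The proof of Theorem~\ref{thm:normalform:ht} should give more than equality of the two sets. Fix $\tau$ and a timed \HT-trace $\M=(\tuple{\H,\T},\tau)$ over $\PV$. Define its extension $\M^*$ over $\PV_{\Lab{}}$ by setting, in the three-valued reading, $\trival{k}{\Lab{\mu}} = \trival{k}{\mu}$ for every $\mu\in\cl{\varphi}$ and every $k$. Then the following hold.
\begin{enumerate}
\item[(i)] If $\M,0\models\varphi$, then $\M^*$ is a model of $\{\Lab{\varphi}\}\cup\eta^*(\varphi)$.
\item[(ii)] Every model $\M'$ of $\{\Lab{\varphi}\}\cup\eta^*(\varphi)$ satisfies $\trival{k}{\Lab{\mu}}=\trival{k}{\mu}$ for all $\mu\in\cl{\varphi}$ and all $k$, evaluated in $\M'$. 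Hence $\M'=(\M'|_\PV)^*$, and $\M'|_{\PV},0\models\varphi$.
\end{enumerate}
Both items are proved by $\prec_\dist$-induction (Proposition~\ref{prop:wfspo}), using the three-valued semantics of Proposition~\ref{prop:equivalence:semantics}. Propositions~\ref{prop:alternative:semantics} and Tables~\ref{tbl:translation:mnext-mprevious}--\ref{tbl:translation:muntil-mrelease} supply the cases. Item (ii) is the uniqueness of the auxiliary labelling. It says that the map $\M\mapsto\M^*$ is a bijection between $\MHT(\varphi,\lambda)$ and $\MHT(\{\Lab{\varphi}\}\cup\eta^*(\varphi),\lambda)$, with inverse given by restriction to $\PV$.

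\emph{Adding the context $\Gamma$.} Since $\Gamma$ is over $\PV$, a trace $\M'$ over $\PV_{\Lab{}}$ satisfies $\Gamma$ iff $\M'|_\PV$ does. So the bijection restricts to one between $\MHT(\{\varphi\}\cup\Gamma,\lambda)$ and $\MHT(\{\Lab{\varphi}\}\cup\eta^*(\varphi)\cup\Gamma,\lambda)$. It also maps total traces to total traces: if $\H=\T$, then every value $\trival{k}{\mu}\in\{0,2\}$, so the labels in $\M^*$ are also total.

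\emph{Minimality transfer.} Take a total $\M=(\tuple{\T,\T},\tau)$ that is not in equilibrium for $\{\varphi\}\cup\Gamma$. Then some $\H\neq\T$ gives a model $(\tuple{\H,\T},\tau)$. Its extension $(\tuple{\H,\T},\tau)^*$ is a model of the translated theory and differs from $(\tuple{\T,\T},\tau)^*$ already on $\PV$. So $\M^*$ is not in equilibrium for the translated theory either.

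Conversely, suppose $(\tuple{\H',\T'},\tau)$ is a model of the translated theory with $\H'\neq\T'$ and $\T'$ the there-component of $\M^*$. By (ii), its labels are determined by its $\PV$-part. If $\H'|_\PV=\T'|_\PV$, the labels therefore coincide with those of the total trace, which gives $\H'=\T'$, a contradiction. Hence $\H'|_\PV\neq\T|_\PV$, and its restriction to $\PV$ is a strictly smaller model of $\{\varphi\}\cup\Gamma$.

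Combining these facts with the bijection yields~\eqref{eq:statement}.

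\emph{Main obstacle.} The hard step is (ii): the auxiliary atoms must be functionally determined in \emph{every} \HT-model, not merely in total ones. This matters most for the $\eta^*$ rules of $\metric{\until}{0}{n}$ and $\metric{\release}{0}{n}$, where future references are encoded through $\previous$. Here I would argue, for each $k$, that the rule pairs ($\Lab{\mu}\to\dots$ and $\dots\to\Lab{\mu}$) force both $\trival{k}{\Lab{\mu}}\le$ and $\ge$ the value given by Proposition~\ref{prop:alternative:semantics}. The inductive hypothesis applies to the $\prec_\dist$-smaller formulas $\metric{\next}{x}{x}(\cdot)$, and the rules of Table~\ref{tbl:translation:mnext-mprevious} tie these to later states.
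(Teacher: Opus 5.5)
Your proposal is correct and is essentially the argument the paper intends. The paper states the corollary without a separate proof, as a consequence of Theorem~\ref{thm:normalform:ht} together with Lemmas~\ref{lem:nf1} and~\ref{lem:nf2}; these are your items (i) and (ii), in particular that labels are determined in \emph{every} \HT-model and not only in total ones, so the label extension respects the here/there ordering and minimality transfers both ways.

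Two details should be made explicit:
\begin{itemize}
\item In the first half of your minimality transfer you need $(\tuple{\H,\T},\tau)^*$ to have the \emph{same} there-component as $(\tuple{\T,\T},\tau)^*$. This holds because, by Proposition~\ref{prop:equivalence:semantics}, $\trival{k}{\mu}\neq 0$ iff $(\tuple{\T,\T},\tau),k\models\mu$.
\item The paper's $\PV_{\Lab{}}$ contains labels for \emph{all} formulas, and those outside $\cl{\varphi}$ are unconstrained. So $\M'=(\M'|_{\PV})^*$ and your \HT-level bijection hold only if you restrict the alphabet to $\PV\cup\{\Lab{\mu}\mid\mu\in\cl{\varphi}\}$. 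Alternatively, note that such labels are false in every equilibrium model and set them false in $\M^*$.
\end{itemize}
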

Theorem~\ref{thm:normalform:ht} and Corollary~\ref{thm:normalform:eq} can be extended to arbitrary theories by defining the translation $\sigma$ as the metric temporal logic program:
\begin{align*}
	\sigma(\Gamma) =
	\left\lbrace \Lab{\gamma} \mid \gamma \in \Gamma \right\rbrace
	\cup \left\lbrace \eta^*(\mu)
	\mid \mu \in cl(\Gamma)\right\rbrace
\end{align*}
and then replacing $\lbrace \Lab{\varphi} \rbrace \cup \eta^*(\varphi)$ by $\sigma(\Gamma)$ in expressions~\eqref{ht:statement} and~\eqref{eq:statement}.
Table~\ref{tbl:translation:example} shows the translation of the expression in~\eqref{ex:press:moving},
which is done by applying the translation $\eta(\mu)$ for each $\mu\in \mathit{cl}(\eqref{ex:press:moving})$.
A metric temporal logic program can be obtained by using $\eta^*(\mu)$ instead.
\begin{table}[h!]\centering
	\caption{Translation of~\eqref{ex:press:moving} into a metric temporal logic program. Equivalences of the form $\Lab{\varphi}\leftrightarrow p$, with $p\in \PV$, are not shown and $p$ is used instead of $\Lab{p}$.}
	\label{tbl:translation:example}
		{\tablefont \begin{tabular}{@{\extracolsep{\fill}}cc}\topline
				$\mu\in \mathit{cl}(\eqref{ex:press:moving})$ & $\eta(\mu)$  \\\hline
				$\mytag{(F1)}$ & $\wnext \alwaysF\left(\previous \Lab{\mytag{(F1)}} \leftrightarrow \previous\Lab{\mytag{(F2)}} \wedge \Lab{\mytag{(F1)}}\right)$, $\alwaysF\left(\finally \to \left(\Lab{\mytag{(F1)}}\leftrightarrow \Lab{\mytag{(F2)}}  \right)\right)$\\\hline
				$\mytag{(F2)}$ &  $\alwaysF \left(\Lab{\mytag{(F2)}} \leftrightarrow \left(\press \to \Lab{\mytag{(F3)}}\right)\right)$\\\hline
				$\mytag{(F3)}$ &  $\alwaysF \left(\Lab{\mytag{(F3)}} \leftrightarrow a \vee \left( m \wedge \left(\Lab{\mytag{(F7)}} \vee \Lab{\mytag{(F5)}}\vee \Lab{\mytag{(F4)}} \right)\right)   \right)$  \\\hline
				$\mytag{(F4)}$ &  $\wnext \alwaysF \left(\metric{\previous}{3}{3}\top \to \left(\previous \Lab{\mytag{(F4)}} \leftrightarrow a\right) \right)$, $\wnext \alwaysF \left( \neg \metric{\previous}{3}{3} \top \rightarrow \neg \previous \Lab{\mytag{(F4)}} \right)$, $\alwaysF\left(\finally \to \neg \Lab{\mytag{(F4)}}\right)$  \\\hline
				$\mytag{(F5)}$ &    $\wnext \alwaysF \left(\metric{\previous}{2}{2}\top \to \left(\previous \Lab{\mytag{(F5)}} \leftrightarrow \Lab{\mytag{(F11)}}\right) \right)$, $\wnext \alwaysF \left( \neg \metric{\previous}{2}{2} \top \rightarrow \neg \previous \Lab{\mytag{(F5)}} \right)$, $\alwaysF\left(\finally \to \neg \Lab{\mytag{(F5)}}\right)$  \\\hline
				$\mytag{(F6)}$ &    $\wnext \alwaysF \left(\metric{\previous}{2}{2}\top \to \left(\previous \Lab{\mytag{(F6)}} \leftrightarrow a\right) \right)$, $\wnext \alwaysF \left( \neg \metric{\previous}{2}{2} \top \rightarrow \neg \previous \Lab{\mytag{(F6)}} \right)$, $\alwaysF\left(\finally \to \neg \Lab{\mytag{(F6)}}\right)$  \\\hline
				$\mytag{(F7)}$ &   $\wnext \alwaysF \left(\metric{\previous}{1}{1}\top \to \left(\previous \Lab{\mytag{(F7)}} \leftrightarrow \Lab{\mytag{(F10)}}\right) \right)$, $\wnext \alwaysF \left( \neg \metric{\previous}{1}{1} \top \rightarrow \neg \previous \Lab{\mytag{(F7)}} \right)$, $\alwaysF\left(\finally \to \neg \Lab{\mytag{(F7)}}\right)$  \\\hline
				$\mytag{(F8)}$ &    $\wnext \alwaysF \left(\metric{\previous}{1}{1}\top \to \left(\previous \Lab{\mytag{(F8)}} \leftrightarrow \Lab{\mytag{(F11)}}\right) \right)$, $\wnext \alwaysF \left( \neg \metric{\previous}{1}{1} \top \rightarrow \neg \previous \Lab{\mytag{(F8)}} \right)$, $\alwaysF\left(\finally \to \neg \Lab{\mytag{(F8)}}\right)$  \\\hline
				$\mytag{(F9)}$ &    $\wnext \alwaysF \left(\metric{\previous}{1}{1}\top \to \left(\previous \Lab{\mytag{(F9)}} \leftrightarrow a\right) \right)$, $\wnext \alwaysF \left( \neg \metric{\previous}{1}{1} \top \rightarrow \neg \previous \Lab{\mytag{(F9)}} \right)$, $\alwaysF\left(\finally \to \neg \Lab{\mytag{(F9)}}\right)$  \\\hline
				$\mytag{(F10)}$ &   $\alwaysF \left(\Lab{\mytag{(F10)}} \leftrightarrow a \vee \left( m \wedge \left(\Lab{\mytag{(F8)}} \vee \Lab{\mytag{(F6)}}\right)\right)   \right)$  \\\hline
				$\mytag{(F11)}$ &  $\alwaysF \left(\Lab{\mytag{(F11)}} \leftrightarrow a \vee \left( m \wedge \Lab{\mytag{(F9)}}\right)   \right)$  \\\hline
\end{tabular}}
\end{table}
The rest of this section is devoted to proving Theorem~\ref{thm:normalform:ht},
for which we extend Cabalar's reduction for the case \THTo~\citep{cabalar10a} to the case of~\MHTf.
The reduction uses an extended alphabet $\PV_{\Lab{}}\supseteq\mathcal{A}$
that additionally contains a new atom \Lab{\varphi} (aka label) for each formula $\varphi$ in the original language over $\mathcal{A}$.
For convenience, we use
$\Lab{\varphi} \overset{\mathit{def}}{=} \varphi $ if $\varphi$ is $\top , \bot$ or an atom $a \in \mathcal{A}$.
For any non-atomic formula $\gamma$ over $\mathcal{A}$,
we introduce the translation $\eta$ given along
Tables~\ref{tbl:translation:mnext-mprevious}, \ref{tbl:translation:msince-mtrigger}, and \ref{tbl:translation:muntil-mrelease},
and call $\eta(\gamma)$ the \emph{definition} of $\gamma$.
(Due to space constraints, we omit the well-known translation rules for Boolean and non-metric operators here;
these are provided in Tables~\ref{tbl:tseitin:prop} and~\ref{tbl:tseitin:ltl} of the supplementary material.)
This definition is not a logic program yet,
but it contains some double implication that can be easily transformed into
a metric temporal logic program format by simple, non-modal transformations in the propositional logic of \HT.
This translation of $\eta(\mu)$ into a metric temporal logic program is shown in the tables as $\eta^*(\mu)$.
In the rest of the section,
we indistinctly use $\eta(\mu)$ and $\eta^*(\mu)$, as the former is more suitable for developing the proofs.
As said before, their equivalence can be easily tested inside logic HT, and so, is not provided in this paper.

Given a trace $\M= \tuple{(\H,\T),\tau}$ of length $\lambda\in \mathbb{N}\cup \lbrace \omega \rbrace$, we define its \emph{projection} onto the alphabet $\mathcal{A}$ as
$\M\mid_{\mathcal{A}} = \tuple{(\H',\T'),\tau}$ where $H'_i = H_i \cap \mathcal{A}$ and $T'_i = T_i \cap \mathcal{A}$, for all $\rangeco{i}{0}{\lambda}$.
\begin{lemma}\label{lem:nf1}
	Let $\M = \tuple{(\H,\T),\lambda}$ be a \MHT\ model of length $\rangeco{\lambda}{1}{\omega}$ of a formula $\varphi$ over $\PV$.
	Then, there exists some timed \HT-trace $\M'=\tuple{(\H',\T'),\tau}$ over alphabet $\PV_{\Lab{}}$
	such that $\M = \M'|_{\PV}$ and $\M',0 \models \lbrace \Lab{\varphi} \rbrace\cup \eta^*(\varphi)$.\qed
\end{lemma}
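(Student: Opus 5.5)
The plan is to build $\M'$ explicitly by labelling each state with the closure formulas it satisfies, and then to check every definition $\eta(\mu)$ with the three-valued semantics. The model is given as $\M=\tuple{(\H,\T),\tau}$ of length $\lambda$, and $m=f(\M)$ is its three-valued interpretation (Proposition~\ref{prop:equivalence:semantics}). For every $\mu \in cl(\varphi)$ and every $\rangeco{k}{0}{\lambda}$ we set
\[
\Lab{\mu}\in H'_k \text{ iff } \trival{k}{\mu}=2, \qquad \Lab{\mu}\in T'_k \text{ iff } \trival{k}{\mu}\neq 0 .
\]
On atoms of $\PV$ we keep $H'_k\cap\PV=H_k$ and $T'_k\cap\PV=T_k$, and we keep the same $\tau$. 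Persistence of $m$ (a value of $2$ implies a non-zero value) gives $H'_k\subseteq T'_k$, so $\M'$ is a well-defined timed \HT-trace, and $\M'|_{\PV}=\M$ holds by construction. Writing $m'=f(\M')$, the whole point of the construction is the identity $\trivalp{k}{\Lab{\mu}}=\trival{k}{\mu}$ for all $\mu\in cl(\varphi)$ and all $k$. Since $\M,0\models\varphi$, this gives $\trivalp{0}{\Lab{\varphi}}=2$, that is, $\M',0\models\Lab{\varphi}$.

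It remains to show $\M',0\models\eta(\mu)$ for every $\mu\in cl(\varphi)$; as the paper notes, $\eta$ and $\eta^*$ are HT-equivalent, so this suffices. Because every label is evaluated exactly like the formula it names, each definition reduces to an identity in the three-valued semantics. Formulas in $\eta(\mu)$ have the shapes $\alwaysF(\alpha\leftrightarrow\beta)$, $\wnext\alwaysF(\cdots)$ and $\alwaysF(\finally\to\cdots)$. By the remark after the three-valued semantics, such a formula gets value $2$ exactly when $\trivalp{k}{\alpha}=\trivalp{k}{\beta}$ at the relevant positions $k$. We then treat the operators one by one.

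\begin{itemize}
\item Boolean connectives are immediate from the min/max/implication clauses.
\item $\metricI{\previous}$ and $\metricI{\wprevious}$: compare $\trivalp{k}{\Lab{\mu}}$ with $\trivalp{k}{\metricI{\previous}\Lab{\varphi}}$, which equals $\trival{k-1}{\varphi}$ or $0$ (resp.\ $2$) according to whether $\tau(k)-\tau(k-1)\in\cI$. The initial rule $\neg\Lab{\mu}$ (resp.\ $\Lab{\mu}$) matches the $k=0$ case.
\item $\metricI{\next}$ and $\metricI{\wnext}$ are checked from the successor state. For $k\ge1$: if $\tau(k)-\tau(k-1)\in\cI$, then $\trivalp{k-1}{\Lab{\mu}}=\trival{k}{\varphi}=\trivalp{k}{\Lab{\varphi}}$. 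Otherwise $\Lab{\mu}$ takes value $0$ (resp.\ $2$) at $k-1$, which is what the rule $\neg\metricI{\previous}\top\to\neg\previous\Lab{\mu}$ (resp.\ $\to\previous\Lab{\mu}$) demands. The final rule covers $k=\lambda-1$.
\item $\metric{\until}{0}{n}$, $\metric{\release}{0}{n}$, $\metric{\since}{0}{n}$ and $\metric{\trigger}{0}{n}$: the biconditional in $\eta(\mu)$ is exactly the identity of Proposition~\ref{prop:alternative:semantics}, with each $\metric{\next}{x}{x}(\varphi\metric{\until}{0}{(n-x)}\psi)$ and its analogues replaced by its label. These unfolded formulas lie in $cl(\varphi)$ by closure conditions 3--6, so their labels are evaluated correctly by the identity above, and the equivalence holds at every $k$.
\end{itemize}

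The main obstacle is the identity $\trivalp{k}{\Lab{\mu}}=\trival{k}{\mu}$ itself, together with the legitimacy of substituting labels for subformulas inside the right-hand sides of the definitions. One must check that $\trivalp{k}{\Lab{\mu}}$ takes the three values exactly as $\trival{k}{\mu}$ does (value $1$ arising as "in $T'_k$ but not in $H'_k$") and that the rewriting from $\mu$ to its labelled unfolding is sound. The substitution step needs only that each immediate constituent of a definition is itself in $cl(\varphi)$. This holds by closure under subformulas and under the unfoldings, so no induction on $\prec_{\dist}$ is needed for this direction. That induction, and Proposition~\ref{prop:wfspo}, will instead be used in the converse lemma, where the labels are not chosen but forced.
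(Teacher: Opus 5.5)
Your proposal is correct and follows essentially the same route as the paper's proof: define $\M'$ so that $\trivalp{k}{\Lab{\mu}}=\trival{k}{\mu}$ by construction, then verify each $\eta(\mu)$ case by case in the three-valued semantics, using Proposition~\ref{prop:alternative:semantics} for the binary metric operators and no induction on $\prec_{\dist}$. The only differences are cosmetic. The paper fixes labels for all formulas over $\PV$ rather than only those in $cl(\varphi)$, and it explicitly delegates the non-metric temporal cases to prior work, which you leave out; neither affects correctness.
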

\begin{lemma}\label{lem:nf2}
	Let $\varphi$ be a temporal formula over $\mathcal{A}$ and
	let $\tuple{(\H,\T),\tau}$ be a \MHT{} model, with length $\rangeco{\lambda}{1}{\omega}$, of $\lbrace \Lab{\varphi}\rbrace\cup \eta(\varphi)$.
    Let ${\bm m}$ be its associated three-valued interpretation.
	Then for any $\mu \in cl(\varphi)$ and any $\rangeco{k}{0}{\lambda}$,
	we have $\trival{k}{\Lab{\mu}} = \trival{k}{\mu}$.\qed
\end{lemma}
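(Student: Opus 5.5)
We prove the statement by well-founded induction on $\prec_{\dist}$ (Proposition~\ref{prop:wfspo}), restricted to formulas of $cl(\varphi)$. The induction hypothesis is: for every $\mu' \in cl(\varphi)$ with $\mu' \prec_{\dist} \mu$, $\trival{k}{\Lab{\mu'}}=\trival{k}{\mu'}$ holds for \emph{all} $\rangeco{k}{0}{\lambda}$ simultaneously. We read $\eta(\varphi)$ as the union of the definitions $\eta(\mu)$ for $\mu\in cl(\varphi)$, so that $\M$ satisfies every $\eta(\mu)$ at step $0$. By Proposition~\ref{prop:equivalence:semantics}, any conjunct of the form $\alwaysF(\alpha\leftrightarrow\beta)$ then gives $\trival{k}{\alpha}=\trival{k}{\beta}$ for all $k$. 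Likewise $\wnext\alwaysF(\alpha\leftrightarrow\beta)$ gives this equality for all $k\geq 1$, an initial formula $\alpha$ gives $\trival{0}{\alpha}=2$, and $\alwaysF(\finally\to\alpha)$ gives $\trival{\lambda-1}{\alpha}=2$ (recall $\lambda$ is finite).

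The base case, where $\mu$ is an atom, $\top$ or $\bot$, is immediate because $\Lab{\mu}=\mu$.

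\emph{Boolean connectives.} The definition $\alwaysF(\Lab{\mu}\leftrightarrow \Lab{\alpha}\otimes\Lab{\beta})$ yields $\trival{k}{\Lab{\mu}}=\trival{k}{\Lab{\alpha}\otimes\Lab{\beta}}$. This value is computed by the three-valued clauses from $\trival{k}{\Lab{\alpha}}$ and $\trival{k}{\Lab{\beta}}$. The induction hypothesis applies since $\dist(\alpha),\dist(\beta)<\dist(\mu)$.

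\emph{Metric next $\mu=\metricI{\next}\alpha$ (Table~\ref{tbl:translation:mnext-mprevious}).} At $k=\lambda-1$ the final rule gives $\trival{k}{\Lab{\mu}}=0=\trival{k}{\mu}$. For $k<\lambda-1$ we instantiate the dynamic rules at step $k+1$.
\begin{itemize}
\item If $\tau(k+1)-\tau(k)\in I$, then $\metricI{\previous}\top$ has value $2$ at $k+1$. The biconditional gives $\trival{k}{\Lab{\mu}}=\trival{k+1}{\Lab{\alpha}}$, which equals $\trival{k+1}{\alpha}=\trival{k}{\mu}$ by the induction hypothesis.
\item Otherwise $\neg\metricI{\previous}\top$ has value $2$, which forces $\trival{k}{\Lab{\mu}}=0=\trival{k}{\mu}$.
\end{itemize}
Weak next is symmetric, with $2$ in place of $0$. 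Metric previous and weak previous use the initial formula at $k=0$ and the $\wnext\alwaysF$ biconditional for $k\ge1$, reading off the three-valued clause for $\metricI{\previous}$.

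\emph{Binary metric operators $\metric{Q}{0}{n}$.} Here the definition equates $\Lab{\mu}$ with $\Lab{\psi}\vee(\Lab{\varphi}\wedge\bigvee_x \Lab{\nu_x})$, or with its dual, at every $k$, where $\nu_x=\metric{\next}{x}{x}(\varphi\metric{\until}{0}{(n-x)}\psi)$ or the analogous past or weak formula. These $\nu_x$ belong to $cl(\varphi)$ by the closure conditions. A direct computation gives
\[
\dist(\nu_x)=1+2(n-x+1)+\dist(\varphi)+\dist(\psi)<2(n+1)+\dist(\varphi)+\dist(\psi)=\dist(\mu)
\]
for $x\ge1$, and clearly $\varphi,\psi\prec_{\dist}\mu$. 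So the induction hypothesis replaces every label by its formula. Proposition~\ref{prop:alternative:semantics} then identifies the resulting value with $\trival{k}{\mu}$. For $n=0$ the disjunction is empty and the remark after that proposition applies. Note that the inner formula $\varphi\metric{\until}{0}{(n-x)}\psi$ is handled not here but inside the next-case for $\nu_x$, where it is again $\prec_{\dist}$-smaller.

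\emph{Main obstacle: non-metric $\until,\release,\since,\trigger$.} Their definitions from the supplementary tables are fixpoint equations. For until, the dynamic rules give $\Lab{\mu}\leftrightarrow\Lab{\psi}\vee(\Lab{\varphi}\wedge\next\Lab{\mu})$ for $k<\lambda-1$, and the final rules give $\Lab{\mu}\leftrightarrow\Lab{\psi}$ at $k=\lambda-1$. Here the label $\Lab{\mu}$ recurs on the right-hand side, so $\dist$-induction does not suffice. For these cases we add an inner induction on $k$.
\begin{itemize}
\item For future operators, it is backward from $k=\lambda-1$. This is where finiteness of $\lambda$ is used: the inf/sup in the semantics then satisfies the same one-step recursion with the same terminal value.
\item For past operators, it is forward from $k=0$, using the initial rule.
\end{itemize}
In both directions the values of $\Lab{\varphi}$ and $\Lab{\psi}$ are already fixed by the outer induction hypothesis. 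The step then reduces to checking that the three-valued value is the unique solution of this recursion.
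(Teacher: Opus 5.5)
Your proposal is correct and follows the paper's proof. Both use well-founded induction on $\prec_{\dist}$ over $cl(\varphi)$ with a case split on the main connective, read each conjunct of $\eta(\mu)$ as a three-valued identity, and close the binary metric cases with Proposition~\ref{prop:alternative:semantics}. The differences are minor: you apply the induction hypothesis directly to $\nu_x$ (justified by your $\dist$ computation) where the paper re-unfolds $\eta(\nu_x)$, and you spell out the inner induction on $k$ for the non-metric $\until$, $\release$, $\since$, $\trigger$, which the paper delegates to the \TEL\ literature.
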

Let us discuss now the proof of Theorem~\ref{thm:normalform:ht}. 
We show that for any metric temporal formula $\varphi$ over $\mathcal{A}$ and any $\rangeco{\lambda}{1}{\omega}$,
\begin{equation*}
\MHT(\varphi,\lambda) = \lbrace  \M|_\mathcal{A} \mid  \M \in \MHT(\lbrace\Lab{\varphi} \rbrace \cup \eta(\varphi), \lambda) \rbrace.
\end{equation*}
The `$\subseteq$' direction follows from Lemma~\ref{lem:nf1}.
For the `$\supseteq$' direction, take some \MHT{} model $\M$ of $\eta(\varphi)$ of length $\rangeco{\lambda}{1}{\omega}$.
This implies that its associated $3$-valuation satisfies $\trivalp{0}{\Lab\varphi}=2$, which is included in $\eta(\varphi)$.
Since $\varphi \in cl(\varphi)$, we can apply Lemma~\ref{lem:nf2} to conclude $\trivalp{k}{\Lab\varphi}=\trivalp{k}{\varphi}$ for all $\rangeco{k}{0}{\lambda}$ but then, in particular, $2=\trivalp{0}{\Lab\varphi}=\trivalp{0}{\varphi}$ and so, $\M$ is also a model of $\varphi$.
Finally, $\M|_\mathcal{A}$ is still a model of $\varphi$ because the latter is a theory over vocabulary $\mathcal{A}$.
The theorem below provides a parametric polynomial bound of the size of $\eta^*(\varphi)$.
\begin{theorem}\label{them:polynomial}
The number of rules in $\eta^*(\varphi)$ has a (parametric) polynomial bound that depends on $n$ and $\lvert \varphi \rvert$,
  where $n$ is the maximum interval value occurring in a subformula of $\varphi$,
  which is of the form $\metric{Q}{0}{n}$, where $Q\in \lbrace \until,\release,\since,\trigger\rbrace$. \qed
\end{theorem}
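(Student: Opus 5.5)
The bound comes from two counts: how many formulas are in the closure, and how many rules each of them contributes. By construction $\eta^*(\varphi)$ is the union of $\eta^*(\mu)$ over all non-atomic $\mu \in cl(\varphi)$. So the number of rules is at most
\[
\lvert cl(\varphi)\rvert \cdot \max\{\lvert \eta^*(\mu)\rvert \mid \mu \in cl(\varphi)\}.
\]
It therefore suffices to bound each factor polynomially in $n$ and $\lvert \varphi \rvert$.

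For the first factor I invoke Proposition~\ref{prop:closure:finite}. The number of formulas in $cl(\varphi)$ is at most their total size, which is at most $2(k_\varphi^2+1)\lvert\varphi\rvert$. Here $k_\varphi=\max\{1,n_\varphi\}\le \max\{1,n\}$, so this factor is $O(n^2\lvert\varphi\rvert)$.

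For the second factor I go case by case through the definitions of $\eta^*$:
\begin{itemize}
\item Boolean and non-metric operators (supplementary tables) contribute a constant number of rules, independent of $n$ and $\varphi$.
\item The metric unary operators $\metricI{\next}$, $\metricI{\wnext}$, $\metricI{\previous}$, $\metricI{\wprevious}$ in Table~\ref{tbl:translation:mnext-mprevious} contribute at most four rules each.
\item For $\mu=\psi_1\metric{Q}{0}{m}\psi_2$ with $Q\in\{\until,\release,\since,\trigger\}$, Tables~\ref{tbl:translation:msince-mtrigger} and~\ref{tbl:translation:muntil-mrelease} give an initial block and a dynamic block, each with at most $m+3$ rules. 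The rule bodies and heads mention up to $m+2$ labels, but each counts as a single rule.
\end{itemize}
The interval bound $m$ of any such $\mu$ in the closure is at most $n$. This holds because closure rules 3--6 only decrease the index to $m-i$, and subformula closure introduces no new indices. Hence $\lvert\eta^*(\mu)\rvert\le 2(n+3)$ for every $\mu$.

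Multiplying gives a bound of order $O\big((n+1)\cdot (k_\varphi^2+1)\lvert\varphi\rvert\big)$, i.e.\ $O(n^3\lvert\varphi\rvert)$, which is polynomial for fixed $n$ and hence parametric polynomial. If one also wants the total size of the rules rather than their number, each rule has length $O(n)$ in labels, adding one more factor of $n$; this is still polynomial.

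The main obstacle is the claim that the maximal interval index in the closure never exceeds $n$. I would prove it by induction along the closure construction: the property ``every binary metric operator has index $\le n$'' holds for $\varphi$ and is preserved by each closure rule. The remaining steps are only careful counting of the table entries.
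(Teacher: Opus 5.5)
Your proof is correct and follows the same route as the paper's: you bound the number of formulas in $cl(\varphi)$ via Proposition~\ref{prop:closure:finite}, bound the rules contributed by each $\mu \in cl(\varphi)$ by a constant for Boolean, non-metric and unary metric operators and by $2n+6$ for the binary metric ones, and multiply. You also make explicit that the closure never raises an interval index above $n$, which the paper leaves implicit. One harmless slip: the implication case in the supplementary table yields $8$ rules, so your uniform bound $\lvert\eta^*(\mu)\rvert \le 2(n+3)$ fails when $n=0$ and should read $\max\{8,2n+6\}$; this does not affect your $O((n+1)(k_\varphi^2+1)\lvert\varphi\rvert)$ conclusion.
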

 \section{Conclusion}

In this paper,
we have described a Tseitin-like reduction from metric temporal logic formulas
into a logic programming format under metric equilibrium logic semantics.
Our current translation only considers intervals of the form $\intervcc{0}{n}$
in the binary metric operators \emph{since} (resp.\ \emph{trigger}) and \emph{until} (resp.\ \emph{release}).
In the future, we expect to extend our translation in order to cope with arbitrary intervals.
If that is possible, 
we would have an input format to which any arbitrary metric temporal theory could be reduced, 
making it easily implementable in ASP.

Our contribution sheds light on the concept of metric temporal logic programs and
its use as a ``normal form'' for interpreting metric temporal theories under answer set semantics.
As future work, we expect to implement this translation in the solver \telingo~\citep{cakamosc19a} and
compare it with other solvers proposed by~\cite{becadiharosc24a,becadirohasc25a}.
In addition,
we would like to explore the notion of splittable metric temporal logic programs as done by~\cite{agcadipescscvi20a} for \TEL.
In the temporal case, those programs possess nice computational methods via the use of splitting and loop formulas.
An interesting research line could be to extend those results to the metric case.
Finally, we would like to extend our reduction to metric extensions of \emph{dynamic equilibrium logic}~\citep{becadifascsc24a}
 \section*{Acknowledgments}
We would like to sincerely thank the reviewers for their careful evaluation, insightful comments, and constructive suggestions, which greatly contributed to improving the quality and clarity of this work, as well as for identifying issues related to the size of the translation. We also extend our gratitude to David Fernández-Duque for his valuable advice, which helped us further improve the paper.
This work has been partially supported by DFG grant SCHA 550/15.

 \bibliographystyle{plainnat} 
 
\appendix
\clearpage{}
\section{Proofs (for Reviewing Purposes)}
\subsection{Proofs of Section~\ref{sec:approach}}

Satisfaction of derived operators can be easily deduced:
\begin{proposition}[\citealp{becadiscsc24a}]\label{prop:satisfaction:tel}
	Let $\M=(\tuple{\H,\T},\tmf)$
	be a timed \HT-trace of length $\lambda$ over \PV.
	Given the respective definitions of derived operators, we get the following satisfaction conditions:
	\begin{enumerate} \setcounter{enumi}{11}
		\item $\M, k \models \initially$
		iff
		$k =0$
		\item $\M, k \models \metricI{\wprevious}\, \varphi$
		iff
		$k =0$ or
		$\M, k{-}1 \models \varphi$
		or $\tmf(k)-\tmf(k{-}1) \not\in \cI$
		\item $\M, k \models \metricI{\eventuallyP}\, \varphi$
		iff
		$\M, i \models \varphi$ for some $\rangecc{i}{0}{k}$
		with
		$\tmf(k)-\tmf(i) \in \cI$
		\item $\M, k \models \metricI{\alwaysP}\, \varphi$
		iff
		$\M, i \models \varphi$ for all $\rangecc{i}{0}{k}$
		with
		$\tmf(k)-\tmf(i) \in \cI$
		\item $\M, k \models \finally$
		iff
		$k+1 = \lambda$
		\item \label{def:mhtsat:wnext} $\M, k \models \metricI{\wnext}\, \varphi$
		iff
		$k+1 = \lambda$
		or $\M, k{+}1 \models \varphi$
		or $\tmf(k{+}1)-\tmf(k) \not\in \cI$
		\item \label{def:mhtsat:eventuallyF} $\M, k \models \metricI{\eventuallyF}\, \varphi$
		iff
		$\M, i \models \varphi$ for some $\rangeco{i}{k}{\lambda}$
		with
		$\tmf(i)-\tmf(k) \in \cI$
		\item \label{def:mhtsat:alwaysF} $\M, k \models \metricI{\alwaysF}\, \varphi$
		iff
		$\M, i \models \varphi$ for all $\rangeco{i}{k}{\lambda}$
		with
		$\tmf(i)-\tmf(k) \in \cI$
		\qed
	\end{enumerate}
\end{proposition}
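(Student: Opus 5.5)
Each clause of Proposition~\ref{prop:satisfaction:tel} follows from the defining equivalence of the derived operator together with the primitive clauses of Definition~\ref{def:mht:satisfaction}. The only non-Boolean facts needed are that $\M,k\models\top$ always holds and $\M,k\models\bot$ never holds. The first is immediate: $\top=\bot\to\bot$, and the implication clause is satisfied because $\bot$ is false in both $\M$ and $(\tuple{\T,\T},\tmf)$. I will use this repeatedly.

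For $\initially=\neg\metric{\previous}{0}{\omega}\top$, the formula $\metric{\previous}{0}{\omega}\top$ holds at $k$ iff $k>0$, because $\tmf(k)-\tmf(k-1)\in\mathbb{N}=\intervcc{0}{\omega}$ always holds. This condition does not depend on $\H$ versus $\T$. The negation clause requires the antecedent to fail in both $\M$ and the total trace, which gives exactly $k=0$. The clause for $\finally$ is the symmetric argument with $\metric{\next}{0}{\omega}$, giving $k+1=\lambda$.

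For $\metricI{\wprevious}\varphi=\metricI{\previous}\top\to\metricI{\previous}\varphi$, note that $\metricI{\previous}\top$ holds at $k$ iff $k>0$ and $\tmf(k)-\tmf(k-1)\in\cI$, a condition that is the same in both traces. If it fails, the implication holds trivially, which covers the cases $k=0$ and $\tmf(k)-\tmf(k-1)\notin\cI$. If it holds, the implication reduces to requiring $\metricI{\previous}\varphi$ in both $\M$ and the total trace, i.e.\ $\M,k-1\models\varphi$ together with the same statement for the total trace. Persistence (proved by \cite{becadiscsc24a}) says that satisfaction in $\M$ implies satisfaction in $(\tuple{\T,\T},\tmf)$, so the second requirement is redundant. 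This is the one step that goes beyond unfolding definitions, and I expect it to be the only real point to check. The weak next clause is handled identically, with $k+1=\lambda$ replacing $k=0$.

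For $\metricI{\eventuallyP}\varphi=\top\metricI{\since}\varphi$, the since clause requires some $\rangecc{j}{0}{k}$ with $\tmf(k)-\tmf(j)\in\cI$ and $\M,j\models\varphi$; the condition that $\top$ holds on $\intervoc{j}{k}$ is vacuous. For $\metricI{\alwaysP}\varphi=\bot\metricI{\trigger}\varphi$, the disjunct "$\bot$ holds at some $i\in\intervoc{j}{k}$" is always false, leaving $\M,j\models\varphi$ for all admissible $j$. The future operators $\metricI{\eventuallyF}$ and $\metricI{\alwaysF}$ are treated the same way using the until and release clauses.
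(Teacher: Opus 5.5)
Your argument is correct, and it is exactly the routine unfolding of the derived-operator definitions against Definition~\ref{def:mht:satisfaction} that the paper means when it says these conditions ``can be easily deduced'' (the paper gives no separate proof and defers to its source). The one step beyond pure unfolding is your appeal to persistence for $\metricI{\wprevious}$ and $\metricI{\wnext}$, which discharges the total-trace half of the implication clause; it is needed only in the right-to-left direction, and you handle it correctly.
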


\begin{proofof}{Proposition~\ref{prop:mht:strict}}
		Let us take any timed trace $\M=(\tuple{\Htrace,\Ttrace},\tau)$.
		From left to right, if $\tau$ is not strict then there $\tau(i+1) -\tau(i) = 0$ for some $\rangeco{i}{0}{\lambda}$.
		Therefore, $\M, i \not \models \neg \metric{\next}{0}{0}\top$. Therefore, $\M,0 \not \models \alwaysF \neg \metric{\next}{0}{0} \top$.
		From right to left, if $\M, 0 \not \models \alwaysF \neg \metric{\next}{0}{0} \top$, $\M, i \not \models \neg \metric{\next}{0}{0} \top$ for some $\rangeco{i}{0}{\lambda}$.
		From the statement above it follows that $\M, i \models \metric{\next}{0}{0}\top$. From this we conclude that $i < \lambda-1$, otherwise the previous statement would not be valid.
		Therefore $\rangeco{i+1}{1}{\lambda}$ therefore $\tau(i+1)-\tau(i) = 0$: a contradiction. 
\end{proofof} \subsection{Proofs of Section~\ref{sec:three-valued}}
\begin{proposition}\label{prop:three-valued:derived}
	\allowdisplaybreaks The semantics for the derived operators is as follows:
	\begin{align*}
		\trival{k}{\metricI{\alwaysF} \varphi } & =  \inf \lbrace  \trival{i}{\varphi} \mid  \rangeco{i}{k}{\lambda} \text{ and } \tau(i)-\tau(k) \in I\rbrace \\
		\trival{k}{\alwaysF \varphi } & =  \min \lbrace  \trival{i}{\varphi} \mid  \rangeco{i}{k}{\lambda} \rbrace \\
		\trival{k}{\metricI{\eventuallyF} \varphi } & =  \sup \lbrace \trival{i}{\varphi} \mid  \rangeco{i}{k}{\lambda} \text{ and } \tau(i)-\tau(k) \in I\rbrace \\
		\trival{k}{\eventuallyF \varphi } & =  \max \lbrace \trival{i}{\varphi} \mid  \rangeco{i}{k}{\lambda} \rbrace \\
		\trival{k}{\metricI{\alwaysP} \varphi } & =  \inf \lbrace  \trival{i}{\varphi} \mid  \rangecc{i}{0}{k} \text{ and } \tau(k)-\tau(i) \in I\rbrace \\
		\trival{k}{\alwaysP \varphi } & =  \min \lbrace  \trival{i}{\varphi} \mid  \rangecc{i}{0}{k} \rbrace \\
		\trival{k}{\metricI{\eventuallyP} \varphi } & =  \sup \lbrace \trival{i}{\varphi} \mid  \rangecc{i}{0}{k} \text{ and } \tau(k)-\tau(i) \in I\rbrace \\
		\trival{k}{\eventuallyP \varphi } & =  \max \lbrace \trival{i}{\varphi} \mid  \rangecc{i}{0}{k} \rbrace \\
		\trival{k}{\finally} &= 2 \text{ if } k = \lambda-1\text{ and } 0 \text{ otherwise}\\
		\trival{k}{\initially} &= 2 \text{ if } k = 0\text{ and } 0 \text{ otherwise}\\
	\end{align*}
\end{proposition}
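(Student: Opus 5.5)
The plan is to unfold each derived operator into its defining primitive operator and then simplify with the three-valued clauses of Section~\ref{sec:three-valued}. Three facts are used throughout: $\trival{j}{\bot}=0$ and $\trival{j}{\top}=2$ for every $j$; $0$ is neutral for $\max$ and $2$ is neutral for $\min$; and the non-metric operators $\alwaysF,\eventuallyF,\alwaysP,\eventuallyP,\next,\previous$ are evaluated as their metric counterparts with the interval $\intervco{0}{\omega}$. Every difference $\tau(i)-\tau(k)\in\mathbb{N}$ lies in $\intervco{0}{\omega}$. I will take this last identification as the intended reading of the non-metric clauses. A consistency check is Proposition~\ref{prop:equivalence:semantics}, which requires the three-valued values to match the \MHT\ conditions of Definition~\ref{def:mht:satisfaction} and Proposition~\ref{prop:satisfaction:tel}.

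\emph{Future metric operators.} Since $\metricI{\alwaysF}\varphi \eqdef \bot\metricI{\release}\varphi$, the release clause gives
\[
\trival{k}{\metricI{\alwaysF}\varphi}=\inf\{\max\{\trival{i}{\varphi},\trival{j}{\bot}\mid \rangeco{j}{k}{i}\}\mid \rangeco{i}{k}{\lambda},\ \tau(i)-\tau(k)\in I\}.
\]
Each $\trival{j}{\bot}$ equals $0$, so the inner $\max$ collapses to $\trival{i}{\varphi}$. This holds even for $i=k$, where the $j$-range is empty. Dually, $\metricI{\eventuallyF}\varphi\eqdef\top\metricI{\until}\varphi$: the inner $\min$ of $\trival{i}{\varphi}$ with copies of $2$ is $\trival{i}{\varphi}$, and the outer $\sup$ remains. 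The conventions $\sup\emptyset=0$ and $\inf\emptyset=2$ carry over unchanged.

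\emph{Past metric operators.} These are symmetric, using the trigger and since clauses with $\rangeoc{j}{i}{k}$ and $\rangecc{i}{0}{k}$.

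\emph{Non-metric operators.} With $I=\intervco{0}{\omega}$ the timing side condition disappears. The index set then always contains $i=k$ and is nonempty. All values lie in the finite set $\{0,1,2\}$, so $\inf$ and $\sup$ are attained and may be written as $\min$ and $\max$, which yields the stated clauses.

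\emph{$\finally$ and $\initially$.} The remaining step, and the only one needing a case analysis, is $\finally\eqdef\neg\metric{\next}{0}{\omega}\top$. By the next clause, $\trival{k}{\metric{\next}{0}{\omega}\top}$ equals $0$ if $k+1=\lambda$ and equals $\trival{k+1}{\top}=2$ otherwise, because $\tau(k+1)-\tau(k)\in\intervco{0}{\omega}$ always holds. By the implication clause with consequent $\bot$, $\trival{k}{\neg\chi}$ is $2$ if $\trival{k}{\chi}=0$ and $0$ otherwise. Hence $\trival{k}{\finally}=2$ exactly when $k=\lambda-1$, and $0$ otherwise. The case of $\initially$ is identical, using the previous clause, whose value is $0$ exactly when $k=0$.

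\emph{Expected obstacle.} I expect no real difficulty. The only delicate points are the bookkeeping of empty index sets under the $\sup/\inf$ conventions, and justifying that the non-metric operators are the $\intervco{0}{\omega}$ instances. I would settle the latter first, by appeal to how $\initially$ and $\finally$ are defined in Section~\ref{sec:approach}.
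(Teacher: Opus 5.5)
Your proposal is correct. The paper gives no proof of this proposition and treats it as a routine unfolding of the definitions of the derived operators; your argument is exactly that unfolding. You substitute $\bot$ or $\top$ into the release, until, trigger and since clauses, and collapse the inner $\max$ or $\min$ using $\trival{j}{\bot}=0$ and $\trival{j}{\top}=2$. You keep the outer $\sup$/$\inf$ with its empty-set conventions, and for the non-metric cases you pass to $\min$/$\max$ because the index set contains $k$ and the values lie in $\{0,1,2\}$.

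Your reading of the non-metric operators as the $\intervco{0}{\omega}$ instances matches the paper's notation, which drops the interval subscript exactly when $I=\intervco{0}{\omega}$. That is how $\finally$ and $\initially$ arise from $\neg\metricI{\next}\top$ and $\neg\metricI{\previous}\top$. Given that reading, your case analysis for these two operators via the next and previous clauses is right.
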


\begin{proofof}{Proposition~\ref{prop:alternative:semantics}}
We proceed by cases. 

\begin{itemize}
	\item Case $\trival{k}{\varphi \metric{\until}{0}{n}\psi }$: by the equivalence~\eqref{def:inductive_until_0n} of Proposition~\ref{lemma_inductive_def_interval_zero} we have
	\begin{eqnarray*}
	&&\trival{k}{\varphi \metric{\until}{0}{n}\psi } \\
	&=& \trival{k}{\psi \vee ( \varphi \wedge \bigvee_{x=1}^{n} \metric{\next}{x}{x} ( \varphi \metric{\until}{0}{(n-x)} \psi ) )}\\
	&=& \max\lbrace \trival{k}{\psi},\min\lbrace \trival{k}{\varphi}, \max\lbrace \trival{k}{\metric{\next}{x}{x} ( \varphi \metric{\until}{0}{(n-x)} \psi )} \mid \rangecc{x}{1}{n} \rbrace \rbrace \rbrace
	\end{eqnarray*}	
	\item Case $\varphi \metric{\release}{0}{n} \psi$: by the equivalence~\eqref{def:inductive_release_0n} of Proposition~\ref{lemma_inductive_def_interval_zero} we have 
		\begin{eqnarray*}
		&&\trival{k}{\varphi \metric{\release}{0}{n}\psi } \\
		&=& \trival{k}{\psi \wedge ( \varphi \vee \bigwedge_{x=1}^{n} \metric{\wnext}{x}{x} ( \varphi \metric{\release}{0}{(n-x)} \psi ) )}\\
		&=& \min\lbrace \trival{k}{\psi},\max\lbrace \trival{k}{\varphi}, \min\lbrace \trival{k}{\metric{\wnext}{x}{x} ( \varphi \metric{\release}{0}{(n-x)} \psi )} \mid \rangecc{x}{1}{n} \rbrace \rbrace \rbrace
		\end{eqnarray*}	
		
	\item Case $\varphi \metric{\since}{0}{n} \psi$: by the equivalence~\eqref{def:inductive_since_0n} of Proposition~\ref{lemma_inductive_def_interval_zero} we have 
	\begin{eqnarray*}
	&&\trival{k}{\varphi \metric{\since}{0}{n}\psi } \\
	&=& \trival{k}{\psi \vee ( \varphi \wedge \bigvee_{x=1}^{n} \metric{\previous}{x}{x} ( \varphi \metric{\since}{0}{(n-x)} \psi ) )}\\
	&=& \max\lbrace \trival{k}{\psi},\min\lbrace \trival{k}{\varphi}, \max\lbrace \trival{k}{\metric{\previous}{x}{x} ( \varphi \metric{\since}{0}{(n-x)} \psi )} \mid \rangecc{x}{1}{n} \rbrace \rbrace \rbrace
	\end{eqnarray*}	\item Case $\trival{k}{\varphi \metric{\trigger}{0}{n} \psi}$: by the equivalence~\eqref{def:inductive_trigger_0n} of Proposition~\ref{lemma_inductive_def_interval_zero} we have 

	\begin{eqnarray*}
	&&\trival{k}{\varphi \metric{\trigger}{0}{n}\psi } \\
	&=& \trival{k}{\psi \wedge ( \varphi \vee \bigwedge_{x=1}^{n} \metric{\wprevious}{x}{x} ( \varphi \metric{\trigger}{0}{(n-x)} \psi ) )}\\
	&=& \min\lbrace \trival{k}{\psi},\max\lbrace \trival{k}{\varphi}, \min\lbrace \trival{k}{\metric{\wprevious}{x}{x} ( \varphi \metric{\trigger}{0}{(n-x)} \psi )} \mid \rangecc{x}{1}{n} \rbrace \rbrace \rbrace
	\end{eqnarray*}	
\end{itemize}	
\end{proofof}

 \subsection{Equivalence of the Semantics}\label{sec:equiv-semantics}

\begin{proposition}\label{prop:bijective}
	$f$ is bijective.\qed
\end{proposition}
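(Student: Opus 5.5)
The plan is to show that $f$ is injective and surjective, working with the domain and codomain fixed by the timing function. The first step makes precise what $f$ maps between. For a fixed length $\lambda$ and a fixed timing function $\tau:\intervco{0}{\lambda}\to\mathbb{N}$, $f$ sends a timed \HT-trace $(\tuple{\H,\T},\tau)$ to the valuation $m(\tau):\intervco{0}{\lambda}\times\PV\to\lbrace 0,1,2\rbrace$. The timing function is carried along unchanged, so it suffices to prove bijectivity of the map $\tuple{\H,\T}\mapsto m(\tau)$ for each fixed $\tau$ and $\lambda$. Here the domain consists of all sequences of pairs $H_i\subseteq T_i\subseteq\PV$, and the codomain consists of all functions $\intervco{0}{\lambda}\times\PV\to\lbrace 0,1,2\rbrace$.

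For injectivity, I suppose two traces $\tuple{\H,\T}$ and $\tuple{\H',\T'}$ with the same $\tau$ yield the same valuation. For each $k$ and $a$, the case split in the definition of $m(\tau)$ recovers membership: $a\in H_k$ iff $m(\tau)(k,a)=2$, and $a\in T_k$ iff $m(\tau)(k,a)\neq 0$. Note that $H_k\subseteq T_k$ ensures the three cases are exhaustive and mutually exclusive. Hence $H_k=H'_k$ and $T_k=T'_k$ for all $k$, so the traces coincide.

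For surjectivity, given any $\tau$ and any valuation $v:\intervco{0}{\lambda}\times\PV\to\lbrace 0,1,2\rbrace$, I define $H_k=\lbrace a\mid v(k,a)=2\rbrace$ and $T_k=\lbrace a\mid v(k,a)\neq 0\rbrace$. Then $H_k\subseteq T_k\subseteq\PV$, so $(\tuple{\H,\T},\tau)$ is a legitimate timed \HT-trace in the sense of Definition~\ref{def:timed:trace}. Checking the three cases of $f$'s definition shows that $f$ returns exactly $v$.

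I do not expect a genuine obstacle here. The only delicate point is bookkeeping: the codomain of $f$ must be read as pairs consisting of a timing function and a valuation, or equivalently as valuations indexed by $\tau$, so that $\tau$ is not lost. This is also why one needs $\tau$ to be monotone with $\tau(0)=0$ on both sides of the correspondence. I would state this convention explicitly at the start of the proof.
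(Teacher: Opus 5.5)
Your proof is correct and follows the paper's argument: injectivity by recovering $H_k$ and $T_k$ as the atoms with value $2$ and with value $\neq 0$, and surjectivity by the same construction $H_k=\lbrace a\mid v(k,a)=2\rbrace$, $T_k=\lbrace a\mid v(k,a)\neq 0\rbrace$. The paper phrases injectivity contrapositively, with a separate case for $\tau\neq\tau'$. Your explicit convention that the codomain retains $\tau$ and ranges only over admissible timing functions makes precise what the paper's proof leaves implicit.
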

\begin{proof}We prove that $f$ is both, injective and surjective.
\paragraph{injective}: Let us take two timed \HT{}-traces $\M=(\tuple{\Htrace,\Ttrace},\tau)$ and $\M' = (\tuple{\Htrace',\Ttrace'},\tau')$ such that $\M'\not=\M$. If $\tau \not = \tau'$ then clearly $f(\M)=m(\tau) \not = m(\tau') = f(\M')$ by definition. Let us assume now that $\tau=\tau'$ and let us set $m(\tau)=f(\M)$ and and $m'(\tau)=f(\M')$.
\begin{enumerate}
	\item If $\Ttrace \not = \Ttrace'$ then there is $\rangeco{i}{0}{\lambda}$ such that $T_i\not = T'_i$. Assume, without loss of generality that $p\in T_i$ but $p \not \in T'_i$, for some $p\in \PV$. 
	By definition, $m(\tau)(k,p) \ge 1$ and $m'(\tau)(k,p) = 0$ meaning that $f(\M)\not=f(\M')$. 
	\item If $\Htrace \not = \Htrace'$  then there is $\rangeco{i}{0}{\lambda}$ such that $H_i\not = H'_i$. Assume, without loss of generality that $p\in H_i$ but $p \not \in H'_i$, for some $p\in \PV$. 
	By definition, $m(\tau)(k,p) =2$ and $m'(\tau)(k,p) < 2$ meaning that $f(\M)\not=f(\M')$.
\end{enumerate}

\paragraph{surjective}: take any $m(\tau)$ be a three-valued interpretation. Let us define $\M=(\tuple{\Htrace,\Ttrace},\tau)$ as follows: 
\begin{equation*}
H_k=\lbrace p \mid p \in \PV \hbox{ and } m(k,p)=2\rbrace \hspace{30pt}	T_k=\lbrace p \mid p \in \PV \hbox{ and } m(k,p)\not= 0\rbrace
\end{equation*}
\noindent for $\rangeco{k}{0}{\lambda}$. it can be checked that $f(\M) = m(\tau)$.
\end{proof}	

\begin{proofof}{Proposition~\ref{prop:equivalence:semantics}}
By structural induction on $\varphi$. 
For simplicity, we will take $m(\tau)=f(\M)$ and we will denote by $\bm$ the extension of $m(\tau)$ to the metric temporal language.
We only present the case of the until operator.

From left to right, let us assume that $\M, k \models \varphi \metricI{\until} \psi$.
This means that there exists $\rangeco{i}{k}{\lambda}$ such that $\M, i \models \psi$, $\tau(i)-\tau(k)\in I$ and 
$\M, j \models \varphi$, for all $\rangeco{j}{k}{i}$.
By induction, $\trival{i}{\psi}=2$ and $\trival{j}{\varphi}=2$, for all $\rangeco{j}{k}{i}$.
As a consequence, $\min \lbrace \trival{i}{\psi}, \trival{j}{\varphi}  \mid \rangeco{j}{k}{i}\rbrace = 2$
Since $\tau(i)-\tau(k)\in I$ we get that $\sup \lbrace \min \lbrace \trival{i}{\psi}, \trival{j}{\varphi}  \mid \rangeco{j}{k}{i}\rbrace \mid \tau(i)-\tau(k)\in I \rbrace = 2$. 

Conversely, assume that $\trival{k}{\varphi \metricI{\until} \psi} = 2$. This means that 
\begin{equation*}
\sup\{\min\{\trival{i}{\psi},\trival{j}{\varphi}\mid \rangeco{j}{k}{i}\}\mid \rangeco{i}{k}{\lambda} \hbox{ and } \tau(i)-\tau(k)\in I\}=2.
\end{equation*}
Since $\sup\emptyset = 0$,
\begin{equation*}
\lbrace \min\{\trival{i}{\psi},\trival{j}{\varphi}\mid \rangeco{j}{k}{i}\}\mid \rangeco{i}{k}{\lambda} \hbox{ and } \tau(i)-\tau(k)\in I\rbrace \not= \emptyset.	
\end{equation*}
Therefore, $\min\{\trival{i}{\psi},\trival{j}{\varphi}\mid \rangeco{j}{k}{i}\}=2$ for some $\rangeco{i}{k}{\lambda}$ satisfying $\tau(i)-\tau(k)\in I$.
From such previous assumption, it we obtain $\trival{i}{\psi}= 2$, $\trival{j}{\varphi}=2$ for some $\rangeco{i}{k}{\lambda}$ satisfying $\tau(i)-\tau(k)\in I$ and for every $\rangeco{j}{k}{i}$.
By induction we have $\M, i \models \psi$ and $\M, j \models \varphi$, for all $\rangeco{j}{k}{i}$. Together with $\tau(i)-\tau(k)\in I$ and $\rangeco{i}{k}{\lambda}$ we get $\M, k \models \varphi \metricI{\until}\psi$.
The second item can be proved similarly.
\end{proofof}
 \subsection{Proofs of Section~\ref{sec:translation}}
\begin{proofof}{Proposition~\ref{prop:closure:finite}}
	We assume $k_\varphi$ is defined for any metric formula $\varphi$ as stated in the proposition.
	We proceed by structural induction.
	\begin{itemize}
		\item case $\varphi = p$: in this case $\lvert p \rvert = 1$ and $ cl(p) = \lbrace p \rbrace$, so $\lvert  cl(p)\rvert = 1$. 
		Since there is no metric connective in $p$, $k_p = 1$. Therefore, $\lvert  cl(p)\rvert = 1 \le 2(1^2 + 1) \lvert p \rvert = 4$.
		\item case $\varphi \wedge \psi$: in this case, $ cl(\varphi \wedge \psi) = \lbrace \varphi \wedge \psi \rbrace \cup  cl(\varphi) \cup  cl(\psi)$ and $\lvert cl(\varphi \wedge \psi) \rvert \le 1 + \lvert  cl(\varphi) \rvert + \lvert  cl(\psi)\rvert$.
		We assume, without loss of generality, $\max(k_\varphi,k_\psi) = k_\psi$ when needed. 
		We prove this case below:
		
		\begin{align*}
			&\lvert  cl(\varphi \wedge \psi)\rvert\\ 
			&\le 1 + \lvert  cl(\varphi) \rvert + \lvert  cl(\psi)\rvert & \\
			& \le 1 + 2(k_\varphi^2 +1)\lvert \varphi\rvert + 2(k_\psi^2 +1)\lvert \psi\rvert & \hbox{by induction on }\varphi \hbox{ and } \psi\\
			& \le 1 + 2k_\varphi^2\lvert \varphi\rvert  + 2\lvert \varphi\rvert+ 2k_\psi^2\lvert \psi\rvert + 2\lvert \psi\rvert \\
			& \le 1 + 2k_\varphi^2\lvert \varphi\rvert  + 2\lvert \varphi\rvert+ 2k_\psi^2\lvert \psi\rvert + 2\lvert \psi\rvert + \underbrace{2(k_\psi^2 - k_\varphi^2)\lvert \varphi \rvert}_{2(k_\psi^2 - k_\varphi^2)\lvert \varphi \rvert \ge 0} \\
			& \le 1 + \xcancel{2k_\varphi^2\lvert \varphi\rvert}  + 2\lvert \varphi\rvert+ 2k_\psi^2\lvert \psi\rvert + 2\lvert \psi\rvert + 2k_\psi^2\lvert \varphi \rvert - \xcancel{2k_\varphi^2\lvert \varphi \rvert} \\
			& \le 1 +  2k_\psi^2\left(\lvert \varphi\rvert + \lvert \psi\rvert\right) + 2\left(\lvert \varphi\rvert + \lvert \psi\rvert\right)\\
			& \le 1 +  2k_\psi^2\left(\lvert \varphi \wedge \psi \rvert -1\right) + 2\left(\lvert \varphi \wedge \psi \rvert -1\right)\\
			& \le 1 +  2k_\psi^2\lvert \varphi \wedge \psi \rvert - 2k_\psi^2 + 2\lvert \varphi \wedge \psi \rvert - 2\\
			& \le 2(k_\psi^2 + 1)\lvert \varphi \wedge \psi \rvert \underbrace{- 2k_\psi^2 -1}_{- 2k_\psi^2 -1 \le  0} \\
			& \le 2(k_\psi^2 + 1)\lvert \varphi \wedge \psi \rvert												
		\end{align*}			
		\item the proof for the formulas $\varphi \vee \psi$ and $\varphi \rightarrow \psi$ is done as for $\varphi \wedge \psi$.
		\item case $\metricI{\next} \varphi$: in this case, $cl(\metricI{\next} \varphi) = \lbrace \metricI{\next} \varphi \rbrace \cup  cl(\varphi)$ and $\lvert \metricI{\next} \varphi \rvert = 1 + \lvert \varphi \rvert$.
		It follows that 		
		\begin{align*}
			\lvert  cl(\metricI{\next} \varphi)\rvert & \le 1 + \lvert  cl(\varphi) \rvert\\
			& \le 1 +  2(k_\varphi^2+1) \lvert \varphi \rvert & \hbox{by induction on }\varphi\\
			& \le 1 +  2(k_\varphi^2+1) \left(\lvert \metricI{\next}\varphi \rvert - 1\right) \\
			& \le 1 +  2(k_\varphi^2+1)\lvert \metricI{\next}\varphi \rvert - 2(k_\varphi^2+1) \\
			& \le 2(k_\varphi^2+1)\lvert \metricI{\next}\varphi \rvert \underbrace{- 2k_\varphi^2-1}_{- 2k_\varphi^2-1 \le 0}\\
			& \le 2(k_\varphi^2+1)\lvert \metricI{\next}\varphi \rvert
		\end{align*}	
		\item  The proof for $\metricI{\wnext}\varphi$, $\metricI{\previous} \varphi$ and $\metricI{\wprevious} \varphi$ follows the same line of reasoning as for $\metricI{\next} \varphi$.
			
		\item Case $\varphi \metric{\until}{0}{n} \psi$: in this case we have that $\lvert\varphi \metric{\until}{0}{n} \psi \rvert = 1 + \lvert \varphi \rvert + \lvert \psi \rvert$ and 		
		\begin{equation*}
 			cl(\varphi \metric{\until}{0}{n} \psi) = cl(\varphi) \cup cl(\psi )\cup \lbrace  \varphi \metric{\until}{0}{i}\psi \mid \rangecc{i}{0}{n}\rbrace \cup \bigcup\limits_{i=1}^n\lbrace \metric{\next}{i}{i}\left(\varphi \metric{\until}{0}{j} \psi \right) \mid \rangecc{j}{0}{n-i}\rbrace.
		\end{equation*}		
	     \noindent Therefore, $\lvert cl(\varphi \metric{\until}{0}{n} \psi) \rvert \le  \frac{n^2 + 3n + 2}{2} +  \lvert  cl(\varphi) \rvert + \lvert  cl(\psi)\rvert \le 2(n^2+1) +  \lvert  cl(\varphi) \rvert + \lvert  cl(\psi)\rvert$.		 
		We consider two different cases:
		\begin{enumerate}
			\item $n = \max\lbrace n, k_\varphi,k_\psi \rbrace$. Consequently, $2(n^2 - k_\varphi^2) \lvert \varphi \rvert \ge 0$ and $2(n^2 - k_\psi^2) \lvert \psi \rvert\ge 0$:
			\begin{align*} 
				&\lvert  cl(\varphi \metric{\until}{0}{n} \psi)\rvert\\ 
				& \le 2(n^2 + 1) + \lvert  cl(\varphi) \rvert + \lvert  cl(\psi)\rvert \\
				& \le  2(n^2 + 1) + 2(k_\varphi^2 + 1)\lvert \varphi\rvert + 2(k_\psi^2 + 1)\lvert \psi\rvert \hspace{20pt} \hbox{by induction on }\varphi \hbox{ and } \psi \\
				& \le  2(n^2 + 1) + 2k_\varphi^2\lvert \varphi\rvert + 2\lvert \varphi\rvert + 2k_\psi^2\lvert \psi\rvert + 2\lvert \psi\rvert \\	
				& \le  2(n^2 + 1) + 2k_\varphi^2\lvert \varphi\rvert + 2\left( \lvert \varphi\rvert +\lvert \psi\rvert \right)+ 2k_\psi^2\lvert \psi\rvert \\	
				& \le  2(n^2 + 1) + 2k_\varphi^2\lvert \varphi\rvert + 2\left( \lvert \varphi\rvert +\lvert \psi\rvert \right)+ 2k_\psi^2\lvert \psi\rvert + \underbrace{2(n^2 - k_\varphi^2) \lvert \varphi \rvert}_{2(n^2 - k_\varphi^2) \lvert \varphi \rvert \ge 0} \\	
				& \le  2(n^2 + 1) + \xcancel{2k_\varphi^2\lvert \varphi\rvert} + 2\left( \lvert \varphi\rvert +\lvert \psi\rvert \right)+ 2k_\psi^2\lvert \psi\rvert + 2n^2 \lvert \varphi \rvert  \xcancel{-2k_\varphi^2\lvert \varphi \rvert}  \\			
				& \le  2(n^2 + 1) +  2\left( \lvert \varphi\rvert +\lvert \psi\rvert \right)+ 2k_\psi^2\lvert \psi\rvert + 2n^2 \lvert \varphi \rvert \\
				& \le  2(n^2 + 1) +  2\left( \lvert \varphi\rvert +\lvert \psi\rvert \right)+ 2k_\psi^2\lvert \psi\rvert + 2n^2 \lvert \varphi \rvert + \underbrace{2(n^2 - k_\psi^2) \lvert \psi \rvert}_{2(n^2 - k_\psi^2) \lvert \psi \rvert\ge 0} \\	
				& \le  2(n^2 + 1) +  2\left( \lvert \varphi\rvert +\lvert \psi\rvert \right)+ \xcancel{2k_\psi^2\lvert \psi\rvert} + 2n^2 \lvert \varphi \rvert + 2n^2 \lvert \psi \rvert \xcancel{- 2k_\psi^2\lvert \psi \rvert} \\		
				& \le  2(n^2 + 1) +  2\left( \lvert \varphi\rvert +\lvert \psi\rvert \right)+ 2n^2 \left( \lvert \varphi \rvert + \lvert \psi \rvert \right)  \\	
				& \le  2(n^2 + 1) +  2\left( \lvert \varphi \metric{\until}{0}{n}\psi\rvert-1\right)+ 2n^2 \left( \lvert \varphi \metric{\until}{0}{n}\psi\rvert-1\right) 
				 \\																							
				& \le  2(n^2 + 1) +  2(n^2 + 1)\left( \lvert \varphi \metric{\until}{0}{n}\psi\rvert-1\right) \\
				& \le  \xcancel{2(n^2 + 1)} +  2(n^2 + 1)\lvert \varphi \metric{\until}{0}{n}\psi\rvert \xcancel{- 2(n^2 + 1)} \\	
				& \le 2(n^2 + 1)\lvert \varphi \metric{\until}{0}{n}\psi\rvert  \\																																																																																																																																																																																																																																																																																																																																																																																																																																																																																																																																																																																																																																																																															
			\end{align*}	
			
			\item otherwise, assume that $k_\psi  = \max\lbrace k_\varphi,k_\psi, n \rbrace$. Therefore, $2(k_\psi^2 - k_\varphi^2) \lvert \varphi \rvert \ge 0$ :		
			\begin{align*} 
				&\lvert  cl(\varphi \metric{\until}{0}{n} \psi)\rvert \\
				& \le 2(n^2+1) + \lvert  cl(\varphi) \rvert + \lvert  cl(\psi)\rvert \\
				& \le 2(n^2+1)+ 2(k_\varphi^2+1) \lvert \varphi\rvert + 2(k_\psi^2+1)\lvert \psi\rvert \hspace{20pt} \hbox{by induction on }\varphi \hbox{ and } \psi \\
				& \le 2(n^2+1)+ 2k_\varphi^2\lvert \varphi\rvert + 2\lvert \varphi\rvert + 2k_\psi^2\lvert \psi\rvert + 2\lvert \psi\rvert\\ 
				& \le 2(n^2+1)+ 2k_\varphi^2\lvert \varphi\rvert  + 2k_\psi^2\lvert \psi\rvert + 2\left(\lvert \varphi\rvert + \lvert \psi\rvert\right)\\ 	
				& \le 2(n^2+1)+ 2k_\varphi^2\lvert \varphi\rvert  + 2k_\psi^2\lvert \psi\rvert + 2\left(\lvert \varphi\rvert + \lvert \psi\rvert\right) + \underbrace{2(k_\psi^2 - k_\varphi^2) \lvert \varphi \rvert}_{2(k_\psi^2 - k_\varphi^2) \lvert \varphi \rvert \ge 0}\\ 
				& \le 2(n^2+1)+ \xcancel{2k_\varphi^2\lvert \varphi\rvert}  + 2k_\psi^2\lvert \psi\rvert + 2\left(\lvert \varphi\rvert + \lvert \psi\rvert\right) + 2k_\psi^2\lvert \varphi \rvert \xcancel{- 2k_\varphi^2\lvert \varphi \rvert} \\ 
				& \le 2(n^2+1)+ 2k_\psi^2 \left( \lvert \varphi\rvert + \lvert \psi\rvert\right) + 2\left(\lvert \varphi\rvert + \lvert \psi\rvert\right)   \\ 	
				& \le 2(n^2+1)+ 2\left(k_\psi^2 + 1\right) \left( \lvert \varphi\rvert + \lvert \psi\rvert\right)   \\ 
				& \le 2(n^2+1)+ 2\left(k_\psi^2 + 1\right) \left( \lvert \varphi \metric{\until}{0}{n}\psi\rvert -1\right)\\  
				& \le 2(n^2+1)+ 2\left(k_\psi^2 + 1\right)  \lvert \varphi \metric{\until}{0}{n}\psi \rvert -2\left(k_\psi^2 + 1\right)  \\ 		
				& \le  2\left(k_\psi^2 + 1\right)  \lvert \varphi \metric{\until}{0}{n}\psi \rvert + \underbrace{2n^2 -2k_\psi^2}_{2n^2 -2k_\psi^2 \le 0}   \\
				& \le 2\left(k_\psi^2 + 1\right)  \lvert \varphi \metric{\until}{0}{n}\psi \rvert 
			\end{align*}	
		\end{enumerate}
		\item The proof for $\varphi \metric{\release}{0}{n} \psi$, $\varphi \metric{\since}{0}{n} \psi$ and $\varphi \metric{\trigger}{0}{n} \psi$ follow the same line of reasoning as for $\varphi \metric{\until}{0}{n} \psi$.
	\end{itemize}
\end{proofof}

\begin{proofof}{Lemma~\ref{lem:nf1}}
	Take the timed \HT-trace $(\tuple{\H',\T'},\tau)$ whose three valued interpretation ${\bm m}'$ satisfies:
	\begin{align}
		\trivalp{k}{\Lab{\varphi}} = \trival{k}{\varphi} \label{proofNF_equalityExtAlph}
	\end{align}
	for any formula $\varphi$ over $\mathcal{A}$ and for all $\rangeco{k}{0}{\lambda}$.
	When $\varphi$ is an atom $a \in \mathcal{A}$ then $\trivalp{k}{a} = \trivalp{k}{\Lab{a}} = \trival{k}{a}$, which implies that both valuations coincide for atoms, and so, $(\tuple{\H',\T'},\tau) |_{\mathcal{A}} = (\tuple{\H,\T},\tau)$.
	It remains to be show that
	$(\tuple{\H',\T'},\tau),0 \models \lbrace \Lab{\varphi}\rbrace \cup \eta(\varphi)$,
	which is equivalent to
	\begin{align*}
		(\tuple{\H',\T'}, \tau),0&\models
		\left\lbrace \Lab{\varphi} \right\rbrace
		\cup \left\lbrace \eta(\mu)
		\mid \mu \in cl(\varphi)\right\rbrace\\
		\Leftrightarrow (\tuple{\H',\T'},\tau),0 &\models
		\left\lbrace \Lab{\varphi} \right\rbrace
		\text{ and }
		(\tuple{\H',\T'},\tau),0 \models \left\lbrace \eta(\mu)
		\mid \mu \in cl(\varphi)\right\rbrace
	\end{align*}
	
	The first satisfaction relation follows directly from the definition of $(\tuple{\H',\T'}, \tau)$ since $\trivalp{0}{\Lab{\varphi}}=2$ iff $\trival{0}{\varphi}=2$ and we had that $(\tuple{\H,\T},0)$ is a model of $\varphi$.
	For the second part, we consider the following cases, depending on the $\mu$:
	
	\begin{enumerate}
		\item For $\mu = \varphi \otimes \psi$ with $\otimes \in \{\wedge, \vee, \to\}$ we have $\eta(\mu) = \alwaysF (\Lab{\mu} \leftrightarrow \Lab{\varphi} \otimes \Lab{\psi})$ and so, $(\tuple{\H',\T'},\tau),0 \models \eta(\mu)$ amounts to proving $\trivalp{k}{\Lab{\mu}}=\trivalp{k}{\Lab{\varphi} \otimes \Lab{\psi}}$ for all $\rangeco{k}{0}{\lambda}$.\\
		In this case we have
		\begin{align*}
			\trivalp{k}{\Lab{\mu}} = \trival{k}{\mu}
			&= \trival{k}{\varphi\otimes\psi}\\
			&= f^\otimes (\trival{k}{\varphi}, \trival{k}{\psi})\\
			&= f^\otimes (\trivalp{k}{\Lab{\varphi}}, \trivalp{k}{\Lab{\psi}})\\
			&= \trivalp{k}{\Lab{\varphi} \otimes \Lab{\psi} }
		\end{align*}
		where $f^\otimes$ is the three-valued mapping associated the corresponding propositional connective $\otimes \in \{\wedge, \vee, \to\}$ in the \THT{} three-valued semantics.
		\item Cases for non-metric temporal connectives are proved as in~\cite{agcadipescscvi20a}.
		
		\item For $\mu = \metricI{\previous} \varphi$ we have two formulas in $\eta(\mu)$:
		\begin{itemize}
			\item[-] For the first formula, $\wnext\alwaysF\left(\Lab{\mu} \leftrightarrow  \metricI{\previous}\Lab{\varphi}\right)$ we have to prove that 
			 $\trivalp{k}{\Lab{\mu}} = \trivalp{k}{\metricI{\previous}\Lab{\varphi}}$ for all $\rangeco{k}{1}{\lambda}$. Moreover, that this is trivially true when $\lambda=1$.
			 We reason by cases as follows:
			 \begin{enumerate}
			 	\item If $ \tau(k)-\tau(k-1) \in I$ then we have 
			 	\begin{align*}
			 		\trivalp{k}{\Lab{\mu}} = \trival{k}{\mu}
			 		&= \trival{k}{\metricI{\previous} \varphi}\\
			 		&= \trival{k-1}{\varphi} & \hbox{ since } \tau(k)-\tau(k-1) \in I\\
			 		&= \trivalp{k-1}{\Lab\varphi}\\
			 		&= \trivalp{k}{\metricI{\previous} \Lab{\varphi} }. 
			 	\end{align*}
			 	\item If $ \tau(k)-\tau(k-1) \not \in I$ then we have 
			 	\begin{align*}
			 		\trivalp{k}{\Lab{\mu}} = \trival{k}{\mu}
			 		&= \trival{k}{\metricI{\previous} \varphi}\\
			 		&= 0 & \hbox{ since } \tau(k)-\tau(k-1) \not \in I\\
			 		&= \trivalp{k}{\metricI{\previous} \Lab{\varphi}}. 
			 	\end{align*}
			 \end{enumerate}
			\item[-] For satisfying formula, $\neg \Lab\mu$, 
			amounts to check $\trivalp{0}{\Lab\mu}=0$ and this follows from 
			$\trivalp{0}{\Lab\mu}=\trival{0}{\mu}=\trival{0}{\metricI{\previous} \varphi}=0$.
		\end{itemize}

		\item For $\mu = \metricI{\wprevious} \varphi$ we have two formulas in $\eta(\mu)$:
		
		\begin{itemize}
			\item[-] For the first formula, $\wnext\alwaysF\left(\Lab{\mu} \leftrightarrow  \metricI{\wprevious}\Lab{\varphi}\right)$ we have to prove that 
			$\trivalp{k}{\Lab{\mu}} = \trivalp{k}{\metricI{\wprevious}\Lab{\varphi}}$ for all $\rangeco{k}{1}{\lambda}$. Moreover, that this is trivially true when $\lambda=1$.
			We reason by cases as follows:
			\begin{enumerate}
				\item If $ \tau(k)-\tau(k-1) \in I$ then we have 
				\begin{align*}
					\trivalp{k}{\Lab{\mu}} = \trival{k}{\mu}
					&= \trival{k}{\metricI{\wprevious} \varphi}\\
					&= \trival{k-1}{\varphi} & \hbox{ since } \tau(k)-\tau(k-1) \in I\\
					&= \trivalp{k-1}{\Lab\varphi}\\
					&= \trivalp{k}{\metricI{\wprevious} \Lab{\varphi} }. 
				\end{align*}
				\item If $ \tau(k)-\tau(k-1) \not \in I$ then we have 
				\begin{align*}
					\trivalp{k}{\Lab{\mu}} = \trival{k}{\mu}
					&= \trival{k}{\metricI{\wprevious} \varphi}\\
					&= 2 & \hbox{ since } \tau(k)-\tau(k-1) \not \in I\\
					&= \trivalp{k}{\metricI{\wprevious} \Lab{\varphi}}. 
				\end{align*}
			\end{enumerate}
			\item[-] For satisfying formula, $\Lab\mu$, 
			amounts to check $\trivalp{0}{\Lab\mu}=2$ and this follows from 
			$\trivalp{0}{\Lab\mu}=\trival{0}{\mu}=\trival{0}{\metricI{\wprevious} \varphi}=2$.
		\end{itemize}
	\item For $\mu = \varphi \metric{\since}{0}{n} \psi$, we have one formula in $\eta(\mu)$. In this case, we have to prove that 
	\begin{equation*}
		\trivalp{k}{\Lab{\mu}}=\trivalp{k}{\Lab{\psi} \vee \left( \Lab{\varphi} \wedge \bigvee\limits_{x=1}^n \Lab{\metric{\previous}{x}{x} (\varphi\metric{\since}{0}{(n-x)} \psi)}\right)} \\
	\end{equation*}	
	\noindent for all $\rangeco{k}{0}{\lambda}$ and that is proved next:
	\begin{eqnarray*}
		&&\trivalp{k}{\Lab{\mu}} \\
		&=&\trival{k}{\mu}\\
		&=&\trival{k}{\varphi \metric{\since}{0}{n} \psi}\\
		&\stackrel{Prop~\ref{prop:alternative:semantics}}{=}& \max\lbrace \trival{k}{\psi},\min\lbrace  \trival{k}{\varphi},\max\lbrace \trival{k}{\metric{\previous}{x}{x} (\varphi\metric{\since}{0}{n-x} \psi)}\mid \rangecc{x}{1}{n}  \rbrace  \rbrace \rbrace\\
		&=& \max\lbrace \trivalp{k}{\Lab\psi},\min\lbrace  \trivalp{k}{\Lab\varphi},\max\lbrace \trivalp{k}{\Lab{\metric{\previous}{x}{x} (\varphi\metric{\since}{0}{n-x} \psi)}}\mid \rangecc{x}{1}{n}  \rbrace  \rbrace \rbrace\\
		&=&\trivalp{k}{\Lab{\psi} \vee \left( \Lab{\varphi} \wedge \bigvee\limits_{x=1}^n \Lab{\metric{\previous}{x}{x} (\varphi\metric{\since}{0}{(n-x)} \psi)}\right)}
	\end{eqnarray*}
			
		\item For $\mu = \varphi \metric{\trigger}{0}{n} \psi$, we have only one formula in $\eta(\mu)$. In this case, we have to prove that 
		\begin{equation*}
			\trivalp{k}{\Lab{\mu}}=\trivalp{k}{\Lab{\psi} \wedge \left( \Lab{\varphi} \vee \bigwedge\limits_{x=1}^n \Lab{\metric{\wprevious}{x}{x} (\varphi\metric{\trigger}{0}{(n-x)} \psi)}\right)} 
		\end{equation*}	
		\noindent for all $\rangeco{k}{0}{\lambda}$ and that is proved next:
		\begin{eqnarray*}
			&&\trivalp{k}{\Lab{\mu}} \\
			&=&\trival{k}{\mu}\\
			&=&\trival{k}{\varphi \metric{\trigger}{0}{n} \psi}\\
			&\stackrel{Prop~\ref{prop:alternative:semantics}}{=}& \min\lbrace \trival{k}{\psi},\max\lbrace  \trival{k}{\varphi},\min\lbrace \trival{k}{\metric{\wprevious}{x}{x} (\varphi\metric{\trigger}{0}{n-x} \psi)}\mid \rangecc{x}{1}{n}  \rbrace  \rbrace \rbrace\\
			&=& \min\lbrace \trivalp{k}{\Lab\psi},\max\lbrace  \trivalp{k}{\Lab\varphi},\min\lbrace \trivalp{k}{\Lab{\metric{\wprevious}{x}{x} (\varphi\metric{\trigger}{0}{n-x} \psi)}}\mid \rangecc{x}{1}{n}  \rbrace  \rbrace \rbrace\\
			&=&\trivalp{k}{\Lab{\psi} \wedge \left( \Lab{\varphi} \vee \bigwedge\limits_{x=1}^n \Lab{\metric{\wprevious}{x}{x} (\varphi\metric{\trigger}{0}{(n-x)} \psi)}\right)}.
		\end{eqnarray*}
		
		\item For $\mu = \metricI{\next} \varphi$ we have three formulas in $\eta(\mu)$:		
		\begin{itemize}
			\item[-] For the first formula, $\wnext\alwaysF\left( \metricI{\previous} \top \to \left(\previous \Lab{\mu} \leftrightarrow \Lab{\varphi}\right) \right)$, it amounts to prove that 
			$\trivalp{k}{\Lab{\varphi}} = \trivalp{k}{\previous \Lab{\mu}}$, for all $\rangeco{k}{1}{\lambda}$ satisfying $\tau(k)-\tau(k-1) \in I$ (being trivially true when $\lambda=1$). 
			We reason as follows:
			\begin{align*}
				\trivalp{k}{\previous \Lab{\mu}} &= \trivalp{k-1}{\Lab\mu} &\\
				&= \trival{k-1}{\mu}\\
				&= \trival{k-1}{\metricI{\next}\varphi}\\
				&= \trival{k}{\varphi} & \hbox{since } \tau(k)-\tau(k-1) \in I\\
				&= \trivalp{k}{\Lab{\varphi} }
			\end{align*}
			\item[-] For the second formula, $\wnext\alwaysF\left( \neg \metricI{\previous} \top \to \previous \neg \Lab{\mu} \right)$, it amounts to prove that 
			$\trivalp{k}{\previous \neg \Lab{\mu}} = 2$, for all $\rangeco{k}{1}{\lambda}$ satisfying $\tau(k)-\tau(k-1) \not \in I$.
			This is equivalent to show that $\trivalp{k-1}{\Lab{\mu}} = 0$, for all $\rangeco{k}{1}{\lambda}$ satisfying $\tau(k)-\tau(k-1) \not \in I$, which is done next:
			\begin{equation*}
					\trivalp{k-1}{\Lab{\mu}} = \trival{k-1}{\mu} 
					= \trival{k-1}{\metricI{\next}\varphi}
					= 0, \hspace{10pt} \hbox{since } \tau(k)-\tau(k-1) \not \in I.
			\end{equation*}
			\item[-] For the third formula, $\alwaysF (\finally \to \neg \Lab{\mu})$, it amounts to show $\trivalp{\lambda-1}{\neg\Lab{\mu}}=2$.
			The latter is is equivalent to $\trivalp{\lambda-1}{\Lab{\mu}}=0$ which follows by construction of ${\bm m}'$, since $\trivalp{\lambda-1}{\Lab{\mu}}=\trival{\lambda-1}{\mu}=\trival{\lambda -1}{\metricI{\next}\varphi}=0$.
		\end{itemize}
		\item For $\mu = \metricI{\wnext} \varphi$ we have three formulas in $\eta(\mu)$:		
\begin{itemize}
	\item[-] For the first formula, $\wnext\alwaysF\left( \metricI{\previous} \top \to \left(\previous \Lab{\mu} \leftrightarrow \Lab{\varphi}\right) \right)$, it amounts to prove that 
	$\trivalp{k}{\Lab{\varphi}} = \trivalp{k}{\previous \Lab{\mu}}$, for all $\rangeco{k}{1}{\lambda}$ satisfying $\tau(k)-\tau(k-1) \in I$ (being trivially true when $\lambda=1$). 
	We reason as follows:
	\begin{align*}
		\trivalp{k}{\previous \Lab{\mu}} &= \trivalp{k-1}{\Lab\mu} &\\
		&= \trival{k-1}{\mu}\\
		&= \trival{k-1}{\metricI{\next}\varphi}\\
		&= \trival{k}{\varphi} & \hbox{since } \tau(k)-\tau(k-1) \in I\\
		&= \trivalp{k}{\Lab{\varphi} }
	\end{align*}
	\item[-] For the second formula, $\wnext\alwaysF\left( \neg \metricI{\previous} \top \to \previous \Lab{\mu} \right)$, it amounts to prove that 
	$\trivalp{k}{\previous \Lab{\mu}} = 2$, for all $\rangeco{k}{1}{\lambda}$ satisfying $\tau(k)-\tau(k-1) \not \in I$ (being trivially true when $\lambda=1$).
	We reason as follows:
	\begin{equation*}
		\trivalp{k}{\previous \Lab{\mu}} = \trival{k-1}{\mu} 
		= \trival{k-1}{\metricI{\wnext}\varphi}
		= 2, \hspace{10pt} \hbox{since } \tau(k)-\tau(k-1) \not \in I.
	\end{equation*}
	\item[-] For the third formula, $\alwaysF (\finally \to  \Lab{\mu})$ it amounts to show that $\trivalp{\lambda-1}{\Lab{\mu}}=2$, which follows by construction of ${\bm m}'$, since $\trivalp{\lambda-1}{\Lab{\mu}}=\trival{\lambda-1}{\mu}=\trival{\lambda -1}{\metricI{\wnext}\varphi}=2$.
\end{itemize}
\item For $\mu = \varphi \metric{\until}{0}{n} \psi$, we have one formula in $\eta(\mu)$. In this case, we have to prove that 
		\begin{equation*}
				\trivalp{k}{\Lab{\mu}}=\trivalp{k}{\Lab{\psi} \vee \left( \Lab{\varphi} \wedge \bigvee\limits_{x=1}^n \Lab{\metric{\next}{x}{x} (\varphi\metric{\until}{0}{(n-x)} \psi)}\right)} \\
		\end{equation*}	
		\noindent for all $\rangeco{k}{0}{\lambda}$ and that is proved next:
		\begin{eqnarray*}
			&&\trivalp{k}{\Lab{\mu}} \\
			&=&\trival{k}{\mu}\\
			&=&\trival{k}{\varphi \metric{\until}{0}{n} \psi}\\
			&\stackrel{Prop~\ref{prop:alternative:semantics}}{=}& \max\lbrace \trival{k}{\psi},\min\lbrace  \trival{k}{\varphi},\max\lbrace \trival{k}{\metric{\next}{x}{x} (\varphi\metric{\until}{0}{n-x} \psi)}\mid \rangecc{x}{1}{n}  \rbrace  \rbrace \rbrace\\
			&=& \max\lbrace \trivalp{k}{\Lab\psi},\min\lbrace  \trivalp{k}{\Lab\varphi},\max\lbrace \trivalp{k}{\Lab{\metric{\next}{x}{x} (\varphi\metric{\until}{0}{n-x} \psi)}}\mid \rangecc{x}{1}{n}  \rbrace  \rbrace \rbrace\\
			&=&\trivalp{k}{\Lab{\psi} \vee \left( \Lab{\varphi} \wedge \bigvee\limits_{x=1}^n \Lab{\metric{\next}{x}{x} (\varphi\metric{\until}{0}{(n-x)} \psi)}\right)}
		\end{eqnarray*}
		\item For $\mu = \varphi \metric{\release}{0}{n} \psi$, we have only one formula in $\eta(\mu)$. In this case, we have to prove that 
		\begin{equation*}
			\trivalp{k}{\Lab{\mu}}=\trivalp{k}{\Lab{\psi} \wedge \left( \Lab{\varphi} \vee \bigwedge\limits_{x=1}^n \Lab{\metric{\wnext}{x}{x} (\varphi\metric{\release}{0}{(n-x)} \psi)}\right)} \\
		\end{equation*}	
		\noindent for all $\rangeco{k}{0}{\lambda}$ and that is proved next:
		\begin{eqnarray*}
			&&\trivalp{k}{\Lab{\mu}} \\
			&=&\trival{k}{\mu}\\
			&=&\trival{k}{\varphi \metric{\release}{0}{n} \psi}\\
			&\stackrel{Prop~\ref{prop:alternative:semantics}}{=}& \min\lbrace \trival{k}{\psi},\max\lbrace  \trival{k}{\varphi},\min\lbrace \trival{k}{\metric{\wnext}{x}{x} (\varphi\metric{\release}{0}{n-x} \psi)}\mid \rangecc{x}{1}{n}  \rbrace  \rbrace \rbrace\\
			&=& \min\lbrace \trivalp{k}{\Lab\psi},\max\lbrace  \trivalp{k}{\Lab\varphi},\min\lbrace \trivalp{k}{\Lab{\metric{\wnext}{x}{x} (\varphi\metric{\release}{0}{n-x} \psi)}}\mid \rangecc{x}{1}{n}  \rbrace  \rbrace \rbrace\\
			&=&\trivalp{k}{\Lab{\psi} \wedge \left( \Lab{\varphi} \vee \bigwedge\limits_{x=1}^n \Lab{\metric{\wnext}{x}{x} (\varphi\metric{\release}{0}{(n-x)} \psi)}\right)}.
		\end{eqnarray*}
	\end{enumerate}
\end{proofof}

 \begin{proofof}{Lemma~\ref{lem:nf2}}
	By induction on $\prec_{\dist}$
	\begin{enumerate}
		\item If $\mu$ is $\bot$ or an atom $p$, this is trivial because $\Lab\mu=\mu$ by definition.
		
		\item If $\mu=\varphi \otimes \psi$ for any propositional connective $\otimes \in \{\vee,\wedge,\to\}$ then:
		\[\begin{array}{rll}
			\trival{k}{\Lab\mu} &= \trival{k}{\Lab\varphi \otimes \Lab\psi}
			& \text{using the equality in }\eta(\mu) \\
			&= f^\otimes(\trival{k}{\Lab\varphi},\trival{k}{\Lab\psi}) \\
			&= f^\otimes(\trival{k}{\varphi},\trival{k}{\psi})
			& \text{By induction on } \varphi \prec_\dist \varphi \otimes \psi \hbox{ and } \psi \prec_\dist \varphi \otimes \psi \\
			&= \trival{k}{\varphi \otimes \psi} \\
			&= \trival{k}{\mu}
		\end{array}\]
		
		\item The cases for the non-metric temporal connectives are done as in~\cite{agcadipescscvi20a}

		\item If $\mu=\metricI{\previous} \varphi$ we divide into three cases cases:
		\begin{itemize}
			\item[-] If $k=0$, $\trival{0}{\metricI{\previous} \varphi}=0$.
			Since $\bm{m}$ is a model of $\sigma(\Gamma)$ and $\neg \Lab{\mu} \in \eta(\mu)$ we conclude that 
			$\trival{0}{\neg \Lab{\mu}}=2$. Therefore, $\trival{0}{\Lab{\mu}}=0$.
			
			\item[-]  If $k>0$ and $\tau(k)-\tau(k-1)\in I$ we use the formula $\wnext\alwaysF\left(\Lab{\mu} \leftrightarrow  \metricI{\previous}\Lab{\varphi}\right) \in \eta(\mu)$ to get
			
			\begin{eqnarray*}
			\trival{k}{\Lab{\mu}}&=&\trival{k}{\metricI{\previous} \Lab{\varphi}}\\
								 &=&\trival{k-1}{\Lab{\varphi}} \hspace{10pt}\hbox{ since } \tau(k)-\tau(k-1)\in I \hbox{ and } k>0\\
								 &=&\trival{k-1}{\varphi} \hspace{10pt}\hbox{ since } \varphi \prec_\dist \metricI{\previous} \varphi \hbox{ and IH} 
			\end{eqnarray*}
			
			\item[-] If $k>0$ and $\tau(k)-\tau(k-1)\not \in I$ we clearly conclude that that $\trival{k}{\metricI{\previous}\varphi} = 0$. 
			We use now the formula $\wnext\alwaysF\left(\Lab{\mu} \leftrightarrow  \metricI{\previous}\Lab{\varphi}\right) \in \eta(\mu)$ to get $\trival{k}{\Lab{\mu}}=\trival{k}{\metricI{\previous} \Lab{\varphi}} =0$ (since $\tau(k)-\tau(k-1)\not \in I$).
		\end{itemize}		
		\item If $\mu=\metricI{\wprevious} \varphi$, we consider the following cases:
		\begin{itemize}
			\item[-] If $k=0$, $\trival{0}{\metricI{\wprevious} \varphi} =2$ because of the semantics. In this case we use the formula $\Lab{\mu} \in \eta(\mu)$ to readily conclude that 
				$\trival{0}{\Lab{\mu}} = 2$.			
			\item[-]  If $k>0$ and $\tau(k)-\tau(k-1)\in I$, we use $\wnext\alwaysF\left(\Lab{\mu} \leftrightarrow  \metricI{\wprevious}\Lab{\varphi}\right) \in \eta(\mu)$ to get: 
			\begin{eqnarray*}
				\trival{k}{\Lab{\mu}}&=&\trival{k}{\metricI{\wprevious} \Lab{\varphi}}\\
					&=&\trival{k-1}{\Lab{\varphi}}\hspace{10pt}\hbox{ since } \tau(k)-\tau(k-1)\in I \hbox{ and } k > 0.\\
					&=&\trival{k-1}{\varphi} \hspace{10pt}\hbox{ since } \varphi \prec_\dist \metricI{\wprevious} \varphi  \hbox{ and IH}. \\
			\end{eqnarray*}
			\item[-] If $k>0$ and $\tau(k)-\tau(k-1)\not \in I$ we clearly conclude that that $\trival{k}{\metricI{\wprevious}\varphi} = 2$. 
			We use (again) the formula $\wnext\alwaysF\left(\Lab{\mu} \leftrightarrow  \metricI{\wprevious}\Lab{\varphi}\right) \in \eta(\mu)$ to get $\trival{k}{\Lab{\mu}}=\trival{k}{\metricI{\wprevious} \Lab{\varphi}} =2$ (since $\tau(k)-\tau(k-1)\not \in I$).
		\end{itemize}
		\item If $\mu=\varphi \metric{\since}{0}{n} \psi$, we will use the formula 
		\begin{equation}
			\alwaysF\left(\Lab{\mu}\leftrightarrow \left(\Lab{\psi} \vee   \left(\Lab{\varphi} \wedge \bigvee\limits_{x=1}^n \Lab{\metric{\previous}{x}{x} (\varphi\metric{\since}{0}{(n-x)} \psi)}\right)\right)\right) \in \eta(\mu).\label{eq:formula:since}
		\end{equation}
		We also remark that $\metric{\previous}{x}{x} ( \varphi\metric{\since}{0}{(n-x)} \psi) \in \cl{\mu}$, so $\eta(\metric{\previous}{x}{x} (\varphi\metric{\since}{0}{(n-x)}\psi)) \subseteq \eta(\mu)$ for all $\rangecc{x}{1}{n}$. We distinguish two cases:			
		\begin{itemize}
			\item[-] If $k=0$, we use the formulas $\neg \Lab{ \metric{\previous}{x}{x} (\varphi\metric{\since}{0}{(n-x)} \psi} \in \eta(\mu)$,
			for all $\rangecc{x}{1}{n}$, to conclude that $\trival{k}{\bigvee\limits_{x=1}^n \Lab{\metric{\previous}{x}{x} (\varphi\metric{\since}{0}{(n-x)} \psi)}} = 0$.
			Therereore, $\trival{k}{\eqref{eq:formula:since}} = \trival{k}{\Lab\psi} \stackrel{(IH)}{=} \trival{k}{\psi}=\trival{k}{\varphi \metric{\since}{0}{n}\psi}$. 
			\item[-] If $\rangeco{k}{1}{\lambda}$, we use~\eqref{eq:formula:release} to conclude that 			
			\begin{eqnarray*}
				&&\trival{k}{\Lab{\mu}}\nonumber\\ 
				&=& \max\lbrace \trival{k}{\Lab{\psi}},\min\lbrace  \trival{k}{\Lab{\varphi}},\max\lbrace \trival{k}{\Lab{\metric{\previous}{x}{x} (\varphi\metric{\since}{0}{n-x} \psi)} }\mid \rangecc{x}{1}{n}  \rbrace  \rbrace \rbrace \nonumber\\
				&=& \max\lbrace \trival{k}{\psi},\min\lbrace  \trival{k}{\varphi},\max\lbrace \trival{k}{\Lab{\metric{\previous}{x}{x} (\varphi\metric{\since}{0}{n-x} \psi)} }\mid \rangecc{x}{1}{n}  \rbrace  \rbrace \rbrace, 
			\end{eqnarray*}
			\noindent by induction\footnote{Note that $\varphi \prec_\dist \varphi \metric{\since}{0}{n} \psi$ and  $\psi \prec_\dist \varphi \metric{\since}{0}{n} \psi$.}. 	
			We will prove now that $\trival{k}{\Lab{\metric{\previous}{x}{x} (\varphi\metric{\since}{0}{n-x} \psi)} }=\trival{k}{\metric{\previous}{x}{x} (\varphi\metric{\since}{0}{n-x} \psi) }$, for all $\rangecc{x}{1}{n}$. This can be directly done by using the formula 
			\begin{equation*}
				\wnext\alwaysF\left(\Lab{\metric{\previous}{x}{x} (\varphi\metric{\since}{0}{n-x} \psi)} \leftrightarrow  \metric{\previous}{x}{x}\Lab{(\varphi\metric{\since}{0}{n-x} \psi)}\right) \in \eta(\mu). 
			\end{equation*}

			\begin{enumerate}
				\item If $\tau(k)-\tau(k-1) = x$, we reason as follows 
				
				\begin{eqnarray*}
					\trival{k}{\Lab{\metric{\previous}{x}{x} (\varphi\metric{\since}{0}{n-x} \psi)}} &=& \trival{k}{\metric{\previous}{x}{x} 	\Lab{\varphi\metric{\since}{0}{n-x} \psi}}\\
																									 &=& \trival{k-1}{\Lab{\varphi\metric{\since}{0}{n-x} \psi}}\\
																									 & \stackrel{(IH)}{=}& \trival{k-1}{\varphi\metric{\since}{0}{n-x} \psi}\\
																									 &=&\trival{k}{\metric{\previous}{x}{x}\left(\varphi\metric{\since}{0}{n-x} \psi\right)}.
				\end{eqnarray*} 
				\item If $\tau(k)-\tau(k-1) \not = x$, the entailment relation tells us that $\trival{k}{\metric{\previous}{x}{x}\left(\varphi\metric{\since}{0}{n-x} \psi\right)} = 0$. The same result for $\Lab{\metric{\previous}{x}{x} (\varphi\metric{\since}{0}{n-x} \psi)}$ can be obtained:
				\begin{equation*}
					\trival{k}{\Lab{\metric{\previous}{x}{x} (\varphi\metric{\since}{0}{n-x} \psi)}} = \trival{k}{\metric{\previous}{x}{x} 	\Lab{\varphi\metric{\since}{0}{n-x} \psi}} = 0.
				\end{equation*}
			\end{enumerate}
			
			Since $\Lab{\metric{\previous}{x}{x} (\varphi\metric{\since}{0}{n-x} \psi)}$ was chosen arbitrarily, it follows that 		
			\begin{eqnarray*}
				&&\max\lbrace \trival{k}{\Lab{\metric{\previous}{x}{x} (\varphi\metric{\since}{0}{n-x} \psi)} }\mid \rangecc{x}{1}{n}  \rbrace \\
				&=& \max\lbrace \trival{k}{\metric{\previous}{x}{x} (\varphi\metric{\since}{0}{n-x} \psi)}\mid \rangecc{x}{1}{n}  \rbrace.
			\end{eqnarray*}
			\noindent We can use now this fact to conclude that, 		
			\begin{eqnarray*}
				&&\trival{k}{\Lab{\mu}}\nonumber\\ 
				&=& \max\lbrace \trival{k}{\Lab{\psi}},\min\lbrace  \trival{k}{\Lab{\varphi}},\max\lbrace \trival{k}{\Lab{\metric{\previous}{x}{x} (\varphi\metric{\since}{0}{n-x} \psi)} }\mid \rangecc{x}{1}{n}  \rbrace  \rbrace \rbrace \nonumber\\
				&=& \max\lbrace \trival{k}{\psi},\min\lbrace  \trival{k}{\varphi},\max\lbrace \trival{k}{\Lab{\metric{\previous}{x}{x} (\varphi\metric{\since}{0}{n-x} \psi)} }\mid \rangecc{x}{1}{n}  \rbrace  \rbrace \rbrace\\
				&=& \max\lbrace \trival{k}{\psi},\min\lbrace  \trival{k}{\varphi},\max\lbrace \trival{k}{\metric{\previous}{x}{x} (\varphi\metric{\since}{0}{n-x} \psi) }\mid \rangecc{x}{1}{n}  \rbrace  \rbrace \rbrace\\
				&\stackrel{Prop.~\ref{prop:alternative:semantics}}{=}&	\trival{k}{\mu}.			
			\end{eqnarray*}
		\end{itemize}

	\item If $\mu=\varphi \metric{\trigger}{0}{n} \psi$, we will use the formula 
	\begin{equation}
		\alwaysF\left(\Lab{\mu}\leftrightarrow \left(\Lab{\psi} \wedge   \left(\Lab{\varphi} \vee \bigwedge\limits_{x=1}^n \Lab{\metric{\wprevious}{x}{x} (\varphi\metric{\trigger}{0}{(n-x)} \psi)}\right)\right)\right) \in \eta(\mu).\label{eq:formula:trigger}
	\end{equation}
	We also remark that $\metric{\wprevious}{x}{x} ( \varphi\metric{\trigger}{0}{(n-x)} \psi) \in \cl{\mu}$, so $\eta(\metric{\wprevious}{x}{x} (\varphi\metric{\trigger}{0}{(n-x)}\psi)) \subseteq \eta(\mu)$ for all $\rangecc{x}{1}{n}$. We distinguish two cases:			
	\begin{itemize}
		\item[-] If $k=0$, we use the formulas $\Lab{ \metric{\wprevious}{x}{x} (\varphi\metric{\trigger}{0}{(n-x)} \psi} \in \eta(\mu)$,
		for all $\rangecc{x}{1}{n}$, to conclude that $\trival{k}{\bigwedge\limits_{x=1}^n \Lab{\metric{\wprevious}{x}{x} (\varphi\metric{\trigger}{0}{(n-x)} \psi)}} = 2$.
		Therereore, $\trival{k}{\eqref{eq:formula:trigger}} = \trival{k}{\Lab\psi} \stackrel{(IH)}{=} \trival{k}{\psi}=\trival{k}{\varphi \metric{\trigger}{0}{n}\psi}$. 
		\item[-] If $\rangeco{k}{1}{\lambda}$, we use~\eqref{eq:formula:release} to conclude that 			
		\begin{eqnarray*}
			&&\trival{k}{\Lab{\mu}}\nonumber\\ 
			&=& \min\lbrace \trival{k}{\Lab{\psi}},\max\lbrace  \trival{k}{\Lab{\varphi}},\min\lbrace \trival{k}{\Lab{\metric{\wprevious}{x}{x} (\varphi\metric{\trigger}{0}{n-x} \psi)} }\mid \rangecc{x}{1}{n}  \rbrace  \rbrace \rbrace \nonumber\\
			&=& \min\lbrace \trival{k}{\psi},\max\lbrace  \trival{k}{\varphi},\min\lbrace \trival{k}{\Lab{\metric{\wprevious}{x}{x} (\varphi\metric{\trigger}{0}{n-x} \psi)} }\mid \rangecc{x}{1}{n}  \rbrace  \rbrace \rbrace, 
		\end{eqnarray*}
		\noindent by induction\footnote{Note that $\varphi \prec_\dist \varphi \metric{\trigger}{0}{n} \psi$ and  $\psi \prec_\dist \varphi \metric{\trigger}{0}{n} \psi$.}. 	
		We will prove now that $\trival{k}{\Lab{\metric{\wprevious}{x}{x} (\varphi\metric{\trigger}{0}{n-x} \psi)} }=\trival{k}{\metric{\wprevious}{x}{x} (\varphi\metric{\trigger}{0}{n-x} \psi) }$, for all $\rangecc{x}{1}{n}$. This can be directly done by using the formula 
		\begin{equation*}
			\wnext\alwaysF\left(\Lab{\metric{\wprevious}{x}{x} (\varphi\metric{\trigger}{0}{n-x} \psi)} \leftrightarrow  \metric{\wprevious}{x}{x}\Lab{(\varphi\metric{\trigger}{0}{n-x} \psi)}\right) \in \eta(\mu). 
		\end{equation*}			
				
		\begin{enumerate}
			\item If $\tau(k)-\tau(k-1) = x$, we reason as follows 
			
			\begin{eqnarray*}
				\trival{k}{\Lab{\metric{\wprevious}{x}{x} (\varphi\metric{\trigger}{0}{n-x} \psi)}} &=& \trival{k}{\metric{\wprevious}{x}{x} 	\Lab{\varphi\metric{\trigger}{0}{n-x} \psi}}\\
				&=& \trival{k-1}{\Lab{\varphi\metric{\trigger}{0}{n-x} \psi}}\\
				& \stackrel{(IH)}{=}& \trival{k-1}{\varphi\metric{\trigger}{0}{n-x} \psi}\\
				&=&\trival{k}{\metric{\wprevious}{x}{x}\left(\varphi\metric{\trigger}{0}{n-x} \psi\right)}.
			\end{eqnarray*} 
			\item If $\tau(k)-\tau(k-1) \not = x$, the entailment relation tells us that $\trival{k}{\metric{\wprevious}{x}{x}\left(\varphi\metric{\trigger}{0}{n-x} \psi\right)} = 2$. The same result for $\Lab{\metric{\wprevious}{x}{x} (\varphi\metric{\trigger}{0}{n-x} \psi)}$ can be obtained:
			\begin{equation*}
				\trival{k}{\Lab{\metric{\wprevious}{x}{x} (\varphi\metric{\trigger}{0}{n-x} \psi)}} = \trival{k}{\metric{\wprevious}{x}{x} 	\Lab{\varphi\metric{\trigger}{0}{n-x} \psi}} = 2.
			\end{equation*}
		\end{enumerate}		
		Since $\Lab{\metric{\wprevious}{x}{x} (\varphi\metric{\trigger}{0}{n-x} \psi)}$ was chosen arbitrarily, it follows that 		
		\begin{eqnarray*}
			&&\min\lbrace \trival{k}{\Lab{\metric{\wprevious}{x}{x} (\varphi\metric{\trigger}{0}{n-x} \psi)} }\mid \rangecc{x}{1}{n}  \rbrace \\
			&=& \min\lbrace \trival{k}{\metric{\wprevious}{x}{x} (\varphi\metric{\trigger}{0}{n-x} \psi)}\mid \rangecc{x}{1}{n}  \rbrace.
		\end{eqnarray*}
		\noindent We can use now this fact to conclude that, 		
		\begin{eqnarray*}
			&&\trival{k}{\Lab{\mu}}\nonumber\\ 
			&=& \min\lbrace \trival{k}{\Lab{\psi}},\max\lbrace  \trival{k}{\Lab{\varphi}},\min\lbrace \trival{k}{\Lab{\metric{\wprevious}{x}{x} (\varphi\metric{\trigger}{0}{n-x} \psi)} }\mid \rangecc{x}{1}{n}  \rbrace  \rbrace \rbrace \nonumber\\
			&=& \min\lbrace \trival{k}{\psi},\max\lbrace  \trival{k}{\varphi},\min\lbrace \trival{k}{\Lab{\metric{\wprevious}{x}{x} (\varphi\metric{\trigger}{0}{n-x} \psi)} }\mid \rangecc{x}{1}{n}  \rbrace  \rbrace \rbrace\\
			&=& \min\lbrace \trival{k}{\psi},\max\lbrace  \trival{k}{\varphi},\min\lbrace \trival{k}{\metric{\wprevious}{x}{x} (\varphi\metric{\trigger}{0}{n-x} \psi) }\mid \rangecc{x}{1}{n}  \rbrace  \rbrace \rbrace\\
			&\stackrel{Prop.~\ref{prop:alternative:semantics}}{=}&	\trival{k}{\mu}.			
		\end{eqnarray*}
	\end{itemize}
	 
		\item If $\mu=\metricI{\next} \varphi$ we divide the proof into three cases:
		\begin{itemize}
			\item[-] If $k=\lambda -1$, $\trival{k}{\metricI{\next}\varphi} = 0$.
			We use the expression $\alwaysF\left( \finally \to \neg \Lab{\mu}\right) \in \eta(\mu)$ to conclude that $\trival{k}{\neg \Lab\mu} = 2$, which means that $\trival{k}{\Lab\mu} = 0$.

			\item[-]  If $\rangeco{k}{0}{\lambda-1}$ and $\tau(k+1)-\tau(k) \in I$ we use the formula 
				\begin{equation*}
						\wnext\alwaysF\left( \metricI{\previous} \top \to \left(\previous \Lab{\mu} \leftrightarrow \Lab{\varphi}\right) \right)\in \eta(\mu)
				\end{equation*}
			\noindent to get that
			$\trival{k+1}{\previous \Lab{\mu}} = \trival{k+1}{\Lab{\varphi}}$. Since $\varphi \prec_\dist \metricI{\next}\varphi$ then we can apply induction 
			to get $\trival{k+1}{\previous \Lab{\mu}} = \trival{k+1}{\varphi}$. By the semantics we get that $\trival{k}{\Lab{\mu}} = \trival{k+1}{\varphi}$.
			Since $\tau(k+1)-\tau(k) \in I$ it follows $\trival{k}{\Lab{\mu}} = \trival{k}{\metricI{\next}\varphi}$.
			\item[-]  If $\rangeco{k}{0}{\lambda-1}$ and $\tau(k+1)-\tau(k) \not \in I$, $\trival{k}{\metricI{\next} \varphi} = 0$.
			We consider the formula  $\wnext\alwaysF\left( \neg \metricI{\previous} \top \to \previous \neg \Lab{\mu} \right) \in \eta(\mu)$.
			Since $\tau(k+1)-\tau(k) \not \in I$, $\trival{k+1}{\metricI{\previous} \top}=0$ so $\trival{k+1}{\neg \metricI{\previous} \top}=2$.
			By Modus Ponens, $\trival{k+1}{\previous \neg \Lab{\mu}} = 2$. By the semantics, $\trival{k}{\neg \Lab{\mu}} = 2$ so $\trival{k}{\Lab{\mu}} = 0$ as required.
		\end{itemize}		
	\item If $\mu=\metricI{\wnext} \varphi$ we divide the proof into three cases:

	\begin{itemize}
	\item[-] If $k=\lambda -1$, $\trival{k}{\wnext\varphi} = 2$ because of the semantics. We use now the formula $\alwaysF\left( \finally \to  \Lab{\mu}\right)  \in \eta(\mu)$ so conclude $\trival{k}{\Lab\mu}=2$. 
	\item[-]  If $\rangeco{k}{0}{\lambda-1}$ and $\tau(k+1)-\tau(k) \in I$ we use the first formula in $\eta(\mu)$ to get that
	$\trival{k+1}{\previous \Lab{\mu}} = \trival{k+1}{\Lab{\varphi}}$. Since $\varphi \prec_\dist \metricI{\wnext}\varphi$ then we can apply induction 
	to get $\trival{k+1}{\previous \Lab{\mu}} = \trival{k+1}{\varphi}$. By the semantics we get that $\trival{k}{\Lab{\mu}} = \trival{k+1}{\varphi}$.
	Since $\tau(k+1)-\tau(k) \in I$ it follows $\trival{k}{\Lab{\mu}} = \trival{k}{\metricI{\wnext}\varphi}$.
	\item[-]  If $\rangeco{k}{0}{\lambda-1}$ and $\tau(k+1)-\tau(k) \not \in I$, $\trival{k}{\wnext \varphi} = 2$. 
	We use not the formula $\wnext\alwaysF\left( \neg \metricI{\previous} \top \to  \previous \Lab{\mu} \right)\in \eta(\mu)$.
	Since $\tau(k+1)-\tau(k) \not \in I$, $\trival{k+1}{\neg \previous \top} = 2$ so $\trival{k+1}{\previous \Lab \mu}=2$ because of Modus Ponens.
	By the semantics, $\trival{k}{\Lab\mu} = 2$ as required.	
\end{itemize}
	\item If $\mu=\varphi \metric{\until}{0}{n} \psi$. To prove this case we will use the formula 
		\begin{equation}
				\alwaysF\left(\Lab{\mu}\leftrightarrow \Lab{\psi} \vee \left( \Lab{\varphi} \wedge \bigvee\limits_{x=1}^n \Lab{\metric{\next}{x}{x} (\varphi\metric{\until}{0}{(n-x)} \psi)}\right)\right) \in \eta(\mu).\label{eq:formula:until}
		\end{equation}
		We also remark that $\metric{\next}{x}{x} ( \varphi\metric{\until}{0}{(n-x)} \psi) \in \cl{\mu}$, so $\eta(\metric{\next}{x}{x} (\varphi\metric{\until}{0}{(n-x)}\psi)) \subseteq \eta(\mu)$ for all $\rangecc{x}{1}{n}$. We distinguish two cases:	

		\begin{itemize}
			\item[-] If $k=\lambda -1$, we use the formulas $\alwaysF\left( \finally \to \neg \Lab{ \metric{\next}{x}{x} (\varphi\metric{\until}{0}{(n-x)} \psi} \right) \in \eta(\mu)$,
			for all $\rangecc{x}{1}{n}$, to conclude that $\trival{k}{\bigvee\limits_{x=1}^n \Lab{\metric{\next}{x}{x} (\varphi\metric{\until}{0}{(n-x)} \psi)}} = 0$.
			Therereore, $\trival{k}{\eqref{eq:formula:until}} = \trival{k}{\Lab\psi} \stackrel{(IH)}{=} \trival{k}{\psi}=\trival{k}{\varphi \metric{\until}{0}{n}\psi}$. 
			\item[-] If $\rangeco{k}{0}{\lambda-1}$, we use~\eqref{eq:formula:until} to conclude that 			
			\begin{eqnarray*}
				&&\trival{k}{\Lab{\mu}}\nonumber\\ 
				&=& \max\lbrace \trival{k}{\Lab{\psi}},\min\lbrace  \trival{k}{\Lab{\varphi}},\max\lbrace \trival{k}{\Lab{\metric{\next}{x}{x} (\varphi\metric{\until}{0}{n-x} \psi)} }\mid \rangecc{x}{1}{n}  \rbrace  \rbrace \rbrace \nonumber\\
				&=& \max\lbrace \trival{k}{\psi},\min\lbrace  \trival{k}{\varphi},\max\lbrace \trival{k}{\Lab{\metric{\next}{x}{x} (\varphi\metric{\until}{0}{n-x} \psi)} }\mid \rangecc{x}{1}{n}  \rbrace  \rbrace \rbrace, 
			\end{eqnarray*}
			\noindent by induction\footnote{Note that $\varphi \prec_\dist \varphi \metric{\until}{0}{n} \psi$ and  $\psi \prec_\dist \varphi \metric{\until}{0}{n} \psi$.}. 	
			We will prove now that $\trival{k}{\Lab{\metric{\next}{x}{x} (\varphi\metric{\until}{0}{n-x} \psi)} }=\trival{k}{\metric{\next}{x}{x} (\varphi\metric{\until}{0}{n-x} \psi) }$, for all $\rangecc{x}{1}{n}$. 	
			\begin{enumerate}
				\item If $\tau(k+1)-\tau(k) = x$ then  we use the formula 
				\begin{equation*}
				\wnext\alwaysF\left( \metric{\previous}{x}{x} \top \to \left(\previous \Lab{\metric{\next}{x}{x} (\varphi\metric{\until}{0}{n-x} \psi)} \leftrightarrow \Lab{(\varphi\metric{\until}{0}{n-x} \psi)}\right) \right) \in \eta(\mu)
				\end{equation*}
			\noindent to conclude that $\trival{k+1}{\previous \Lab{\metric{\next}{x}{x} (\varphi\metric{\until}{0}{n-x} \psi)}} = \trival{k+1}{\Lab{(\varphi\metric{\until}{0}{n-x} \psi)}}$.
			By induction, $\trival{k+1}{\previous \Lab{\metric{\next}{x}{x} (\varphi\metric{\until}{0}{n-x} \psi)}} = \trival{k+1}{\varphi\metric{\until}{0}{n-x} \psi}$\footnote{$\varphi\metric{\until}{0}{n-x} \psi \prec_\dist \mu$}.
			Since $\tau(k+1)-\tau(k) = x$, $\trival{k}{\Lab{\metric{\next}{x}{x} (\varphi\metric{\until}{0}{n-x} \psi)}} = \trival{k}{\metric{\next}{x}{x}\left(\varphi\metric{\until}{0}{n-x} \psi\right)}$.
			\item If $\tau(k+1)-\tau(k) \not = x$, $\trival{k}{\metric{\next}{x}{x}\left(\varphi\metric{\until}{0}{n-x} \psi\right)} = 0$. We use now the formula $\wnext\alwaysF\left( \neg \metric{\previous}{n}{n} \top \to \previous \neg \Lab{\metric{\next}{x}{x} (\varphi\metric{\until}{0}{n-x} \psi)} \right) \in \eta(\mu)$ to conclude that 
			$2 = \trival{k+1}{\previous \neg \Lab{\metric{\next}{x}{x} (\varphi\metric{\until}{0}{n-x} \psi)}} = \trival{k}{\neg \Lab{\metric{\next}{x}{x} (\varphi\metric{\until}{0}{n-x} \psi)}}$.
			By the semantics, $\trival{k}{\Lab{\metric{\next}{x}{x} (\varphi\metric{\until}{0}{n-x} \psi)}} = 0$. 
			\end{enumerate}
			
			Since $\Lab{\metric{\next}{x}{x} (\varphi\metric{\until}{0}{n-x} \psi)}$ was chosen arbitrarily, it follows that 		
				\begin{eqnarray*}
					&&\max\lbrace \trival{k}{\Lab{\metric{\next}{x}{x} (\varphi\metric{\until}{0}{n-x} \psi)} }\mid \rangecc{x}{1}{n}  \rbrace \\
					&=& \max\lbrace \trival{k}{\metric{\next}{x}{x} (\varphi\metric{\until}{0}{n-x} \psi)}\mid \rangecc{x}{1}{n}  \rbrace.
				\end{eqnarray*}
			\noindent We can use now this fact to conclude that, 		
			\begin{eqnarray*}
				&&\trival{k}{\Lab{\mu}}\nonumber\\ 
				&=& \max\lbrace \trival{k}{\Lab{\psi}},\min\lbrace  \trival{k}{\Lab{\varphi}},\max\lbrace \trival{k}{\Lab{\metric{\next}{x}{x} (\varphi\metric{\until}{0}{n-x} \psi)} }\mid \rangecc{x}{1}{n}  \rbrace  \rbrace \rbrace \nonumber\\
				&=& \max\lbrace \trival{k}{\psi},\min\lbrace  \trival{k}{\varphi},\max\lbrace \trival{k}{\Lab{\metric{\next}{x}{x} (\varphi\metric{\until}{0}{n-x} \psi)} }\mid \rangecc{x}{1}{n}  \rbrace  \rbrace \rbrace\\
				&=& \max\lbrace \trival{k}{\psi},\min\lbrace  \trival{k}{\varphi},\max\lbrace \trival{k}{\metric{\next}{x}{x} (\varphi\metric{\until}{0}{n-x} \psi) }\mid \rangecc{x}{1}{n}  \rbrace  \rbrace \rbrace\\
				&\stackrel{Prop.~\ref{prop:alternative:semantics}}{=}&	\trival{k}{\mu}.			
			\end{eqnarray*}
		\end{itemize}
		\item The case for $\mu=\varphi \metric{\release}{0}{n} \psi$ is similar to the case of $\mu=\psi \metric{\until}{0}{n} \varphi$. To prove this case we will use the formula 
		\begin{equation}
			 \wnext\alwaysF\left(\Lab{\mu}\leftrightarrow \Lab{\psi} \wedge \left( \Lab{\varphi} \vee \bigwedge\limits_{x=1}^n \Lab{\metric{\wnext}{x}{x} (\varphi\metric{\release}{0}{(n-x)} \psi)}\right)\right)\in \eta(\mu). \label{eq:formula:release}
		\end{equation}		
		We also remark that $\metric{\wnext}{x}{x} ( \varphi\metric{\release}{0}{(n-x)} \psi) \in \cl{\mu}$, so $\eta(\metric{\wnext}{x}{x} (\varphi\metric{\release}{0}{(n-x)}\psi)) \subseteq \eta(\mu)$ for all $\rangecc{x}{1}{n}$. We distinguish two cases:	
		
		\begin{itemize}
			\item[-] If $k=\lambda -1$, we use the formulas $\alwaysF\left( \finally \to  \Lab{ \metric{\wnext}{x}{x} (\varphi\metric{\release}{0}{(n-x)} \psi} \right) \in \eta(\mu)$,
			for all $\rangecc{x}{1}{n}$, to conclude that $\trival{k}{\bigwedge\limits_{x=1}^n \Lab{\metric{\wnext}{x}{x} (\varphi\metric{\release}{0}{(n-x)} \psi)}} = 2$.
			Therereore, $\trival{k}{\eqref{eq:formula:release}} = \trival{k}{\Lab\psi} \stackrel{(IH)}{=} \trival{k}{\psi}=\trival{k}{\varphi \metric{\release}{0}{n}\psi}$. 
			\item[-] If $\rangeco{k}{0}{\lambda-1}$, we use~\eqref{eq:formula:release} to conclude that 			
			\begin{eqnarray*}
				&&\trival{k}{\Lab{\mu}}\nonumber\\ 
				&=& \min\lbrace \trival{k}{\Lab{\psi}},\max\lbrace  \trival{k}{\Lab{\varphi}},\min\lbrace \trival{k}{\Lab{\metric{\wnext}{x}{x} (\varphi\metric{\release}{0}{n-x} \psi)} }\mid \rangecc{x}{1}{n}  \rbrace  \rbrace \rbrace \nonumber\\
				&=& \min\lbrace \trival{k}{\psi},\max\lbrace  \trival{k}{\varphi},\min\lbrace \trival{k}{\Lab{\metric{\wnext}{x}{x} (\varphi\metric{\release}{0}{n-x} \psi)} }\mid \rangecc{x}{1}{n}  \rbrace  \rbrace \rbrace, 
			\end{eqnarray*}
			\noindent by induction\footnote{Note that $\varphi \prec_\dist \varphi \metric{\release}{0}{n} \psi$ and  $\psi \prec_\dist \varphi \metric{\release}{0}{n} \psi$.}. 	
			We will prove now that $\trival{k}{\Lab{\metric{\wnext}{x}{x} (\varphi\metric{\release}{0}{n-x} \psi)} }=\trival{k}{\metric{\wnext}{x}{x} (\varphi\metric{\release}{0}{n-x} \psi) }$, for all $\rangecc{x}{1}{n}$. 	
			\begin{enumerate}
				\item If $\tau(k+1)-\tau(k) = x$ then  we use the formula 
				\begin{equation*}
					\wnext\alwaysF\left( \metric{\previous}{x}{x} \top \to \left(\previous \Lab{\metric{\wnext}{x}{x} (\varphi\metric{\release}{0}{n-x} \psi)} \leftrightarrow \Lab{(\varphi\metric{\release}{0}{n-x} \psi)}\right) \right) \in \eta(\mu)
				\end{equation*}
				\noindent to conclude that $\trival{k+1}{\previous \Lab{\metric{\wnext}{x}{x} (\varphi\metric{\release}{0}{n-x} \psi)}} = \trival{k+1}{\Lab{(\varphi\metric{\release}{0}{n-x} \psi)}}$.
				By induction, $\trival{k+1}{\previous \Lab{\metric{\wnext}{x}{x} (\varphi\metric{\release}{0}{n-x} \psi)}} = \trival{k+1}{\varphi\metric{\release}{0}{n-x} \psi}$\footnote{$\varphi\metric{\release}{0}{n-x} \psi \prec_\dist \mu$}.
				Since $\tau(k+1)-\tau(k) = x$, $\trival{k}{\Lab{\metric{\wnext}{x}{x} (\varphi\metric{\release}{0}{n-x} \psi)}} = \trival{k}{\metric{\wnext}{x}{x}\left(\varphi\metric{\release}{0}{n-x} \psi\right)}$.
				\item If $\tau(k+1)-\tau(k) \not = x$, $\trival{k}{\metric{\wnext}{x}{x}\left(\varphi\metric{\release}{0}{n-x} \psi\right)} = 2$. 
				We use now the formula $\wnext\alwaysF\left( \neg \metric{\previous}{n}{n} \top \to \previous \Lab{\metric{\wnext}{x}{x} (\varphi\metric{\release}{0}{n-x} \psi)} \right) \in \eta(\mu)$ to conclude that 
				$2 = \trival{k+1}{\previous \Lab{\metric{\wnext}{x}{x} (\varphi\metric{\release}{0}{n-x} \psi)}} = \trival{k}{\Lab{\metric{\wnext}{x}{x} (\varphi\metric{\release}{0}{n-x} \psi)}}$.
			\end{enumerate}			
			Since $\Lab{\metric{\wnext}{x}{x} (\varphi\metric{\release}{0}{n-x} \psi)}$ was chosen arbitrarily, it follows that 		
			\begin{eqnarray*}
				&&\min\lbrace \trival{k}{\Lab{\metric{\wnext}{x}{x} (\varphi\metric{\release}{0}{n-x} \psi)} }\mid \rangecc{x}{1}{n}  \rbrace \\
				&=&\min\lbrace \trival{k}{\metric{\wnext}{x}{x} (\varphi\metric{\release}{0}{n-x} \psi)}\mid \rangecc{x}{1}{n}  \rbrace.
			\end{eqnarray*}
			\noindent We can use now this fact to conclude that, 		
			\begin{eqnarray*}
				&&\trival{k}{\Lab{\mu}}\nonumber\\ 
				&=& \min\lbrace \trival{k}{\Lab{\psi}},\max\lbrace  \trival{k}{\Lab{\varphi}},\min\lbrace \trival{k}{\Lab{\metric{\wnext}{x}{x} (\varphi\metric{\release}{0}{n-x} \psi)} }\mid \rangecc{x}{1}{n}  \rbrace  \rbrace \rbrace \nonumber\\
				&=& \min\lbrace \trival{k}{\psi},\max\lbrace  \trival{k}{\varphi},\min\lbrace \trival{k}{\Lab{\metric{\wnext}{x}{x} (\varphi\metric{\release}{0}{n-x} \psi)} }\mid \rangecc{x}{1}{n}  \rbrace  \rbrace \rbrace\\
				&=& \min\lbrace \trival{k}{\psi},\max\lbrace  \trival{k}{\varphi},\min\lbrace \trival{k}{\metric{\wnext}{x}{x} (\varphi\metric{\release}{0}{n-x} \psi) }\mid \rangecc{x}{1}{n}  \rbrace  \rbrace \rbrace\\
				&\stackrel{Prop.~\ref{prop:alternative:semantics}}{=}&	\trival{k}{\mu}.			
			\end{eqnarray*}
		\end{itemize}
\end{enumerate}
\end{proofof}

 \begin{proofof}{Theorem~\ref{them:polynomial}}
$\eta^*(\varphi)$ is computed with respect to $cl(\varphi)$ whose size can be bounded by a polynomial depending on the maximum $n$ use in the intervals occurring in $\varphi$ and $\lvert \varphi \rvert$. 
For each formula  $\mu \in cl(\varphi)$ we generate an associated set of rules: for the case of Boolean connectives and formulas of the form $\metricI{\next}\varphi$, $\metricI{\wnext}\varphi$, $\metricI{\previous}\varphi$ and $\metricI{\wprevious}\varphi$, the number is constant and depends on $\lvert \mu\rvert$.
In the case of formulas of the form $\varphi \metric{\until}{0}{n}\varphi$, $\varphi \metric{\release}{0}{n}\varphi$, $\varphi \metric{\since}{0}{n}\varphi$ and $\varphi \metric{\trigger}{0}{n}\varphi$, the number of generated formulas depends on $n$ and it is bounded to $2n+6$. Therefore, we can bound the number of generated rules by a polynomial bound that depends on two parameters: $n$ and $\lvert \varphi \rvert$.
\end{proofof}
 \section{Translation of Non-Metric Temporal Connectives}
\begin{table}[h!]\centering
\caption{Translation of the propositional connectives.}
\label{tbl:tseitin:prop}
{\tablefont \begin{tabular}{@{\extracolsep{\fill}}ccc}\topline
$\mu$ & $\eta(\mu)$ & $\eta^*(\mu)$ \\\hline
$\varphi \wedge \psi$ & $\alwaysF (\Lab{\mu} \leftrightarrow \Lab{\varphi} \wedge \Lab{\psi})$  & $\begin{array}{l} 
 \wnext \alwaysF( \Lab{\mu} \to   \Lab{\varphi} )\\
\wnext \alwaysF ( \Lab{\mu}  \to   \Lab{\psi} )\\
\wnext \alwaysF( \Lab{\varphi} \wedge  \Lab{\psi}  \to \Lab{\mu})\\
\Lab{\mu}    \to   \Lab{\varphi}\\
\Lab{\mu}   \to    \Lab{\psi} \\
\Lab{\varphi} \wedge \Lab{\psi}  \to   \Lab{\mu} \end{array}$\\\hline
$\varphi \vee \psi$ & $\alwaysF (\Lab{\mu} \leftrightarrow \Lab{\varphi} \vee \Lab{\psi})$ &
$\begin{array}{l}
  \wnext \alwaysF( \Lab{\varphi}  \to   \Lab{\mu} )\\
\wnext \alwaysF ( \Lab{\psi}  \to   \Lab{\mu} )\\
\wnext \alwaysF ( \Lab{\mu}  \to \Lab{\varphi} \vee  \Lab{\psi} )\\
\Lab{\varphi} \to   \Lab{\mu} \\
\Lab{\psi}   \to    \Lab{\mu} \\
\Lab{\mu}   \to   \Lab{\varphi} \vee \Lab{\psi}
\end{array}$\\\hline
$\varphi \to \psi$ & $\alwaysF (\Lab{\mu} \leftrightarrow( \Lab{\varphi} \to \Lab{\psi}))$ & $\begin{array}{l}
	\wnext \alwaysF( \Lab{\mu}\wedge  \Lab{\varphi}  \to  \Lab{\psi})\\
	\wnext \alwaysF(\neg \Lab{\varphi}  \to   \Lab{\mu} )\\
	\wnext \alwaysF( \Lab{\psi}  \to   \Lab{\mu} )\\
	\wnext \alwaysF(\top  \to   \Lab{\varphi}\vee\neg \Lab{\psi} \vee \Lab{\mu} ) \\
	\Lab{\mu}\wedge \Lab{\varphi}   \to   \Lab{\psi} \\
	\neg \Lab{\varphi}  \to  \Lab{\mu}  \\
	\Lab{\psi}    \to   \Lab{\mu}    \\
	\Lab{\varphi}\vee \neg \Lab{\psi} \vee \Lab{\mu}\end{array}$\botline
\end{tabular}}

\end{table}

 \begin{table}[h!]\centering
\caption{Translation of the non-metric temporal operators.}
\label{tbl:tseitin:ltl}
{\tablefont
\begin{tabular}{@{\extracolsep{\fill}}ccc}\topline
$\mu$ & $\eta(\mu)$ & $\eta^*(\mu)$ \\\hline
$\previous \varphi$& $\begin{array}{c}
	\wnext \alwaysF (\Lab{\mu} \leftrightarrow \previous \Lab{\varphi})\\
	\neg \Lab\mu
\end{array}$ &$\begin{array}{c} 
	\wnext\alwaysF ( \Lab{\mu}  \to  \previous \Lab{\varphi} )\\
	\wnext\alwaysF ( \previous \Lab{\varphi}  \to   \Lab{\mu} )\\
	\neg \Lab{\mu}
\end{array}$\\\hline	
$\wprevious \varphi$& $\begin{array}{c}
	\wnext \alwaysF (\Lab{\mu} \leftrightarrow \previous \Lab{\varphi})\\
	\Lab\mu
\end{array}$ &$\begin{array}{c} 
	\wnext\alwaysF ( \Lab{\mu}  \to  \previous \Lab{\varphi} )\\
	\wnext\alwaysF ( \previous \Lab{\varphi}  \to   \Lab{\mu} )\\
	\Lab{\mu}
\end{array}$\\\hline	
$\next \varphi$& $\begin{array}{l}\wnext\alwaysF ( \previous \Lab{\mu} \leftrightarrow \Lab{\varphi})\\ \alwaysF (\finally \to \neg  \Lab{\mu}) \end{array}$ 
& $\begin{array}{l} 
	\wnext \alwaysF ( \previous \Lab{\mu} \to   \Lab{\varphi} )\\
	\wnext \alwaysF ( \Lab{\varphi}    \to  \previous \Lab{\mu} ) \\
	\alwaysF (\finally \to \neg  \Lab{\mu})
\end{array}$ \\\hline
$\wnext \varphi$& $\begin{array}{l}\wnext\alwaysF ( \previous \Lab{\mu} \leftrightarrow \Lab{\varphi})\\ \alwaysF (\finally \to  \Lab{\mu}) \end{array}$ 
& $\begin{array}{l} 
	\wnext \alwaysF ( \previous \Lab{\mu} \to   \Lab{\varphi} )\\
	\wnext \alwaysF ( \Lab{\varphi}    \to  \previous \Lab{\mu} ) \\
	\alwaysF (\finally \to  \Lab{\mu})
\end{array}$ \\\hline	
$ \varphi \since \psi$  & $\begin{array}{l} 
	\wnext \alwaysF (  \Lab{\mu} \leftrightarrow \Lab{\psi} \vee ( \Lab{\varphi} \wedge \previous \Lab{\mu}))\\
	\Lab\mu \leftrightarrow \Lab\psi
\end{array}$  & $\begin{array}{l} 
	\wnext \alwaysF (  \Lab{\mu} \to \Lab{\psi} \vee  \Lab{\varphi})\\
	\wnext \alwaysF (  \Lab{\mu} \to \Lab{\psi} \vee \previous \Lab{\mu} )\\
	\wnext \alwaysF(   \Lab{\psi} \to \Lab{\mu})\\
	\wnext\alwaysF ( \Lab{\varphi}\wedge \previous \Lab{\mu} \to \Lab{\mu}) \\
	\Lab{\mu}  \to \Lab{\psi}  \\
	\Lab{\psi} \to \Lab{\mu}  
\end{array}$\\\hline
$ \varphi \trigger \psi$ &$\begin{array}{l}\wnext \alwaysF (  \Lab{\mu} \leftrightarrow  \Lab{\psi} \wedge
	(\Lab{\varphi} \vee\previous \Lab{\mu})) \\ \Lab\mu \leftrightarrow \Lab\psi \end{array}$& 
$\begin{array}{l} 
	\wnext \alwaysF (  \Lab{\mu} \to \Lab{\psi} )\\
	\wnext \alwaysF (  \Lab{\mu} \to \Lab{\varphi} \vee\previous \Lab{\mu} )\\
	\wnext \alwaysF (  \Lab{\psi} \wedge \Lab{\varphi} \to \Lab{\mu}) \\
	\wnext \alwaysF (  \Lab{\psi} \wedge \previous \Lab{\mu} \to  \Lab{\mu})\\
	\Lab{\mu}   \to  \Lab{\psi}  \\
	\Lab{\psi}   \to \Lab{\mu}
\end{array}$\\\hline

$\varphi\until\psi$& $\begin{array}{l} \wnext\alwaysF( \previous \Lab{\mu} \leftrightarrow
	\previous \Lab{\psi} \vee (\previous \Lab{\varphi} \wedge \Lab{\mu}) ) \\
	\alwaysF (\finally \to ( \Lab{\mu} \leftrightarrow \Lab{\psi}) ) 
\end{array}$ & $\begin{array}{l}  
	\wnext \alwaysF( \previous \Lab{\mu} \to \previous \Lab{\psi} \vee \previous \Lab{\varphi}  )\\
	\wnext \alwaysF( \previous \Lab{\mu} \to \previous \Lab{\psi} \vee  \Lab{\mu} )\\
	\wnext \alwaysF ( \previous \Lab{\varphi} \wedge \Lab{\mu} \to \previous \Lab{\mu}  )\\
	\wnext \alwaysF( \previous \Lab{\psi}  \to  \previous \Lab{\mu}  )\\
	\alwaysF (\finally  \to ( \Lab{\mu}  \to  \Lab{\psi} ) )\\
	\alwaysF (\finally \to ( \Lab{\psi}  \to  \Lab{\mu} ) )
\end{array}$\\\hline
$\varphi\release\psi$ & $\begin{array}{l} 
	\wnext \alwaysF ( \previous \Lab{\mu} \leftrightarrow \previous \Lab{\psi} \wedge (\previous \Lab{\varphi} \vee \Lab{\mu}) )\\
	\alwaysF(\finally \to (\Lab{\mu} \leftrightarrow \Lab{\psi}))
\end{array}$& $\begin{array}{l} 
	\wnext \alwaysF ( \previous \Lab{\mu} \to \previous \Lab{\psi}  )\\
	\wnext \alwaysF ( \previous \Lab{\mu} \to \previous \Lab{\varphi} \vee \Lab{\mu})\\
	\wnext \alwaysF ( \previous \Lab{\psi} \wedge \Lab{\mu} \to \previous \Lab{\mu})\\
	\wnext\alwaysF ( \previous\Lab{\psi} \wedge  \previous\Lab{\varphi} \to  \previous\Lab{\mu} )\\
	\alwaysF (\finally \to ( \Lab{\mu} \to \Lab{\psi} ) )\\
	\alwaysF (\finally \to ( \Lab{\psi} \to \Lab{\mu} ) )\\						
\end{array}$\botline
\end{tabular}}	

\end{table}	
 
\clearpage{}

\end{document}